\providecommand{\U}[1]{\protect\rule{.1in}{.1in}}
\newtheorem{theorem}{Theorem}
\newenvironment{proof}[1][Proof]{\noindent\textbf{#1.} }{\ \rule{0.5em}{0.5em}}
\let\pdfoutput=\undefined\fi
\begin{document}

\title{\textbf{A Test for Kronecker Product Structure Covariance Matrix}%
\thanks{Guggenberger gratefully acknowledges the hospitality of the EUI in
Florence while parts of the paper were drafted. Mavroeidis gratefully
acknowledges the research support of the European Research Council via
Consolidator grant number 647152. We would like to thank the Editor Serena Ng, Associate Editor and two anonymous referees for helpful comments and Lewis McLean for
research assistance.}}
\author{$%
\begin{array}
[c]{c}%
\text{Patrik Guggenberger}\\
\text{Department of Economics}\\
\text{Pennsylvania State University}%
\end{array}
$
\and $%
\begin{array}
[c]{c}%
\text{Frank Kleibergen}\\
\text{Amsterdam School of Economics}\\
\text{University of Amsterdam}%
\end{array}
$
\and \vspace{0.25in}$%
\begin{array}
[c]{c}%
\text{Sophocles Mavroeidis}\\
\text{Department of Economics}\\
\text{University of Oxford}%
\end{array}
$}
\date{First Version: October, 2019\\
Revised: \today\vspace{0.25in}}
\maketitle

\begin{abstract}
We propose a test for a covariance matrix to have Kronecker Product Structure
(KPS). KPS implies a reduced rank restriction on a certain transformation of
the covariance matrix and the new procedure is an adaptation of the
\cite{kleibergen2006grr} reduced rank test. To derive the limiting
distribution of the Wald type test statistic proves challenging partly because
of the singularity of the covariance matrix estimator that appears in the
weighting matrix. We show that the test statistic has a $\chi^{2}$ limiting
null distribution with degrees of freedom equal to the number of restrictions
tested. Local asymptotic power results are derived. Monte Carlo simulations
reveal good size and power properties of the test. Re-examining fifteen highly
cited papers conducting instrumental variable regressions, we find that KPS is
not rejected in 56 out of 118 specifications at the 5\% nominal size. \medskip

Keywords: covariance matrix, heteroskedasticity, invariance, Kronecker product
structure, linear instrumental variables regression model, reduced rank, weak identification

JEL codes: C12, C26

\end{abstract}

\section{Introduction}

The robustness properties of nonparametric covariance matrix estimators, like
those proposed by \cite{White80} against heteroskedasticity and by, for
example, \cite{Newe87} and \cite{Andr91} against heteroskedasticity and
autocorrelation, have led to the current default of conducting
semi-parametric inference in econometrics. It is well understood that compared
to parametrically specified covariance matrix estimators, these robustness
properties come at the cost of a large number of additional estimated
components. The latter affects the precision of semi-parametric estimators of
the structural parameters compared to parametric ones.

For some structural models estimated by the generalized method of moments
(GMM), see \cite{Hans82l}, use of nonparametric covariance matrix estimators
may also lead to computational challenges for estimation of the structural
parameters when using the continuous updating estimator (CUE) of
\cite{Hans96l}. Prominent examples of such models are the linear instrumental
variables (IV) regression model and the linear factor model in asset pricing.
When using a nonparametric covariance matrix estimator, the CUE objective
function is often ill behaved, that is, it is flat and/or has many local
extrema, making the CUE difficult to compute. Part of the appeal of the CUE
stems from a number of weak-identification-robust tests based on statistics
centered around the CUE for hypotheses involving the structural parameters,
see e.g. \cite{Kleibergen2005}.

When one uses a Kronecker Product Structure (KPS) covariance matrix estimator
instead of a nonparametric one in the CUE\ objective function in linear IV and
factor asset pricing models, the CUE (which is then typically referred to as
the limited information maximum likelihood (LIML) estimator) is
straightforward to compute. Furthermore, weak-identification-robust tests
specified on a subvector of the structural parameter vector with uniformly
better power than projected robust full-vector tests are available, see e.g.
\cite{gkm19}, \cite{GKM21}, and \cite{kf21}. The KPS structure of the
covariance matrix also allows for an analytical computation of the confidence
sets of the structural parameters using the algorithm from
\cite{dufour2005projection}.\footnote{\cite{dufour2005projection} actually
assume homoskedasticity, which is a special case of KPS covariance, but their
algorithm can be modified to cover the more general case of KPS covariance.}

The above illustrates the trade-off between, on the one hand, the robustness
provided by a nonparametric covariance matrix estimator and, on the other
hand, the computational ease and accurate statistical inference provided by a
KPS covariance matrix estimator. To help empirical researchers decide when the
use of a KPS covariance matrix estimator is justified, we develop a test for
the null hypothesis that the covariance matrix $R=E\left(  \frac{1}{n}%
\sum_{i=1}^{n}f_{i}f_{i}^{\prime}\right)  $ has KPS, where $f_{i}=V_{i}\otimes
Z_{i}$ and $V_{i}\in\mathbb{R}^{p}$ and $Z_{i}\in\mathbb{R}^{k}$ are
uncorrelated random vectors. Here $V_{i}$ are unobserved error variables (for
which consistent estimators are available) and $Z_{i}$ are observed
regressors. This setup encompasses, for example, the linear IV and factor asset
pricing models.

The test is based on the insight that KPS implies that a certain invertible
transformation $\mathcal{R}\left( R\right)  $ of $R$ has rank one, see
\cite{vanloanpit93} and (\ref{eq: red rank cov}) below, and our procedure
adapts the \cite{kleibergen2006grr} reduced rank statistic to test for
KPS.\footnote{Another adaptation of the \cite{kleibergen2006grr} reduced-rank
statistic is by \cite{dfp07}, who develop a test for singularity of a
symmetric matrix.} More precisely, the new test statistic is given as a
quadratic form in $vec(\hat{\Lambda})$ with weighting matrix that depends on
$vec(\mathcal{R}(\hat{R})),$ where $\hat{R}$ is a sample analogue of $R$ and
$\hat{\Lambda}$ is an estimator for a certain matrix that is known to be rank
restricted under the null hypothesis (see (\ref{eq: lambda}%
)-(\ref{eq: hyp rank}) below). The adaptation of \cite{kleibergen2006grr} is
nontrivial partly because the covariance matrix of $vec(\mathcal{R}(\hat
{R})),$ that appears in the modified test statistic, is singular. As a
consequence, it is a priori not obvious whether the use of the Moore-Penrose
generalized inverse in the expression of the \cite{kleibergen2006grr} reduced
rank statistic still leads to a $\chi^{2}$ limiting distribution. To answer
that question, we first derive the limiting distribution of $\hat{\Lambda}$
and show the limit to be degenerate Normal. We next establish that the
probability limit of the Moore-Penrose inverse of the covariance matrix
involved in the \cite{kleibergen2006grr} rank statistic is such that it
offsets this degeneracy. As the final result, we conclude that the new KPS
test statistic has a $\chi^{2}$ limiting null distribution with degrees of
freedom equal to the number of tested restrictions. We also consider an
asymptotic setup where $p,$ $k,$ and $n$ jointly go to infinity and show that
the asymptotic null rejection probability of the test is controlled as long as
$(pk)^{16}=o(n^{3}).$ For power considerations, we establish that under
sequences of covariance matrices local to KPS the test statistic has a
limiting noncentral chi square distribution.

As an important property we show that the proposed test is invariant to
orthonormal transformations of the data. In contrast, we show that this is not
true for certain alternative tests for KPS that are based on an application of
the \cite{kleibergen2006grr} test statistic to a different transformation of
the sample covariance matrix estimator that does not lead to a singular
covariance matrix (and may therefore a priori seem the more natural choice).

We provide comprehensive Monte Carlo simulations that document good size and
power properties of the suggested test. Finally, we apply the new KPS test to
various specifications of linear IV models employed in fifteen highly cited
empirical studies recently published in top ranked economic journals. We find
that for the specifications with independent data and moderate numbers of
observations, KPS is not rejected in 24 out of 30 cases at the 5\%
significance level, while for smaller numbers of observations, it is rejected in
14 out of 28 cases. In specifications with clustered data, KPS is not rejected
in 7 out of 17 cases with moderate sample sizes, and 11 out of 35 cases with
smaller samples. Overall, KPS is not rejected in 56 out of the 118
specifications that we tested. The relatively high number of non-rejections
illustrates the potential importance of the KPS test for applied work.

In a companion paper, \cite{GKM21}, we show how the new KPS\ test can be used
as a key ingredient in a testing procedure with correct asymptotic size for a
null hypothesis that restricts the values of a subvector of the structural
parameter vector in the linear IV model with a general covariance matrix. The
first step of the algorithm uses the KPS test to test the null of a KPS of the
covariance matrix of the unrestricted reduced-form sample moment vector. In
the second step of the algorithm, the null hypothesis involving the structural
parameter is tested using the improved subvector Anderson-Rubin test from
\cite{gkm19} when the test in the first step does not reject and using the
size correct AR%
$\backslash$%
AR test procedure from \cite{and17} otherwise. The AR%
$\backslash$%
AR procedure from \cite{and17} is an asymptotically size correct inference
procedure for testing hypotheses on a subvector of the structural parameters
for general covariance matrices but is less powerful than the improved
subvector Anderson-Rubin test from \cite{gkm19} in the linear IV regression
model. However, the latter test is asymptotically size correct only when the
covariance matrix has KPS. \cite{GKM21} establish that the resulting two-step
procedure has correct asymptotic size and conduct Monte-Carlo experiments
which show that it leads to more powerful subvector inference than the AR%
$\backslash$%
AR test in \cite{and17}.

As in the linear IV\ regression model, a KPS structure of the covariance
matrix of the sample moment vector of the linear regression model encompassing
linear asset pricing models also leads to improvements in terms of the power
of identification robust tests on individual elements of the vector of risk
premia and computational ease of obtaining the estimator of the risk premia.
There is increasing awareness that risk premia of many risk factors are only
weakly identified, see e.g. \cite{kz99}, \cite{kf09}, and \cite{kz2020}.
Therefore, it is important to analyze them using inference methods that are
robust to weak identification. The current state of the art for conducting
weak-factor-robust inference on risk premia is to assume homoskedasticity.
Extending homoskedasticity to KPS or even further by extending the switching
test procedure from \cite{GKM21} would extend the scope of the
weak-factor-robust inference methods for analyzing the individual risk premia
in linear asset pricing models. The KPS test would be an integral part of such extensions.

KPS or separability, which is how other fields sometimes refer to KPS, of the
covariance matrix is also studied in the statistics and signal processing
literature. The distance to a covariance matrix with KPS\ is considered in
\cite{Genton07} and \cite{VH17}, while \cite{LuZim05} and \cite{mgg06} analyze
the likelihood ratio test of KPS of the covariance matrix of Normally
distributed data. They estimate the elements of the KPS covariance matrix
using a switching algorithm. Exploiting the reduced rank restriction imposed
on the reordered covariance matrix by KPS is also done in \cite{wjs08}. Their
results are, however, based on a complex Gaussian distribution for the data,
which leads to a degrees of freedom parameter of the $\chi^{2}$ limiting
distribution of their test that is different from the one derived here.

KPS is an example of dimension reduction of a covariance matrix. Other
examples of dimension reduction result from shrinking the covariance matrix to
a matrix with (much) fewer unrestricted elements to estimate, for example, a
scalar multiple of the identity matrix, see e.g. \cite{lw12}, or by shrinking
the population eigenvalues, see e.g. \cite{lw15} and \cite{lw18}.

The paper is organized as follows. In the second section, we introduce the new
test for a KPS covariance matrix and derive the asymptotic null distribution
of the test statistic, which we denote as KPST. The third section contains the
limiting distribution of the KPST statistic under local alternatives while the
fourth section conducts a simulation study to analyze the size and power of
the new KPS test. The fifth section summarizes the extensive analysis of
testing for a KPS\ reduced-form covariance matrix in a considerable number of
prominent articles. The final sixth section concludes. Proofs and detailed
empirical results are given in the Appendix.

We use the vec operator of the matrix $A$, $vec(A):=(a_{1}^{\prime}\ldots
a_{k}^{\prime})^{\prime}\in\mathbb{R}^{mk}$ for an $m\times k$ dimensional
matrix $A=(a_{1},\ldots,a_{k}).$ For a symmetric $m\times m$ dimensional
matrix $A,$ we also use the $m^{2}\times\frac{1}{2}m(m+1)$ dimensional,
so-called, duplication matrix $D_{m}$ which selects the $\frac{1}{2}m(m+1)$
unique elements of $A$ in the $\frac{1}{2}m(m+1)$ dimensional vector $vech(A)$
that vectorizes only the lower triangular part of $A$:
\[
vech(A)=(D_{m}^{\prime}D_{m})^{-1}D_{m}^{\prime}vec(A)\text{ \ \ \ and
\ \ }vec(A)=D_{m}vech(A).
\]

\section{A Test for Kronecker Product Structure Covariance Matrix}

We propose a test for a covariance matrix $R\in\mathbb{R}^{kp\times kp}$ to
have KPS, where%
\begin{equation}%
\begin{array}
[c]{c}%
R:=E\left(  \frac{1}{n}\sum_{i=1}^{n}f_{i}f_{i}^{\prime}\right)  ,
\end{array}
\label{eq: cov}%
\end{equation}
for mean zero, independently distributed random vectors $f_{i}\in
\mathbb{R}^{kp},$ $i=1,\ldots,n,$ which satisfy
\begin{equation}
f_{i}:=(V_{i}\otimes Z_{i}) \label{eq: fi}%
\end{equation}
with $V_{i}\in\mathbb{R}^{p}$ and $Z_{i}\in\mathbb{R}^{k}$ uncorrelated random
vectors.\footnote{The matrix $R$ can depend on the sample size $n$ but for
simplicity of notation we do not index $R$ by $n.$} The specification of
$f_{i}$ fits, for example, a setting where $V_{i}$ contains the errors of a
number of regression equations and $Z_{i}$ contains the regressors, so that
$R$ is then the covariance matrix of the sample covariance between these
errors and the regressors.

It follows that the covariance matrix has a block structure%
\begin{equation}
R:=\left(
\begin{array}
[c]{ccc}%
R_{11} & \cdots & R_{1p}\\
\vdots & \ddots & \vdots\\
R_{p1} & \cdots & R_{pp}%
\end{array}
\right)  , \label{eq: block}%
\end{equation}
where $R_{jl}\in\mathbb{R}^{k\times k},$ $j,l=1,\ldots,p$. Because
$R_{jl}=E\left(  \frac{1}{n}\sum_{i=1}^{n}V_{ij}V_{il}Z_{i}Z_{i}^{\prime
}\right)  \allowbreak=R_{lj}^{\prime},$ for $V_{i}=(V_{i1}\ldots
V_{ip})^{\prime},$ it follows that $R_{jl}$ is symmetric. We are interested in
testing if the covariance matrix $R$ has KPS:%
\begin{equation}
\text{H}_{0}:R=G_{1}\otimes G_{2} \label{eq: KPS hyp}%
\end{equation}
with $G_{1}\in\mathbb{R}^{p\times p}$ and $G_{2}\in\mathbb{R}^{k\times k}$
symmetric positive definite matrices, against the alternative hypothesis of
not having KPS. For normalization purposes, we set one diagonal element equal
to one (say the upper left element of $G_{1}$%
).\footnote{\label{foot: normalization}Normalizing $G_{1,11}$ to one is an
obvious normalization because $G_{1}$ is a positive definite matrix (because
$(G_{1}\otimes G_{2})$ is positive definite) so its diagonal elements are all
strictly larger than zero. The normalization does therefore not imply a
restriction.} When $p=1$ or $k=1$ the null is always true and from now on we
assume that $\min\{p,k\}\geq2.$ To measure the distance of the sample
covariance matrix estimator from a KPS\ covariance matrix, we use a convenient
(invertible) transformation proposed by \cite{vanloanpit93}.

\noindent For a matrix $A\in\mathbb{R}^{kp\times kp}$ with block structure as
in (\ref{eq: block}) define%
\begin{equation}
\mathcal{R}\left(  A\right)  :=%
\begin{pmatrix}
A_{1}\\
\vdots\\
A_{p}%
\end{pmatrix}
\in\mathbb{R}^{p^{2}\times k^{2}}\quad\text{with }A_{j}:=%
\begin{pmatrix}
vec\left(  A_{1j}\right)  ^{\prime}\\
\vdots\\
vec\left(  A_{pj}\right)  ^{\prime}%
\end{pmatrix}
\in\mathbb{R}^{p\times k^{2}}, \label{eq: trans cov}%
\end{equation}
for $j=1,...,p.$ One can easily show that%
\begin{equation}
\mathcal{R}\left(  G_{1}\otimes G_{2}\right)  =vec(G_{1})vec(G_{2})^{\prime}
\label{eq: red rank cov}%
\end{equation}
and by Theorem 2.1 in \cite{vanloanpit93}, we have
\[
\left\Vert R-G_{1}\otimes G_{2}\right\Vert _{F}=\allowbreak\left\Vert
\mathcal{R}(R)-\allowbreak vec(G_{1})vec(G_{2})^{\prime}\right\Vert _{F},
\]
with $\left\Vert .\right\Vert _{F}$ the Frobenius or trace norm of a matrix,
$\left\Vert A\right\Vert _{F}^{2}:=tr(A^{\prime}A)=vec(A)^{\prime}vec(A),$ for
any rectangular matrix $A$. Because $\mathcal{R}\left(  G_{1}\otimes
G_{2}\right)  $ is a matrix of rank one, when testing for a KPS, it is more
convenient to test for the rank of $\mathcal{R}\left(  R\right)  $ to be one
instead of directly testing for KPS\ of $R$.

Consider the covariance matrix estimator%
\begin{equation}%
\begin{array}
[c]{c}%
\hat{R}:=\frac{1}{n}%
{\textstyle\sum\limits_{i=1}^{n}}
\hat{f}_{i}\hat{f}_{i}^{\prime}\in\mathbb{R}^{kp\times kp}%
\end{array}
\label{eq: sample cov}%
\end{equation}
which uses sample values $\hat{f}_{i}:=\hat{V}_{i}\otimes Z_{i}$ of the random
vectors $f_{i}$ for some estimated residuals $\hat{V}_{i}.$ We assume that
$\hat{f}_{i}=f_{i}+o_{p}(1),$ uniformly over $i=1,\ldots,n,$ as $n\rightarrow
\infty.$ Define the distance from a KPS covariance matrix by the Frobenius
norm%
\begin{equation}
DS:=\min_{G_{1}>0,G_{2}>0, G_{1,11}=1}\left\Vert \mathcal{R}(\hat{R})-vec(G_{1}%
)vec(G_{2})^{\prime}\right\Vert _{F}, \label{eq: dis KPS}%
\end{equation}
where $G_{1},$ $G_{2}>0$ indicates that $G_{1}\in\mathbb{R}^{p\times p}$ and
$G_{2}\in\mathbb{R}^{k\times k}$ are positive definite symmetric matrices, and $G_{1,11}=1$ states that the upper left element of $G_1$ is normalized to 1.

We test for $\mathcal{R}(\hat{R})$ being a rank one matrix using the
\cite{kleibergen2006grr} rank statistic. To describe the
\cite{kleibergen2006grr} rank statistic consider first a singular value
decomposition (SVD) of $\mathcal{R}(\hat{R})$:%
\begin{equation}
\mathcal{R(}\hat{R})=\hat{L}\hat{\Sigma}\hat{N}^{\prime},
\label{eq: SVD general}%
\end{equation}
where $\hat{\Sigma}:=diag(\hat{\sigma}_{1},\ldots,\hat{\sigma}_{\min
(p^{2},k^{2})})$ denotes a $p^{2}\times k^{2}$ dimensional diagonal matrix
with the singular values $\hat{\sigma}_{j}$ ($j=1,...,\min(p^{2},k^{2})$) on
the main diagonal ordered non-increasingly, and with $\hat{L}\in
\mathbb{R}^{p^{2}\times p^{2}}$ and $\hat{N}\in\mathbb{R}^{k^{2}\times k^{2}}$
orthonormal matrices. Decompose%
\begin{equation}
\hat{L}:=%
\begin{pmatrix}
\hat{L}_{11} & \hat{L}_{12}\\
\hat{L}_{21} & \hat{L}_{22}%
\end{pmatrix}
=\left(  \hat{L}_{1}\text{ }\vdots\text{ }\hat{L}_{2}\right)  ,\text{ }%
\hat{\Sigma}:=%
\begin{pmatrix}
\hat{\sigma}_{1} & 0\\
0 & \hat{\Sigma}_{2}%
\end{pmatrix}
,\text{ }\hat{N}:=%
\begin{pmatrix}
\hat{N}_{11} & \hat{N}_{12}\\
\hat{N}_{21} & \hat{N}_{22}%
\end{pmatrix}
=(\hat{N}_{1}\text{ }\vdots\text{ }\hat{N}_{2}), \label{eq: svdpar}%
\end{equation}
with $\hat{L}_{11}:1\times1,$ $\hat{L}_{12}:1\times(p^{2}-1),$ $\hat{L}%
_{21}:(p^{2}-1)\times1,$ $\hat{L}_{22}:(p^{2}-1)\times(p^{2}-1),$ $\hat
{\sigma}_{1}:1\times1,$ $\hat{\Sigma}_{2}:(p^{2}-1)\times(k^{2}-1),$ $\hat
{N}_{11}:1\times1,$ $\hat{N}_{12}:1\times(k^{2}-1),$ $\hat{N}_{21}%
:(k^{2}-1)\times1,$ $\hat{N}_{22}:(k^{2}-1)\times(k^{2}-1)$ dimensional matrices.

\begin{theorem}
\label{th: frob norm}Suppose $\hat{R}$ is positive definite. The distance
measure DS in (\ref{eq: dis KPS}) equals the square root of the sum of squares
of all but the largest singular value of $\mathcal{R}(\hat{R})\in
\mathbb{R}^{p^{2}\times k^{2}},$ i.e. $DS^{2}=\sum_{i=2}^{\min(p^{2},k^{2}%
)}\hat{\sigma}_{i}^{2},$ where $\hat{\sigma}_{1}\geq...\geq\hat{\sigma}%
_{\min(p^{2},k^{2})}$ are the ordered singular values of $\mathcal{R}(\hat
{R}).$ Furthermore, if $\hat{\sigma}_{1}>\hat{\sigma}_{2},$ then the positive definite symmetric 
minimizers $\hat{G}_{1},\hat{G}_{2}$ of (\ref{eq: dis KPS}) have the following unique expression:%
\begin{equation}%
\begin{array}
[c]{ll}%
vec(\hat{G}_{1}):=\hat{L}_{1}/\hat{L}_{11} & \in\mathbb{R}^{p^{2}\times1},\\
vec(\hat{G}_{2})^{\prime}:=\hat{L}_{11}\hat{\sigma}_{1}\hat{N}_{1}^{\prime
}\quad & \in\mathbb{R}^{1\times k^{2}}.
\end{array}
\label{eq: g1g2 est}%
\end{equation}

\end{theorem}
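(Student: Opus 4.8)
The plan is to reduce the minimization in (\ref{eq: dis KPS}) to a classical low-rank matrix approximation problem and then invoke the Eckart–Young–Mirsky theorem. The key observation is that by (\ref{eq: red rank cov}), any product $vec(G_1)vec(G_2)^{\prime}$ is a rank-one matrix, and conversely every rank-one $p^2 \times k^2$ matrix can be written in this outer-product form. So I would first argue that minimizing $\lVert \mathcal{R}(\hat{R}) - vec(G_1)vec(G_2)^{\prime}\rVert_F$ over symmetric positive definite $G_1,G_2$ (with the normalization $G_{1,11}=1$) gives the same optimal value as minimizing $\lVert \mathcal{R}(\hat{R}) - M \rVert_F$ over all rank-one matrices $M$, \emph{provided} I can show the unconstrained rank-one minimizer is attainable by an admissible symmetric positive definite pair. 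Since $\hat{R}$ is positive definite, I expect the leading left and right singular vectors $\hat{L}_1,\hat{N}_1$ to correspond to matrices $\hat{G}_1,\hat{G}_2$ that are indeed symmetric and positive definite after the $G_{1,11}=1$ rescaling; establishing this admissibility is the crux.

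First I would recall Eckart–Young–Mirsky: the best rank-one approximation of $\mathcal{R}(\hat{R}) = \hat{L}\hat{\Sigma}\hat{N}^{\prime}$ in Frobenius norm is $\hat{\sigma}_1 \hat{L}_1 \hat{N}_1^{\prime}$, with residual $\lVert \mathcal{R}(\hat{R}) - \hat{\sigma}_1 \hat{L}_1 \hat{N}_1^{\prime}\rVert_F^2 = \sum_{i=2}^{\min(p^2,k^2)} \hat{\sigma}_i^2$, and this minimizer is unique when $\hat{\sigma}_1 > \hat{\sigma}_2$. This immediately gives the claimed formula $DS^2 = \sum_{i=2}^{\min(p^2,k^2)} \hat{\sigma}_i^2$ once the admissibility issue is settled. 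Matching the outer-product factorization against $vec(\hat{G}_1) vec(\hat{G}_2)^{\prime} = \hat{\sigma}_1 \hat{L}_1 \hat{N}_1^{\prime}$ and imposing $G_{1,11}=1$ forces the scaling split recorded in (\ref{eq: g1g2 est}): dividing $\hat{L}_1$ by its first entry $\hat{L}_{11}$ normalizes the top-left element of $\hat{G}_1$ to one, and the compensating factor $\hat{L}_{11}\hat{\sigma}_1$ is absorbed into $vec(\hat{G}_2)^{\prime}$ so that the product is unchanged.

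The main obstacle I anticipate is verifying that $\hat{G}_1$ and $\hat{G}_2$ so defined are symmetric and positive definite, so that the unconstrained rank-one optimum actually lies in the feasible set of (\ref{eq: dis KPS}); otherwise the constrained minimum could be strictly larger. To handle this I would exploit the structure of $\mathcal{R}$: since each block $\hat{R}_{jl}$ of $\hat{R}$ is symmetric (as noted after (\ref{eq: block})) and $\hat{R}$ is positive definite, I expect the leading singular vectors to inherit symmetry, i.e. the matrices obtained by reshaping $\hat{L}_1$ and $\hat{N}_1$ back into $p\times p$ and $k\times k$ arrays are symmetric, and positive (or negative) definite with a sign that can be fixed by the $G_{1,11}=1$ normalization. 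A clean way to see this is to note that $\hat{R} = \sum_i \hat{f}_i \hat{f}_i^{\prime}$ is positive definite and its best KPS approximation in the original norm corresponds, via the van Loan–Pitsianis isometry already quoted, to the best rank-one approximation of $\mathcal{R}(\hat{R})$; positive definiteness of the approximant should then follow because the leading singular triple captures a dominant positive-definite component. I would also need $\hat{L}_{11} \neq 0$ for the normalization to be well defined, which again I expect to follow from positive definiteness (the leading left singular vector cannot be orthogonal to the direction of the identity-like component). Finally, uniqueness of the minimizers under $\hat{\sigma}_1 > \hat{\sigma}_2$ follows from uniqueness (up to joint sign) of the leading singular vectors, with the sign pinned down by $G_{1,11}=1$.
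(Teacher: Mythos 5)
Your proposal has the right architecture and it correctly isolates the crux, but it does not close it, and the unclosed step is precisely where the paper has to invoke a nontrivial external result. The reduction you describe (Eckart--Young--Mirsky for the rank-one problem, the van Loan--Pitsianis isometry, and the rescaling that absorbs $\hat{L}_{11}\hat{\sigma}_{1}$ into $vec(\hat{G}_{2})$ to enforce $G_{1,11}=1$) matches the paper's proof, which uses Lemma 2 of \cite{GKM21} in place of Eckart--Young--Mirsky. What is missing is the admissibility step: the claim that the unconstrained rank-one optimum $\hat{\sigma}_{1}\hat{L}_{1}\hat{N}_{1}^{\prime}$ can be written as $vec(G_{1})vec(G_{2})^{\prime}$ with $G_{1},G_{2}$ symmetric \emph{positive definite}. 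Without this, your argument only yields $DS^{2}\geq\sum_{i\geq2}\hat{\sigma}_{i}^{2}$, since the constrained feasible set in (\ref{eq: dis KPS}) could a priori miss the unconstrained minimizer. Your justification for this step consists of expectations (``I expect the leading singular vectors to inherit symmetry,'' ``positive definiteness of the approximant should then follow because the leading singular triple captures a dominant positive-definite component''), not arguments; the second statement in particular is exactly the theorem that needs proving, and it is not a generic fact about spd matrices and leading singular triples---it depends on the special structure of the rearrangement $\mathcal{R}$.

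The paper fills this gap by citing Theorem 5.8 of \cite{vanloanpit93}, which states that for a symmetric positive definite $A$ the minimizers of $\min_{G_{1},G_{2}}\Vert A-G_{1}\otimes G_{2}\Vert_{F}$ can be chosen symmetric positive definite; combined with the isometry this forces the spd-constrained minimum to coincide with the rank-one minimum, and (under $\hat{\sigma}_{1}>\hat{\sigma}_{2}$) forces $vec(\hat{G}_{1})vec(\hat{G}_{2})^{\prime}=\hat{\sigma}_{1}\hat{L}_{1}\hat{N}_{1}^{\prime}$, after which the normalization argument you give goes through. Two further remarks. First, the symmetry half of your claim is in fact easy to prove directly: by (\ref{eq: R-spec}), $\mathcal{R}(\hat{R})=D_{p}\hat{R}^{\ast}D_{k}^{\prime}$, so its column and row spaces lie in the ranges of the duplication matrices $D_{p}$ and $D_{k}$, and hence $\hat{L}_{1}$ and $\hat{N}_{1}$ are vectorizations of symmetric matrices; it is the definiteness half that is genuinely hard and requires van Loan--Pitsianis. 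Second, your treatment of $\hat{L}_{11}\neq0$ is also only an expectation, but this one does follow cheaply once positive definiteness is available: the spd minimizer satisfies $vec(\hat{G}_{1})\propto\hat{L}_{1}$, and the first entry of $vec(\hat{G}_{1})$ is the diagonal entry $\hat{G}_{1,11}>0$, so $\hat{L}_{11}\neq0$. As written, therefore, the proposal is a correct plan with a genuine gap at its central step.
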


\begin{proof}
See the Appendix.\smallskip
\end{proof}

If $R$ is positive definite, and $\hat{R}\overset{p}{\rightarrow}R$, then
$\hat{R}$ will be positive definite with probability approaching one
(w.p.a.1). The choice of normalization in (\ref{eq: g1g2 est}) conforms with
the normalization of $G_{1},G_{2}$ in (\ref{eq: KPS hyp}) discussed in
Footnote \ref{foot: normalization} above.

\paragraph{The KPST statistic}

We use the distance between $\mathcal{R}(\hat{R})$ and a matrix of rank one to
test for a KPS of $R$. The test is based on the limiting distribution of the
unique elements of $\hat{R}$ or equivalently $\mathcal{R}(\hat{R}).$ These
elements result from using the $k^{2}\times\frac{1}{2}k(k+1)$ and $p^{2}%
\times\frac{1}{2}p(p+1)$ dimensional duplication matrices $D_{k}$ and
$D_{p}:$
\begin{equation}%
\begin{array}
[c]{rl}%
\mathcal{R}(\hat{R})= & \mathcal{R}\left(  \frac{1}{n}\sum_{i=1}^{n}(\hat
{V}_{i}\hat{V}_{i}^{\prime}\otimes Z_{i}Z_{i}^{\prime})\right) \\
= & \frac{1}{n}\sum_{i=1}^{n}vec(\hat{V}_{i}\hat{V}_{i}^{\prime}%
)vec(Z_{i}Z_{i}^{\prime})^{\prime}\\
= & D_{p}\hat{R}^{\ast}D_{k}^{\prime},
\end{array}
\label{eq: R-spec}%
\end{equation}
with
\begin{equation}%
\begin{array}
[c]{c}%
\hat{R}^{\ast}:=\frac{1}{n}\sum_{i=1}^{n}vech(\hat{V}_{i}\hat{V}_{i}^{\prime
})vech(Z_{i}Z_{i}^{\prime})^{\prime}.
\end{array}
\label{eq: R star}%
\end{equation}
The $\frac{1}{2}p(p+1)\times\frac{1}{2}k(k+1)$ dimensional matrix $\hat
{R}^{\ast}$ contains the unique elements of $\hat{R}$ and $\mathcal{R}(\hat
{R}).$ We assume $vec(\hat{R}^{\ast})$ satisfies a central limit theorem:
\begin{equation}%
\begin{array}
[c]{cc}%
\sqrt{n}(vec(\hat{R}^{\ast})-vec(R^{\ast})) & \underset{d}{\rightarrow}%
\psi=vec(\Psi),
\end{array}
\label{eq: CLT R_n}%
\end{equation}
with $\psi\sim N(0,V_{R^{\ast}}),$ $\Psi$ a $\frac{1}{2}p(p+1)\times\frac
{1}{2}k(k+1)$ dimensional normally distributed random matrix and
\begin{equation}%
\begin{array}
[c]{rl}%
R^{\ast}:= & E\left(  \frac{1}{n}\sum_{i=1}^{n}vech(V_{i}V_{i}^{\prime
})vech(Z_{i}Z_{i}^{\prime})^{\prime}\right)  ,\\
V_{R^{\ast}}:= & \lim_{n\rightarrow\infty}\left[  E\left(  \frac{1}{n}%
\sum_{i=1}^{n}\left(  vech(Z_{i}Z_{i}^{\prime})vech(Z_{i}Z_{i}^{\prime
})^{\prime}\otimes vech(V_{i}V_{i}^{\prime})vech(V_{i}V_{i}^{\prime})^{\prime
}\right)  \right)  \right. \\
& \left.  -E\left(  vec(\frac{1}{n}\sum_{i=1}^{n}vech(V_{i}V_{i}^{\prime
})vech(Z_{i}Z_{i}^{\prime})^{\prime})\right)  E\left(  vec(\frac{1}{n}%
\sum_{i=1}^{n}vech(V_{i}V_{i}^{\prime})vech(Z_{i}Z_{i}^{\prime})^{\prime
})\right)  ^{\prime}\right]  .
\end{array}
\label{eq: lim_mean_var1n}%
\end{equation}
In fact, we assume a slightly stronger result, namely, that $\hat{R}^{\ast
}=R^{\ast}+\frac{1}{\sqrt{n}}\Psi+o_{p}(n^{-\frac{1}{2}}),$ holds. A central
limit theorem (\ref{eq: CLT R_n}) for (possibly) non-identical distributed
independent random variables holds under mild conditions, e.g. under the
Liapounov's or Lindeberg's condition, see \cite{wh84}.

Define%
\begin{equation}%
\begin{array}
[c]{rll}%
\hat{\Lambda}:= & \left(  \hat{L}_{22}\hat{L}_{22}^{\prime}\right)
^{-1/2}\hat{L}_{22}\hat{\Sigma}_{2}\hat{N}_{22}^{\prime}\left(  \hat{N}%
_{22}\hat{N}_{22}^{\prime}\right)  ^{-1/2} & :(p^{2}-1)\times(k^{2}-1).
\end{array}
\label{eq: lambda}%
\end{equation}
It can be shown that $\hat{\Lambda}=vec(\hat{G}_{1})_{\perp}^{\prime
}\mathcal{R}(\hat{R})vec(\hat{G}_{2})_{\perp},$ where
\[%
\begin{array}
[c]{ll}%
vec(\hat{G}_{1})_{\perp}:=\hat{L}_{2}\hat{L}_{22}^{-1}(\hat{L}_{22}\hat{L}%
_{22}^{\prime})^{1/2} & :p^{2}\times(p^{2}-1),\\
vec(\hat{G}_{2})_{\perp}^{\prime}:=\left(  \hat{N}_{22}\hat{N}_{22}^{\prime
}\right)  ^{1/2}\hat{N}_{22}^{\prime-1}\hat{N}_{2}^{\prime} & :(k^{2}-1)\times
k^{2},
\end{array}
\]
see
\cite[page 102]{kleibergen2006grr}. We then have%
\begin{equation}
\mathcal{R}(\hat{R})=vec(\hat{G}_{1})vec(\hat{G}_{2})^{\prime}+vec(\hat{G}%
_{1})_{\perp}\hat{\Lambda}vec(\hat{G}_{2})_{\perp}^{\prime}.
\label{eq: rank speci}%
\end{equation}

Using $\mathcal{R(}R),$ our hypothesis of interest H$_{0}$ (\ref{eq: KPS hyp})
is transformed into%
\begin{equation}
\text{H}_{0}:\mathcal{R(}R)=vec(G_{1})vec(G_{2})^{\prime}\text{ or H}%
_{0}:vec(G_{1})_{\perp}^{\prime}\mathcal{R(}R)vec(G_{2})_{\perp}=0,
\label{eq: hyp rank}%
\end{equation}
where $vec(G_{1})_{\perp}$ and $vec(G_{2})_{\perp}$ are $p^{2}\times(p^{2}-1)$
and $k^{2}\times(k^{2}-1)$ dimensional matrices that contain the orthogonal
complements of $vec(G_{1})$ and $vec(G_{2}),$ $vec(G_{1})_{\perp}^{\prime
}vec(G_{1})\equiv0,$ $vec(G_{1})_{\perp}^{\prime}vec(G_{1})_{\perp}\equiv
I_{p^{2}-1},$ $vec(G_{2})_{\perp}^{\prime}vec(G_{2})\equiv0,$ $vec(G_{2}%
)_{\perp}^{\prime}vec(G_{2})_{\perp}\equiv I_{k^{2}-1}.$ The KPST test uses
the sample analog of the last component in (\ref{eq: hyp rank}) to test
H$_{0}.$ It further results from identifying $vec(G_{1})$ and $vec(G_{2})$
using the eigenvectors associated with the first singular value of
$\mathcal{R(}R)$.

The \cite{kleibergen2006grr} rank test statistic is a quadratic form of the
vectorization of $\hat{\Lambda}$ in (\ref{eq: lambda}). Its specification
directly extends to the new KPS test but because the covariance matrix of
$vec\left(  \mathcal{R}(\hat{R})\right)  $ is singular, the (degenerate)
asymptotic normal distribution of $vec(\hat{\Lambda})$ and the resulting
degrees of freedom parameter of the $\chi^{2}$ limiting distribution of the
\cite{kleibergen2006grr} rank test statistic are not obvious.

We define the statistic KPST\ for testing H$_{0}$ in (\ref{eq: KPS hyp}) as%
\begin{equation}
KPST:=n\times vec\left(  \hat{\Lambda}\right)  ^{\prime}\left(  \hat
{J}^{\prime}\hat{V}\hat{J}\right)  ^{-}vec\left(  \hat{\Lambda}\right)  ,
\label{eq: kpst}%
\end{equation}
where%
\begin{equation}
\hat{J}:=\left(  vec(\hat{G}_{2})_{\perp}\otimes vec(\hat{G}_{1})_{\perp
}\right)  ,\quad\hat{V}:=\widehat{\text{cov}}\left(  vec\left(  \mathcal{R}%
(\hat{R})\right)  \right)  \in\mathbb{R}^{p^{2}k^{2}\times p^{2}k^{2}},
\label{eq: Jhat Vhat}%
\end{equation}
and
\begin{equation}%
\begin{array}
[c]{rl}%
\widehat{\text{cov}}\left(  vec\left(  \mathcal{R}(\hat{R})\right)  \right)
= & \frac{1}{n}\sum_{i=1}^{n}\left(  vec(Z_{i}Z_{i}^{\prime})vec(Z_{i}%
Z_{i}^{\prime})^{\prime}\otimes vec(\hat{V}_{i}\hat{V}_{i}^{\prime}%
)vec(\hat{V}_{i}\hat{V}_{i}^{\prime})^{\prime}\right)  \\
& -vec\left(  \mathcal{R}(\hat{R})\right)  vec\left(  \mathcal{R}(\hat
{R})\right)  ^{\prime}\\
= & (D_{k}\otimes D_{p})\widehat{\text{cov}}\left(  vec\left(  \hat{R}^{\ast
}\right)  \right)  (D_{k}\otimes D_{p})^{\prime},\\
\widehat{\text{cov}}\left(  vec\left(  \hat{R}^{\ast}\right)  \right)  = &
\frac{1}{n}\sum_{i=1}^{n}\left(  vech(Z_{i}Z_{i}^{\prime})vech(Z_{i}%
Z_{i}^{\prime})^{\prime}\otimes vech(\hat{V}_{i}\hat{V}_{i}^{\prime}%
)vech(\hat{V}_{i}\hat{V}_{i}^{\prime})^{\prime}\right) \\ & -vec\left(  \hat
{R}^{\ast}\right)  vec\left(  \hat{R}^{\ast}\right)  ^{\prime}.
\end{array}
\label{eq: covhat}%
\end{equation}

In the Appendix it is shown that the KPST statistic in (\ref{eq: kpst}) can
be simplified as follows:%
\begin{equation}%
\begin{array}
[c]{rl}%
KPST= & n\times\left(  vec\left(  \hat{\Sigma}_{2}\right)  \right)  ^{\prime
}\left[  (\hat{N}_{2}\otimes\hat{L}_{2})^{\prime}\hat{V}\left(  \hat{N}%
_{2}\otimes\hat{L}_{2}\right)  \right]  ^{-}\left(  vec\left(  \hat{\Sigma
}_{2}\right)  \right)  .
\end{array}
\label{eq: simple KPST}%
\end{equation}
This provides an expression for KPST which is easier to compute. On the other
hand, it cannot be directly used to obtain the $\chi^{2}$ limiting
distribution because $\hat{\Sigma}_{2}$ does not have an asymptotic normal
distribution while $vec(\hat{\Lambda})$ does.

\paragraph{\textbf{The }KPST$^{\ast}$\textbf{ statistic}}

For comparison, we now introduce an alternative test statistic KPST$^{\ast}$
that fits more naturally into the \cite{kleibergen2006grr} framework. However,
unlike KPST, KPST$^{\ast}$ turns out not to be invariant to orthonormal
transformations of the data. Because $vec(G_{1})=D_{p}vech(G_{1}),$
$vec(G_{2})=D_{k}vech(G_{2}),$ the hypothesis of interest (\ref{eq: hyp rank})
can also be specified as:%
\begin{equation}
\text{H}_{0}:R^{\ast}=vech(G_{1})vech(G_{2})^{\prime}\text{ or H}%
_{0}:vech(G_{1})_{\perp}^{\prime}R^{\ast}vech(G_{2})_{\perp}=0,
\label{eq: hyp rankvech}%
\end{equation}
where $vech(G_{1})_{\perp}$ and $vech(G_{2})_{\perp}$ are $\frac{1}%
{2}p(p+1)\times(\frac{1}{2}p(p+1)-1)$ and $\frac{1}{2}k(k+1)\times(\frac{1}%
{2}k(k+1)-1)$ dimensional matrices that contain the orthogonal complements of
$vech(G_{1})$ and $vech(G_{2}),$ $vech(G_{1})_{\perp}^{\prime}vech(G_{1}%
)=0,$ $vech(G_{1})_{\perp}^{\prime}vech(G_{1})_{\perp}= I_{\frac
{1}{2}p(p+1)-1},$ $vech(G_{2})_{\perp}^{\prime}vech(G_{2})=0,$
$vec(G_{2})_{\perp}^{\prime}\allowbreak vec(G_{2})_{\perp}= I_{\frac
{1}{2}k(k+1)-1}.$ This specification of the hypothesis fits directly in the
setup of the \cite{kleibergen2006grr} rank test because the covariance matrix
of $\hat{R}^{\ast}$ is non-singular. Therefore, the corresponding
specification of $vec(\hat{\Lambda})$ converges to a Normally distributed
random vector. The specification of null hypothesis in (\ref{eq: hyp rankvech}%
) allows us to easily infer the number of restrictions tested, which equals
$\left(  \frac{1}{2}k(k+1)-1\right)  \left(  \frac{1}{2}p(p+1)-1\right)  $,
but the resulting rank statistic does not equal KPST in (\ref{eq: kpst}).
Specifically, define the SVD of $\hat{R}^{\ast}%
=\hat{L}^{\ast}\hat{\Sigma}^{\ast}\hat{N}^{\ast\prime}$, where%
\[
\hat{L}^{\ast}:=%
\begin{pmatrix}
\hat{L}_{11}^{\ast} & \hat{L}_{12}^{\ast}\\
\hat{L}_{21}^{\ast} & \hat{L}_{22}^{\ast}%
\end{pmatrix}
,\text{ }\hat{\Sigma}^{\ast}:=%
\begin{pmatrix}
\hat{\sigma}_{1}^{\ast} & 0\\
0 & \hat{\Sigma}_{2}^{\ast}%
\end{pmatrix}
,\text{ }\hat{N}^{\ast}:=%
\begin{pmatrix}
\hat{N}_{11}^{\ast} & \hat{N}_{12}^{\ast}\\
\hat{N}_{21}^{\ast} & \hat{N}_{22}^{\ast}%
\end{pmatrix}
,
\]
with $\hat{L}_{11}^{\ast}:1\times1,$ $\hat{L}_{12}^{\ast}:1\times\left(
\frac{1}{2}p(p+1)-1\right)  ,$ $\hat{L}_{21}^{\ast}:(\frac{1}{2}%
p(p+1)-1)\times1,$ $\hat{L}_{22}^{\ast}:(\frac{1}{2}p(p+1)-1)\times(\frac
{1}{2}p(p+1)-1),$ $\hat{\sigma}_{1}^{\ast}:1\times1,$ $\hat{\Sigma}_{2}^{\ast
}:(\frac{1}{2}p(p+1)-1)\times(\frac{1}{2}k(k+1)-1),$ $\hat{N}_{11}^{\ast
}:1\times1,$ $\hat{N}_{12}^{\ast}:1\times(\frac{1}{2}k(k+1)-1),$ $\hat{N}%
_{21}^{\ast}:(\frac{1}{2}k(k+1)^{2}-1)\times1,$ $\hat{N}_{22}^{\ast}:(\frac
{1}{2}k(k+1)-1)\times(\frac{1}{2}k(k+1)-1)$ dimensional matrices and $\hat
{L}_{2}^{\ast}=(\hat{L}_{12}^{\ast\prime}$ $\vdots$ $\hat{L}_{22}^{\ast\prime
})^{\prime},$ $\hat{N}_{2}^{\ast}=(\hat{N}_{12}^{\ast\prime}$ $\vdots$
$\hat{N}_{22}^{\ast\prime})^{\prime}$. The \cite{kleibergen2006grr} statistic
for testing (\ref{eq: hyp rankvech}) using $\hat{R}^{\ast}$ is
\begin{equation}%
\begin{array}
[c]{l}%
KPST^{\ast}:=n\times vec\left(  \hat{\Lambda}^{\ast}\right)  ^{\prime}\left(
\hat{J}^{\ast\prime}\hat{V}^{\ast}\hat{J}^{\ast}\right)  ^{-}vec\left(
\hat{\Lambda}^{\ast}\right)  ,\text{ where}\\
\hat{\Lambda}^{\ast}:=\left(  \hat{L}_{22}^{\ast}\hat{L}_{22}^{\ast\prime
}\right)  ^{-1/2}\hat{L}_{22}^{\ast}\hat{\Sigma}_{2}^{\ast}\hat{N}_{22}%
^{\ast\prime}\left(  \hat{N}_{22}^{\ast}\hat{N}_{22}^{\ast\prime}\right)
^{-1/2},\\
\hat{J}^{\ast}:=\left(  \left(  N_{22}^{\ast}N_{22}^{\ast\prime}\right)
^{1/2}N_{22}^{\ast\prime-1}\left[  N_{12}^{\ast\prime}\text{ }\vdots\text{
}N_{22}^{\ast\prime}\right]  \otimes\left(  L_{22}^{\ast}L_{22}^{\ast\prime
}\right)  ^{1/2}L_{22}^{\ast\prime-1}\left[  L_{12}^{\ast\prime}\text{ }%
\vdots\text{ }L_{22}^{\ast\prime}\right]  \right)  ,\\
\hat{V}^{\ast}:=\widehat{\text{cov}}\left(  vec\left(  \hat{R}^{\ast}\right)
\right)  \in\mathbb{R}^{\left(  \frac{1}{4}p(p+1)k(k+1)\right)  \times\left(
\frac{1}{4}p(p+1)k(k+1)\right)  },
\end{array}
\label{eq: kpst*}%
\end{equation}
see Corollary 1 in \cite{kleibergen2006grr}.

\paragraph{Asymptotic theory and invariance to orthonormal transformations}

The statistics KPST in (\ref{eq: kpst}) and KPST$^{\ast}$ in
(\ref{eq: kpst*}) are not identical, and, unlike the proposed KPST
statistic, tests of the KPS hypothesis based on the KPST$^{\ast}$ statistic
are not invariant to orthonormal transformations of the data, as stated in the
following Theorem.\smallskip

\begin{theorem}
\label{th: kps test}Assume $E\left(  \left\Vert f_{i}\right\Vert ^{8}\right)
<\kappa$ for some $\kappa<\infty,$ $\hat{f}_{i}=f_{i}+o_{p}(1),$ uniformly for
$i=1,\ldots,n,$ as $n\rightarrow\infty,$ and the central limit theorem in
(\ref{eq: CLT R_n}) holds in the slightly stronger version $\hat{R}^{\ast
}=R^{\ast}+\frac{1}{\sqrt{n}}\Psi+o_{p}(n^{-\frac{1}{2}})$. Then, under
H$_{0},$ for KPST and KPST$^{\ast}$ defined in (\ref{eq: kpst}) and
(\ref{eq: kpst*}), respectively, the following hold:\newline\textbf{a.}\[
KPST\underset{d}{\rightarrow}\chi_{df}^{2}
\]  as
$n\rightarrow\infty$ (for fixed $p$ and $k$) with degrees of freedom%
\begin{equation}%
\begin{array}
[c]{c}%
df:=\left(  \frac{1}{2}k(k+1)-1\right)  \left(  \frac{1}{2}p(p+1)-1\right)  .
\end{array}
\label{eq: a_spec}%
\end{equation}
\newline\textbf{b. }
\[KPST^{\ast}\underset{d}{\rightarrow}\chi_{df}^{2}\]  as
$n\rightarrow\infty$ (for fixed $p$ and $k$) with $df$ as given in
(\ref{eq: a_spec}). \newline\textbf{c. }The statistics KPST and KPST$^{\ast}$ are in general not
numerically identical. While KPST is invariant to orthonormal
transformations of the data in $\hat{V}_{i}$ and $Z_{i},$ KPST$^{\ast}$ is not
invariant to such transformations.\newline\textbf{d. }For sequences $p,$ $k,$
$n$ that satisfy%
\begin{equation}%
\begin{array}
[c]{c}%
\frac{(pk)^{16}}{n^{3}}\rightarrow0,
\end{array}
\label{eq: joint conv}%
\end{equation}
we have
\begin{equation}%
\begin{array}
[c]{c}%
\lim_{n,p,k\rightarrow\infty}\Pr\left[  KPST<\chi_{df,1-\alpha}^{2}\right]
\leq\alpha,
\end{array}
\label{eq: lim many pkn}%
\end{equation}
where $\chi_{df,1-\alpha}^{2}$ denotes the $1-\alpha$ quantile of a $\chi
_{df}^{2}$ distribution.
\end{theorem}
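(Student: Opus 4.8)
The plan is to treat the four parts in turn, with part~\textbf{a} carrying the main content. For part~\textbf{a}, I would first pin down the limiting distribution of $vec(\hat\Lambda)$ and then invoke a generalized quadratic-form lemma. Writing $J:=vec(G_2)_\perp\otimes vec(G_1)_\perp$ and $V:=(D_k\otimes D_p)V_{R^\ast}(D_k\otimes D_p)^\prime$, note $\hat J\overset{p}\to J$ and $\hat V\overset{p}\to V$. Since $\hat\Lambda=vec(\hat G_1)_\perp^\prime\mathcal R(\hat R)vec(\hat G_2)_\perp$ with $vec(\hat G_j)_\perp\overset{p}\to vec(G_j)_\perp$, I would expand around $\mathcal R(R)=vec(G_1)vec(G_2)^\prime$ under H$_0$. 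The key cancellation is that the perturbations of the estimated orthogonal complements enter only through $vec(\hat G_1)_\perp^\prime vec(G_1)=O_p(n^{-1/2})$ and $vec(G_2)^\prime vec(\hat G_2)_\perp=O_p(n^{-1/2})$, which are annihilated by the rank-one structure of $\mathcal R(R)$; hence all derivative terms drop and $\sqrt n\,vec(\hat\Lambda)=J^\prime(D_k\otimes D_p)\psi+o_p(1)\overset{d}\to N(0,J^\prime VJ)$, using (\ref{eq: CLT R_n}) and $vec(\mathcal R(\hat R))=(D_k\otimes D_p)vec(\hat R^\ast)$.

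The crux is then to show $\operatorname{rank}(J^\prime VJ)=df$, which is exactly the ``offsetting of the degeneracy'' of $V$. Because $V_{R^\ast}$ is nonsingular, $\operatorname{rank}(J^\prime VJ)=\operatorname{rank}\big((D_k\otimes D_p)^\prime J\big)=\operatorname{rank}(D_p^\prime vec(G_1)_\perp)\cdot\operatorname{rank}(D_k^\prime vec(G_2)_\perp)$, using $(D_k\otimes D_p)^\prime J=(D_k^\prime vec(G_2)_\perp)\otimes(D_p^\prime vec(G_1)_\perp)$. Now $\ker(D_p^\prime)$ is the space of vectorized skew-symmetric $p\times p$ matrices, of dimension $\tfrac12 p(p-1)$; since $G_1$ is symmetric (Theorem~\ref{th: frob norm}), $vec(G_1)$ is orthogonal to every skew-symmetric direction, so $\ker(D_p^\prime)\subset\operatorname{range}(vec(G_1)_\perp)$ and $\operatorname{rank}(D_p^\prime vec(G_1)_\perp)=(p^2-1)-\tfrac12 p(p-1)=\tfrac12 p(p+1)-1$. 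The analogous identity holds for $k$, so the product equals $df$ in (\ref{eq: a_spec}). The identical computation applied to the hatted quantities (with $\hat G_1,\hat G_2$ symmetric and $\widehat{\mathrm{cov}}(vec(\hat R^\ast))$ nonsingular w.p.a.1) shows $\operatorname{rank}(\hat J^\prime\hat V\hat J)=df$ \emph{exactly}, so the Moore--Penrose inverse is continuous at $J^\prime VJ$ along the sequence and $(\hat J^\prime\hat V\hat J)^-\overset{p}\to(J^\prime VJ)^-$. Part~\textbf{a} then follows from the standard fact that, for $Z\sim N(0,\Sigma)$, $Z^\prime\Sigma^- Z\sim\chi^2_{\operatorname{rank}(\Sigma)}$, combined with Slutsky and the continuous mapping theorem applied to (\ref{eq: kpst}).

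For part~\textbf{b}, the covariance of $\hat R^\ast$ is nonsingular, so the hypothesis (\ref{eq: hyp rankvech}) and the statistic KPST$^\ast$ fall directly within the \cite{kleibergen2006grr} reduced-rank framework (Corollary~1 there), and the number of restrictions for a rank-one test on the $\tfrac12 p(p+1)\times\tfrac12 k(k+1)$ matrix $R^\ast$ is $(\tfrac12 p(p+1)-1)(\tfrac12 k(k+1)-1)=df$; no degeneracy argument is needed. For part~\textbf{c}, I would use the transformation law: under $\hat V_i\mapsto A\hat V_i$, $Z_i\mapsto BZ_i$ with $A^\prime A=I_p$, $B^\prime B=I_k$, (\ref{eq: R-spec}) gives $\mathcal R(\hat R)\mapsto (A\otimes A)\mathcal R(\hat R)(B\otimes B)^\prime$ with $A\otimes A$ and $B\otimes B$ orthogonal. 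Hence the singular values are unchanged and $\hat L_2,\hat N_2,\hat V$ transform by the orthogonal factors $A\otimes A$, $B\otimes B$, $(B\otimes B)\otimes(A\otimes A)$; substituting into the simplified form (\ref{eq: simple KPST}) shows every factor of KPST is invariant. By contrast, in the vech coordinates of KPST$^\ast$ the map becomes $\hat R^\ast\mapsto M_p(A)\hat R^\ast M_k(B)^\prime$ with $M_p(A):=(D_p^\prime D_p)^{-1}D_p^\prime(A\otimes A)D_p$, which is \emph{not} orthogonal (since $D_pD_p^\prime\neq I$), so the singular values of $\hat R^\ast$ change; a small $p=k=2$ numerical example with a rotation $A$ suffices to exhibit KPST$^\ast$ taking two different values.

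Part~\textbf{d} is the main obstacle. As $p,k\to\infty$ we have $df\asymp\tfrac14(pk)^2\to\infty$, and the target (\ref{eq: lim many pkn}) should follow from the Gaussian approximation $(\chi^2_{df}-df)/\sqrt{2df}\to N(0,1)$ together with an analogous standardized limit for KPST and control of the approximation errors. I would (i) establish a central limit theorem for the infeasible quadratic form in $\sqrt n\,vec(\hat\Lambda)$ weighted by $(J^\prime VJ)^-$ as $df\to\infty$; (ii) bound the effect of replacing $V$ by $\hat V$ and $J$ by $\hat J$, i.e.\ bound $\|\hat V-V\|$ and $\|\hat J-J\|$ in spectral/Frobenius norm; and (iii) bound the remainder in the expansion of $\sqrt n\,vec(\hat\Lambda)$ uniformly in the growing dimension. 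Using only $E\|f_i\|^8<\kappa$, the sample fourth-moment object $\widehat{\mathrm{cov}}(vec(\hat R^\ast))$ in (\ref{eq: covhat}) concentrates at a rate that, after multiplying by the $O((pk)^2)$ dimension counts entering the quadratic form and requiring the induced bias to be $o(\sqrt{df})\asymp o(pk)$, yields precisely the condition $(pk)^{16}=o(n^3)$ in (\ref{eq: joint conv}); the exponents $16$ and $3$ trace back to combining eighth-moment Rosenthal/Markov bounds with the dimension bookkeeping. The hard part throughout is the uniform high-dimensional control of the estimated weighting matrix and, in particular, showing that the bias introduced by the Moore--Penrose inverse of $\hat J^\prime\hat V\hat J$ is negligible relative to $\sqrt{2df}$ under the stated rate.
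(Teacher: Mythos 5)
Your parts a--c are correct, and part a takes a genuinely different route from the paper at exactly the step the paper identifies as the main difficulty. The paper handles the degeneracy by explicit computation: it parametrizes the SVD of $\mathcal{R}(\hat R)$ as $\hat L=(D_p\hat A\ \vdots\ D_{p\perp})$, $\hat N=(D_k\hat B\ \vdots\ D_{k\perp})$, derives closed-form expressions for $vec(G_1)_\perp$ and $vec(G_2)_\perp$, computes the probability limit of $\hat J'\hat V\hat J$ as a congruence of $\mathrm{diag}(V_{vec(\bar\Lambda)},0)$ by a nonsingular (non-orthogonal) matrix, verifies the Moore--Penrose inverse of that limit by hand, and shows the quadratic form collapses to $vec(\bar\Lambda)'V_{vec(\bar\Lambda)}^{-1}vec(\bar\Lambda)$. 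You instead identify $\operatorname{rank}(J'VJ)=df$ abstractly --- $\ker(D_p')$ consists of the vectorized skew-symmetric matrices, which lie in $\operatorname{range}(vec(G_1)_\perp)$ precisely because $G_1$ is symmetric --- and then invoke rank stability of the pseudo-inverse (rank of $\hat J'\hat V\hat J$ equals $df$ exactly w.p.a.1, hence $(\hat J'\hat V\hat J)^{-}\overset{p}{\to}(J'VJ)^{-}$) together with the standard fact that $Z'\Sigma^{-}Z\sim\chi^2_{\operatorname{rank}(\Sigma)}$ for $Z\sim N(0,\Sigma)$. This is cleaner, and it makes explicit the continuity-of-pseudo-inverse issue that the paper leaves implicit when it substitutes population limits inside the $[\cdot]^{-}$; what the paper's longer computation buys is the explicit limit objects $\bar\Lambda$ and $V_{vec(\bar\Lambda)}$, which it reuses in the local-power proof of Theorem \ref{th: kps test power}. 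Your part b coincides with the essential part of the paper's argument (direct appeal to Corollary 1 of Kleibergen--Paap, since the covariance of $\hat R^{\ast}$ is nonsingular), and your part c identifies the same mechanism as the paper --- orthonormal rotations act on $\mathcal{R}(\hat R)$ through the orthogonal matrices $A\otimes A$, $B\otimes B$, while the induced map $(D_p'D_p)^{-1}D_p'(A\otimes A)D_p$ on vech coordinates is not orthogonal; settling non-invariance of KPST$^{\ast}$ by a numerical example is legitimate, since non-invariance is an existential claim.

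Part d, however, is a genuine gap, and you flag it yourself. What you provide is a plan --- (i) a CLT for the infeasible quadratic form as $df\to\infty$, (ii) perturbation bounds on $\hat V$ and $\hat J$, (iii) uniform control of the remainder in the expansion of $\sqrt{n}\,vec(\hat\Lambda)$ --- followed by the assertion that the bookkeeping ``yields precisely'' $(pk)^{16}=o(n^3)$. Nothing in the sketch allows a reader to verify that claim; the exponents are read off from the statement rather than derived, and it is not shown that your strategy would not instead require a stronger (or deliver a weaker) rate. The paper's proof, whatever its own level of rigor, does concrete term-by-term accounting: it decomposes $vec(\hat\Lambda)$ into five components with rates $O_p(n^{-1/2})$, $O_p(k^2/n)$, $O_p(p^2/n)$, $O_p((pk)^2/n)$ and $O_p((pk)^2/(n\sqrt n))$ (so that the leading term dominates when $(pk)^2/\sqrt n\to0$), performs the analogous decomposition for the covariance-matrix estimator (error terms of orders $O_p(n^{-1/2})$, $O_p((pk)^2/n)$, $O_p(k^4/n)$, $O_p(p^4/n)$, $O_p(k^4p^4/(n\sqrt n))$), and obtains (\ref{eq: joint conv}) by pairing the $O_p(n^{-1/2})$ error of the inverted weighting matrix with the $O_p((pk)^2/n)$ error of $\hat\Lambda$, accounting for the factor $n$ and the summation over the $p^2k^2$ components of the quadratic form. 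Your route is also strategically different --- normal approximation of $\chi^2_{df}$ with growing $df$ plus a high-dimensional quadratic-form CLT, versus the paper's argument that the fixed-dimension expansion survives the joint limit --- and carrying it out under only $E\|f_i\|^8<\kappa$, in particular establishing the quadratic-form CLT and showing the pseudo-inversion error is $o_p(\sqrt{df})$ uniformly in the growing dimension, is precisely the quantitative content of part d. As it stands, that part is unproven.
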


\begin{proof}
see the Appendix.\footnote{We thank an anonymous associate editor for pointing
at the vech operator and duplication matrix to simplify the proof and
exposition.}\smallskip
\end{proof}

We define the new KPST test as follows: it rejects H$_{0}$ in
(\ref{eq: KPS hyp}) at nominal size $\alpha$ if
\begin{equation}
KPST>\chi_{df,1-\alpha}^{2}. \label{eq: KPST test}%
\end{equation}
Based on Theorem \ref{th: kps test}a and d, the resulting test has limiting
null rejection probability bounded by $\alpha.$

Theorem \ref{th: kps test}c shows that the rank-one tests KPST and
KPST$^{\ast}$ that are based on $\mathcal{R(}\hat{R})$ and $\hat{R}^{\ast}$
respectively are not identical even though they are testing the same
underlying hypothesis that the covariance $R$ matrix has KPS. This difference
occurs because these are Wald tests, and Wald statistics are in general not
invariant to non-linear transformations.

Theorem \ref{th: kps test}d provides a sufficient condition for uniform
convergence of $\hat{\Lambda}$ and its covariance matrix estimator for
settings where $p,$ $k,$ and $n$ jointly go to infinity so the main results
for the limiting distribution of KPST remain unaltered. It is needed to assess
the validity of the asymptotic approximation for settings where $p$ and $k$
are relatively large compared to the number of observations $n.$

The conditions in Theorem \ref{th: kps test}d are weaker than those in
\cite{newey2009gmm}. \cite{newey2009gmm} prove the validity of the asymptotic
approximation of test statistics where the number of observations grows faster
than the cube of the number of moment restrictions. The number of moment
restrictions here is proportional to $(pk)^{2}$ so their rate would be
$(pk)^{6}/n\rightarrow0$ which is more restrictive than the rate in
(\ref{eq: joint conv}).

\paragraph{Invariance to nonsingular transformations}

Theorem \ref{th: kps test}c shows the KPST is invariant to orthonormal
transformations, but it is still not invariant to general nonsingular
transformations of the data. To ensure invariance to nonsingular
transformations, we need to normalize the data as in \cite{kleibergen2006grr}.
Specifically, the KPST statistic is computed using the moment vector%
\begin{equation}
\hat{f}_{i}=C_{1}^{\prime}\widehat{V}_{i}\otimes C_{2}^{\prime}Z_{i},
\label{eq: f par boot}%
\end{equation}
where $C_{1}$ and $C_{2}$ are the Choleski factors of the inverse of the
second moments of $\hat{V}_{i}$ and $Z_{i},$ i.e., $C_{1}C_{1}^{\prime
}=\left(  \frac{1}{n}\sum_{i=1}^{n}\widehat{V}_{i}\widehat{V}_{i}^{\prime
}\right)  ^{-1}$ and $C_{2}C_{2}^{\prime}=\left(  \frac{1}{n}\sum_{i=1}%
^{n}Z_{i}Z_{i}^{\prime}\right)  ^{-1}$. To see why this normalization yields
invariance, let $A$ be a nonsingular $k\times k$ matrix, and define the
transformed instruments $Z_{A,i}:=AZ_{i}.$ Let $C_{2A}$ denote the Choleski
factor of $\left(  \frac{1}{n}\sum_{i=1}^{n}AZ_{i}Z_{i}^{\prime}A^{\prime
}\right)  ^{-1}.$ The KPST statistic with the original instruments $Z_{i}$ is
computed using $C_{2}^{\prime}Z_{i}$ in the moment vector
(\ref{eq: f par boot}), while the KPST with the transformed instruments
$AZ_{i}$ uses $C_{2A}^{\prime}AZ_{i}$ in the same formula
(\ref{eq: f par boot}). Therefore, the transformation from $C_{2}^{\prime
}Z_{i}$ to $C_{2A}^{\prime}Z_{A,i}$ is given by $T_{A}:=\allowbreak
C_{2A}^{\prime}AC_{2}^{-1\prime},$ i.e., $C_{2A}^{\prime}Z_{A,i}%
\allowbreak=\allowbreak C_{2A}^{\prime}AZ_{i}\allowbreak=\allowbreak
T_{A}\left(  C_{2}^{\prime}Z_{i}\right)  .$ Now, observe that $T_{A}$ is an
orthonormal matrix, because $T_{A}^{\prime}T_{A}\allowbreak=\allowbreak
C_{2}^{-1}A^{\prime}C_{2A}\allowbreak C_{2A}^{\prime}AC_{2}^{-1\prime
}\allowbreak=\allowbreak C_{2}^{-1}A^{\prime}\allowbreak\left(  \frac{1}%
{n}\sum_{i=1}^{n}AZ_{i}Z_{i}^{\prime}A^{\prime}\right)  ^{-1}\allowbreak
AC_{2}^{-1\prime}\allowbreak=\allowbreak C_{2}^{-1}\allowbreak\left(  \frac
{1}{n}\sum_{i=1}^{n}Z_{i}Z_{i}^{\prime}\right)  ^{-1}\allowbreak
C_{2}^{-1\prime}\allowbreak=\allowbreak I_{k}.$ Hence, invariance follows from
Theorem \ref{th: kps test}c. The exact same argument can be made about
rotations of the reduced form errors $\hat{V}.$

\paragraph{Clustered data}

In case of clustered data, we assume there are $n$ clusters of $N_{i}$
observations each, so the total number of data points is $\sum_{i=1}^{n}%
N_{i}:$
\begin{equation}%
\begin{array}
[c]{c}%
f_{i}=\sum_{j=1}^{N_{i}}f_{ij},
\end{array}
\label{eq: f_i}%
\end{equation}
for mean zero $kp$ dimensional random vectors $f_{ij},$ $j=1,\ldots,N_{i},$
$i=1,\ldots,n.$ Observations $f_{ij}$ within cluster $i$ can be arbitrarily
dependent, i.e., $E\left(  f_{ij}f_{is}\right)  $ is unrestricted for all $j,$
$s=1,...,N_{i},$ while observations across clusters are independent. The
$kp\times kp$ dimensional (positive semi-definite) covariance matrix of the
sample moments then results as:%
\begin{equation}%
\begin{array}
[c]{c}%
R=\frac{1}{n}\sum_{i=1}^{n}E(f_{i}f_{i}^{\prime}).
\end{array}
\label{eq: cluster covariance}%
\end{equation}

\section{Limiting distribution of KPST under local alternatives}

To analyze the power of KPST under local alternatives, we construct the
limiting distribution of the KPST statistic under alternatives where the
covariance matrix of the moments $R\in\mathbb{R}^{kp\times kp}$ is local to
KPS:%
\begin{equation}%
\begin{array}
[c]{c}%
\text{H}_{1}:R=(G_{1}\otimes G_{2})+\frac{1}{\sqrt{n}}A_{0},
\end{array}
\label{eq: R_n local}%
\end{equation}
where $G_{1}\in\mathbb{R}^{p\times p}$ and $G_{2}\in\mathbb{R}^{k\times k}$
are symmetric positive definite matrices, and $A_{0} \in\mathbb{R}^{kp\times kp}$
is a fixed symmetric matrix. The
best-fitting KPS approximation of $R$ under H$_{1}$ w.r.t. Frobenius norm,
defined as $\bar{G}_{1,n}\otimes\bar{G}_{2,n}$, where $\bar{G}_{1,n},\bar
{G}_{2,n}$ solve $\min_{\bar{G}_{1}>0,\bar{G}_{2}>0}\left\Vert (G_{1}\otimes
G_{2})\allowbreak+\frac{1}{\sqrt{n}}A_{0}\allowbreak-\bar{G}_{1}\otimes\bar
{G}_{2}\right\Vert _{F}$,$\,$will differ from $G_{1}\otimes G_{2}$. That is,
$\bar{G}_{1,n}\neq G_{1}$ and $\bar{G}_{2,n}\neq G_{2},$ unless $A_{0}$ lies
in the span of the orthogonal complement of $G_{1}\otimes G_{2}$. However,
under the local alternatives (\ref{eq: R_n local}), $\bar{G}_{1,n}\rightarrow
G_{1}$ and $\bar{G}_{2,n}\rightarrow G_{2}$. This needs to be taken into
account when we characterize the asymptotic distribution of the KPST statistic
under the local alternatives in (\ref{eq: R_n local}).

The re-arranged matrix $\mathcal{R}(R)$ under H$_{1}$ is:%
\begin{equation}%
\begin{array}
[c]{rl}%
\mathcal{R}(R)= & vec(G_{1})vec(G_{2})^{\prime}+\frac{1}{\sqrt{n}}%
\mathcal{R}(A_{0})\\
= & vec(\bar{G}_{1,n})vec(\bar{G}_{2,n})^{\prime}+
vec(\bar{G}_{1,n})_{\perp}\Lambda_{n}vec(\bar{G}_{2,n})_{\perp}^{\prime},
\end{array}
\label{eq: rearranged A_n}%
\end{equation}
with 
\begin{equation}
\Lambda _{n}=vec(\bar{G}_{1,n})_{\perp }^{\prime }\mathcal{R}(R)vec(\bar{G}%
_{2,n})_{\perp }.  \label{eq: Lambda_n}
\end{equation}%
The decomposition in the last line of
(\ref{eq: rearranged A_n}) is identical to the one in (\ref{eq: rank speci}).

\begin{theorem}
\label{th: kps test power}Under local to KPS sequences of covariance
matrices as in (\ref{eq: R_n local}) and for mean zero, independently
distributed random vectors $f_{i}\in \mathbb{R}^{kp}$ with finite eighth
moments, 
\begin{equation*}
KPST\underset{d}{\rightarrow }\chi _{df}^{2}(\delta )
\end{equation*}%
as $n\rightarrow \infty $ (with $k,p$ fixed), where 
\begin{equation}
\begin{array}{rl}
\delta := & vec(a_{0})^{\prime }\left[ \left( vec(G_{2})_{\perp }^{\prime
}\otimes vec(G_{1})_{\perp }^{\prime }\right) (D_{k}\otimes D_{p})V_{R^{\ast
}}\right. \\ 
& \left. (D_{k}\otimes D_{p})^{\prime }\left( vec(G_{2})_{\perp }\otimes
vec(G_{1})_{\perp }\right) \right] ^{-}vec(a_{0}),%
\end{array}
\label{eq: explicit delta}
\end{equation}%
$V_{R^{\ast }}$ has been defined in (\ref{eq: lim_mean_var1n}), and 
\begin{equation*}
a_{0}:=vec(G_{1})_{\perp }^{\prime }\mathcal{R}(A_{0})vec(G_{2})_{\perp }\in 
\mathbb{R}^{\left( \frac{1}{2}k(k+1)-1\right) \times \left( \frac{1}{2}%
p(p+1)-1\right) }.
\end{equation*}%
\end{theorem}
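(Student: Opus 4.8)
The plan is to show that under the local alternatives \eqref{eq: R_n local} the statistic $KPST$ remains a quadratic form in $\sqrt n\,vec(\hat\Lambda)$ whose weighting matrix $(\hat J'\hat V\hat J)^{-}$ converges to exactly the same singular limit as under $\mathrm H_0$, while $\sqrt n\,vec(\hat\Lambda)$ now converges to a degenerate normal with a nonzero mean. Writing $J_0:=\bigl(vec(G_2)_\perp\otimes vec(G_1)_\perp\bigr)$ for the probability limit of $\hat J$ in \eqref{eq: Jhat Vhat}, the key intermediate claim I would establish is the asymptotically linear expansion
\[
\sqrt n\,vec(\hat\Lambda)=\sqrt n\,vec(\Lambda_n)+J_0'(D_k\otimes D_p)\,\psi+o_p(1),
\]
where $\psi$ is the CLT limit in \eqref{eq: CLT R_n} and $\Lambda_n$ is defined in \eqref{eq: Lambda_n}. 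The second term is \emph{verbatim} the zero-mean asymptotically normal term that drives the null limit in Theorem~\ref{th: kps test}a, while the first is a deterministic drift. Since the perturbation $n^{-1/2}A_0$ shifts the mean of $vec(\hat R^\ast)$ only by $O(n^{-1/2})$ and leaves its limiting variance $V_{R^\ast}$ in \eqref{eq: lim_mean_var1n} unchanged, the random component is asymptotically identical to the null case and hence has covariance $\Omega:=J_0'(D_k\otimes D_p)V_{R^\ast}(D_k\otimes D_p)'J_0$, the matrix appearing in brackets in \eqref{eq: explicit delta}.

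Next I would pin down the drift. Using \eqref{eq: Lambda_n} together with $\mathcal R(R)=vec(G_1)vec(G_2)'+n^{-1/2}\mathcal R(A_0)$ from \eqref{eq: rearranged A_n},
\[
\sqrt n\,\Lambda_n=\sqrt n\,vec(\bar G_{1,n})_\perp' vec(G_1)vec(G_2)' vec(\bar G_{2,n})_\perp+vec(\bar G_{1,n})_\perp' \mathcal R(A_0)vec(\bar G_{2,n})_\perp.
\]
Because the best-fitting factors satisfy $\bar G_{j,n}\to G_j$ at rate $n^{-1/2}$ (standard perturbation of the leading singular subspace of the rank-one-plus-$O(n^{-1/2})$ matrix $\mathcal R(R)$, using the characterization in Theorem~\ref{th: frob norm}), we have $vec(\bar G_{1,n})_\perp' vec(G_1)=O(n^{-1/2})$ and $vec(G_2)' vec(\bar G_{2,n})_\perp=O(n^{-1/2})$, so the first summand is $O(n^{-1/2})\to0$, whereas $vec(\bar G_{j,n})_\perp\to vec(G_j)_\perp$ makes the second summand converge to $a_0=vec(G_1)_\perp' \mathcal R(A_0)vec(G_2)_\perp$. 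Vectorizing (via $vec(ABC)=(C'\otimes A)vec(B)$) gives $\sqrt n\,vec(\Lambda_n)\to vec(a_0)=J_0'\,vec(\mathcal R(A_0))$, the mean shift.

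Combining the two displays yields $\sqrt n\,vec(\hat\Lambda)\xrightarrow{d}N\bigl(vec(a_0),\Omega\bigr)$ in the degenerate sense that the limit is supported on $vec(a_0)+\operatorname{range}(\Omega)$. To conclude I would re-use the weighting-matrix analysis from the proof of Theorem~\ref{th: kps test}a: there it is shown that $\hat J'\hat V\hat J\xrightarrow{p}\Omega$ and that the Moore--Penrose inverse offsets the singularity, so $\operatorname{rank}(\Omega)=df$ with $df$ as in \eqref{eq: a_spec}. The result then follows from the standard fact that for $W\sim N(\mu,\Omega)$ the quadratic form $W'\Omega^{-}W$ is distributed as $\chi^2_{\operatorname{rank}(\Omega)}(\mu'\Omega^{-}\mu)$, the out-of-range part of $\mu$ being annihilated by $\Omega^{-}$; hence by the continuous mapping theorem $KPST\xrightarrow{d}\chi^2_{df}(\delta)$ with $\delta=vec(a_0)'\Omega^{-}vec(a_0)$, which is precisely \eqref{eq: explicit delta}.

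The main obstacle is the first displayed expansion, i.e.\ establishing first-order insensitivity of $\hat\Lambda$ to estimation of the singular subspaces in the presence of the drift. Under $\mathrm H_0$ this holds because $\mathcal R(R)vec(G_2)_\perp=0$ and $vec(G_1)_\perp'\mathcal R(R)=0$, so perturbations of the estimated complements are multiplied by zero to leading order; under $\mathrm H_1$ these products are only $O(n^{-1/2})$ rather than exactly zero, and one must verify that after multiplication by $\sqrt n$ and by the $O(n^{-1/2})$ fluctuations of $\hat L_2,\hat N_2$ all such cross terms remain $o_p(1)$, and that the near-null directions of $(\hat J'\hat V\hat J)^{-}$ do not spuriously contribute. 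The cleanest route is to carry out the same SVD linearization as in Theorem~\ref{th: kps test}a, but with $\mathcal R(\hat R)$ expanded around the fixed rank-one signal $vec(G_1)vec(G_2)'$ as $n^{-1/2}\left[\mathcal R(A_0)+\text{fluctuation}\right]+o_p(n^{-1/2})$; since the linearization is linear in this perturbation, the deterministic and random parts pass through the same map, delivering the drift and the unchanged random limit simultaneously.
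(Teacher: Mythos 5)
Your proposal is correct and takes essentially the same route as the paper's proof: you establish the deterministic drift $\sqrt{n}\,vec(\Lambda_n)\rightarrow vec(a_0)$ via the $n^{-1/2}$-rate perturbation of the best-fitting factors $\bar{G}_{i,n}$ and their orthogonal complements, observe that the random fluctuation (and hence the covariance matrix $\Omega$) is identical to the one under H$_0$, and then recycle the weighting-matrix and Moore--Penrose analysis of Theorem \ref{th: kps test}a to conclude $KPST\underset{d}{\rightarrow}\chi_{df}^{2}(\delta)$. Your write-up is, if anything, slightly more explicit than the paper's about the $o_p(1)$ cross terms and the annihilation of any out-of-range component of the mean by the generalized inverse, steps the paper compresses into the remark that the limit ``can be derived along the same lines as under KPS.''
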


\begin{proof}
See the Appendix. \smallskip
\end{proof}

\section{Simulation study on size and power}

\paragraph{Size}

We evaluate the accuracy of the limiting distribution in Theorem
\ref{th: kps test} to approximate the finite sample distribution of the
KPST\ statistic. We do so in a small simulation experiment using the linear
regression model:%
\begin{equation}
Y_{i}=Z_{i}^{\prime}\Pi+V_{i},\qquad i=1,\ldots,n, \label{eq: regression}%
\end{equation}
where $Y_{i}$ is a $p$ dimensional vector of dependent variables, $Z_{i}$ is a
$k$ dimensional vector of explanatory (exogenous) variables and $V_{i}$ is a
$p$ dimensional vector of errors. We further set $\Pi$ to zero (which is
without loss of generality because KPST uses the residual vectors) and
generate the $Z_{i}$'s independently from $N(0,I_{k})$ distributions and
$V_{i}$ given $Z_{i}$ independently from a $N\left(  0,h\left(  Z_{i}\right)
I_{p}\right)  $ distribution. We consider two different specifications of
$h(Z_{i}).$ The first leads to homoskedasticity and has $h(Z_{i})=1$ while the
second leads to (scalar) heteroskedasticity and has $h\left(  Z_{i}\right)
=\left\Vert Z_{i}\right\Vert ^{2}/k.$ For each case, we compute null rejection
probabilities (NRPs) using the three conventional nominal significance levels
of 10\%, 5\% and 1\%. The NRPs are computed using 40,000 Monte Carlo
replications for the KPST\ test that uses chi-square critical values based on
the results from Theorem \ref{th: kps test}. Table \ref{Tab: NRP} reports the
NRPs when the sample size depends on the dimensions $p$ and $k,$ specifically
$n=\left(  kp\right)  ^{16/3}$, in accordance with Theorem \ref{th: kps test}.
We notice only a slight underrejection in some cases, but in the remaining
cases the NRPs are not significantly different from the test's nominal levels.
Table \ref{Tab: NRP 2} reports NRPs with a smaller sample size $n=\left(
pk\right)  ^{4}$. In this case, we find some modest deviations from the
nominal size but these are generally quite small.

To investigate NRPs in smaller samples, Figures \ref{fig: p_2} to
\ref{fig: p_4_5} show the NRPs as a function of the sample size $n$ for
smaller sample sizes than in Tables \ref{Tab: NRP}, \ref{Tab: NRP 2} for
different settings of $p$ and $k.$ Depending on the value of the latter, the
NRPs are close to the nominal level for values of $n$ much smaller than
$\left(  pk\right)  ^{4}.$ For larger values of $pk,$ we therefore do not
(like for the smaller values of $pk)$ show the rejection frequencies all the
way up to $n=(pk)^{\frac{16}{3}},$ i.e. the value indicated by Theorem
\ref{th: kps test}d, but just to $(pk)^{4},$ which is for $p=2,$ $k=7$ at the
bottom right hand side of Figure \ref{fig: p_2}, equal to approximately
40,000, and for $p=5,$ $k=4$ at the bottom right hand side of Figure
\ref{fig: p_3} equal to 160,000 (note that the horizontal axis is in
log-scale). In many cases, the NRPs are still much closer to their nominal
significance levels than indicated by this rate. For example, when $p=k=2$ and
testing at the 5\% significance level, the NRP is close to the nominal level
for sample size of around 100. More striking is that when $p=2$ and $k=5$ the
KPST test at 5\% nominal size has NRPs close to the nominal size for values of
$n$ around 200. Figures \ref{fig: p_2}-\ref{fig: p_4_5} also show that the
KPST test generally over-rejects for small $n$. Moreover, the over-rejection is increasing in the dimensions $k$ and $p$ and can be very substantial for very small $n$, see Figure \ref{fig: p_4_5}, as is the case for any Wald test when the number of restrictions is large relative to the sample size. Therefore, it is of interest to investigate the possibility of small-sample corrections, e.g., following the bootstrap approach of \cite{Chen2019}. From a practical perspective, this over-rejection means that rejection of
KPS with small sample sizes, which happens only a few times in the
applications reported in Section \ref{s: empirical}, could be due to a
significantly higher type 1 error probability than the nominal size of the test.%
\footnote{When KPST is used as a pre-test in a two-step procedure, such as the subvector Anderson-Rubin test of \cite{GKM21}, that involves choosing a second-step test that is robust to violation of KPS when the KPST rejects in the first step, over-rejection will only affect the power but not the overall size of the two-step procedure.}

\begin{table}[tbp] \centering
\begin{tabular}
[c]{rrrrr|rrr|rrr}\hline\hline
\multicolumn{5}{r}{\textit{Data Generating Process}:} &
\multicolumn{3}{|c|}{homoskedastic} & \multicolumn{3}{c}{scalar hetero}%
\\\hline
p & k & n & a & m & 10\% & 5\% & 1\% & 10\% & 5\% & 1\%\\\hline
2 & 2 & 1626 & 4 & 9 & 10.0 & 5.1 & 1.0 & 9.7 & 4.4 & 0.7\\
2 & 3 & 14130 & 10 & 18 & 10.0 & 5.0 & 0.8 & 9.3 & 4.2 & 0.7\\
2 & 4 & 65536 & 18 & 30 & 9.4 & 5.0 & 0.9 & 9.7 & 4.9 & 0.9\\
2 & 5 & 215444 & 28 & 45 & 9.8 & 4.7 & 0.9 & 9.8 & 5.1 & 1.0\\
3 & 2 & 14130 & 10 & 18 & 10.2 & 5.0 & 0.9 & 10.0 & 4.7 & 0.9\\\hline
3 & 3 & 122827 & 25 & 36 & 9.7 & 4.9 & 1.0 & 9.8 & 5.0 & 0.9\\\hline\hline
\end{tabular}
\caption{Rejection frequencies (in percentages) of KPST test at various
significance levels. $\chi^2_{df}$ critical values.
$[n=(pk)^{16/3}]$, $df$: number of restrictions given in eq.~(\ref{eq:
a_spec}), $m$: number of estimated parameters. Computed using 40,000 MC replications.}\label{Tab: NRP}%
\end{table}%
%

\begin{table}[tbp] \centering
\begin{tabular}
[c]{rrrrr|rrr|rrr}\hline\hline
\multicolumn{5}{r}{\textit{Data Generating Process}:} &
\multicolumn{3}{|c|}{homoskedastic} & \multicolumn{3}{c}{scalar hetero}%
\\\hline
p & k & n & a & m & 10\% & 5\% & 1\% & 10\% & 5\% & 1\%\\\hline
2 & 2 & 256 & 4 & 9 & 11.2 & 5.3 & 0.9 & 11.4 & 4.8 & 0.5\\
2 & 3 & 1296 & 10 & 18 & 10.2 & 4.9 & 0.9 & 9.3 & 4.0 & 0.5\\
2 & 4 & 4096 & 18 & 30 & 9.9 & 5.1 & 1.0 & 9.1 & 4.2 & 0.8\\
2 & 5 & 10000 & 28 & 45 & 9.7 & 4.6 & 0.8 & 8.8 & 4.0 & 0.6\\
2 & 6 & 20736 & 40 & 63 & 10.0 & 5.1 & 1.0 & 9.5 & 4.5 & 0.7\\
2 & 7 & 38416 & 54 & 84 & 9.8 & 4.8 & 0.9 & 9.5 & 4.5 & 0.8\\
3 & 2 & 1296 & 10 & 18 & 9.9 & 4.8 & 0.7 & 9.0 & 3.7 & 0.5\\
3 & 3 & 6561 & 25 & 36 & 9.8 & 5.0 & 0.9 & 9.6 & 4.4 & 0.7\\
3 & 4 & 20736 & 45 & 60 & 10.7 & 5.6 & 1.2 & 10.2 & 5.1 & 0.9\\
3 & 5 & 50625 & 70 & 90 & 10.4 & 5.2 & 1.0 & 10.2 & 5.0 & 0.7\\\hline
3 & 6 & 104976 & 100 & 126 & 10.2 & 5.0 & 1.1 & 10.1 & 5.0 & 1.0\\\hline
3 & 7 & 194481 & 135 & 168 & 10.2 & 5.0 & 1.0 & 10.0 & 5.0 & 1.0\\\hline\hline
\end{tabular}
\caption{Rejection frequencies (in percentages) of KPST test at various
significance levels. $\chi^2_{df}$ critical values.
$n=(pk)^4$, $df$: number of restrictions given in eq.~(\ref{eq:
a_spec}), $m$: number of estimated parameters. Computed using 40,000 MC replications.}\label{Tab: NRP 2}%
\end{table}%

\begin{figure}[ptb]%
\centering
\includegraphics[
height=3.6288in,
width=5.4293in
]%
{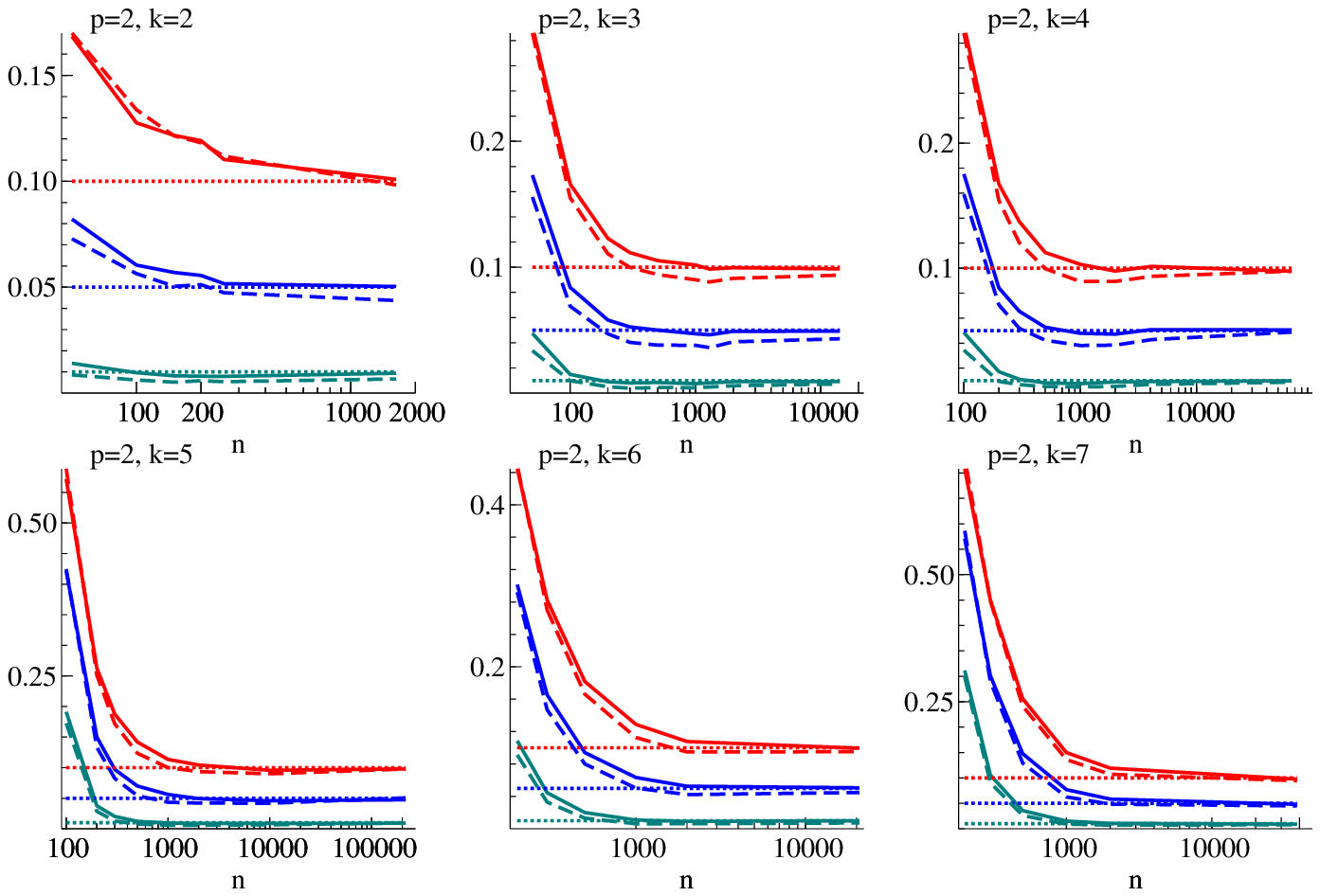}%
\caption{Null rejection probabilities of KPST test as a function of sample size $n$ at different significance levels: 10\% (red), 5\% (blue) and 1\% (green); and different data generating processes: homoskedastic (solid) and scalar heteroskedastic (dashed). Computed using 40,000 MC replications.}%
\label{fig: p_2}%
\end{figure}

\begin{figure}[ptb]%
\centering
\includegraphics[
height=3.6288in,
width=5.4293in
]%
{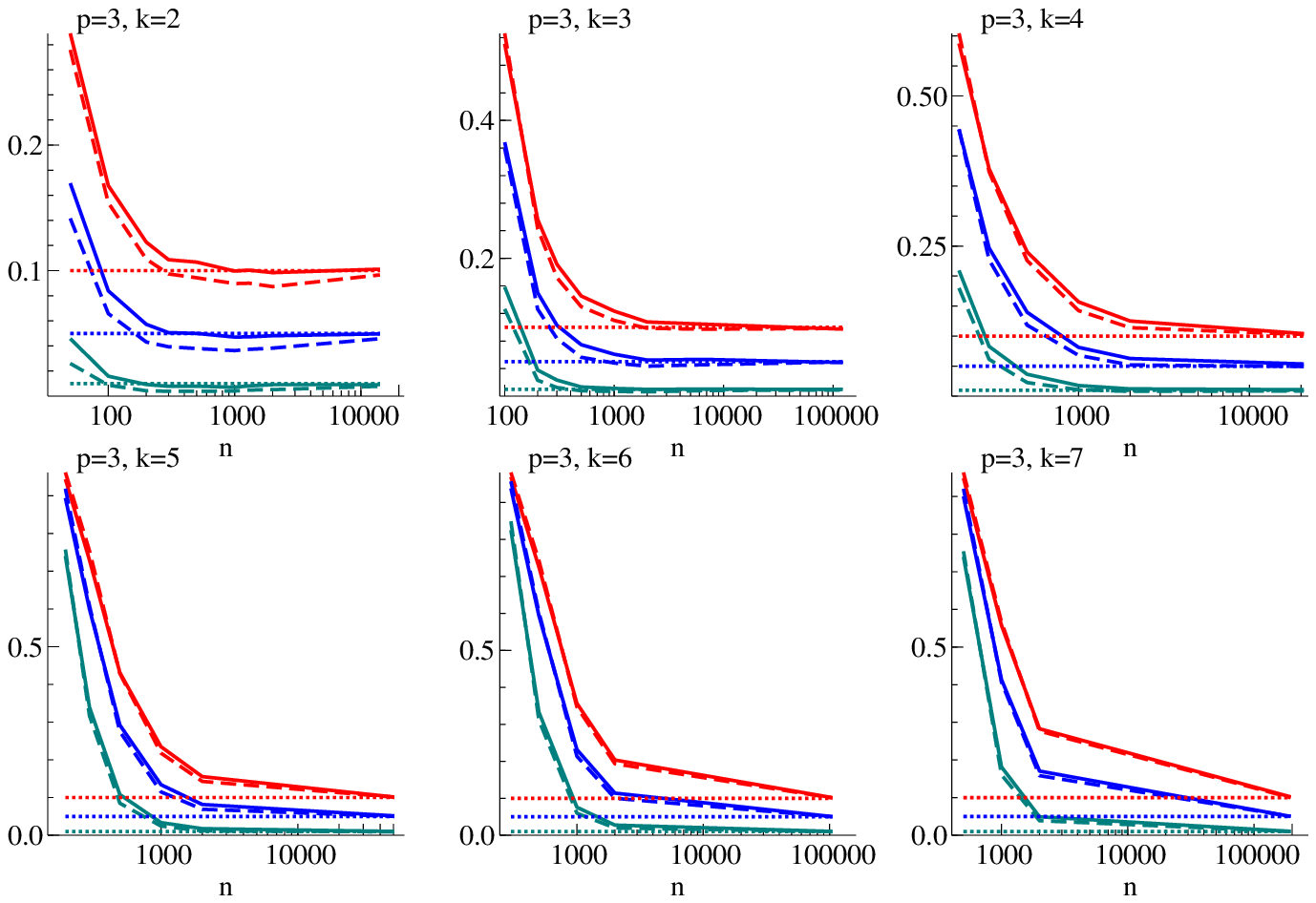}%
\caption{Null rejection probabilities of KPST test as a function of sample size $n$ at different significance levels: 10\% (red), 5\% (blue) and 1\% (green); and different data generating processes: homoskedastic (solid) and scalar heteroskedastic (dashed). Computed using 40,000 MC replications.}%
\label{fig: p_3}%
\end{figure}

\begin{figure}[ptb]%
\centering
\includegraphics[
height=3.224in,
width=4.83in
]%
{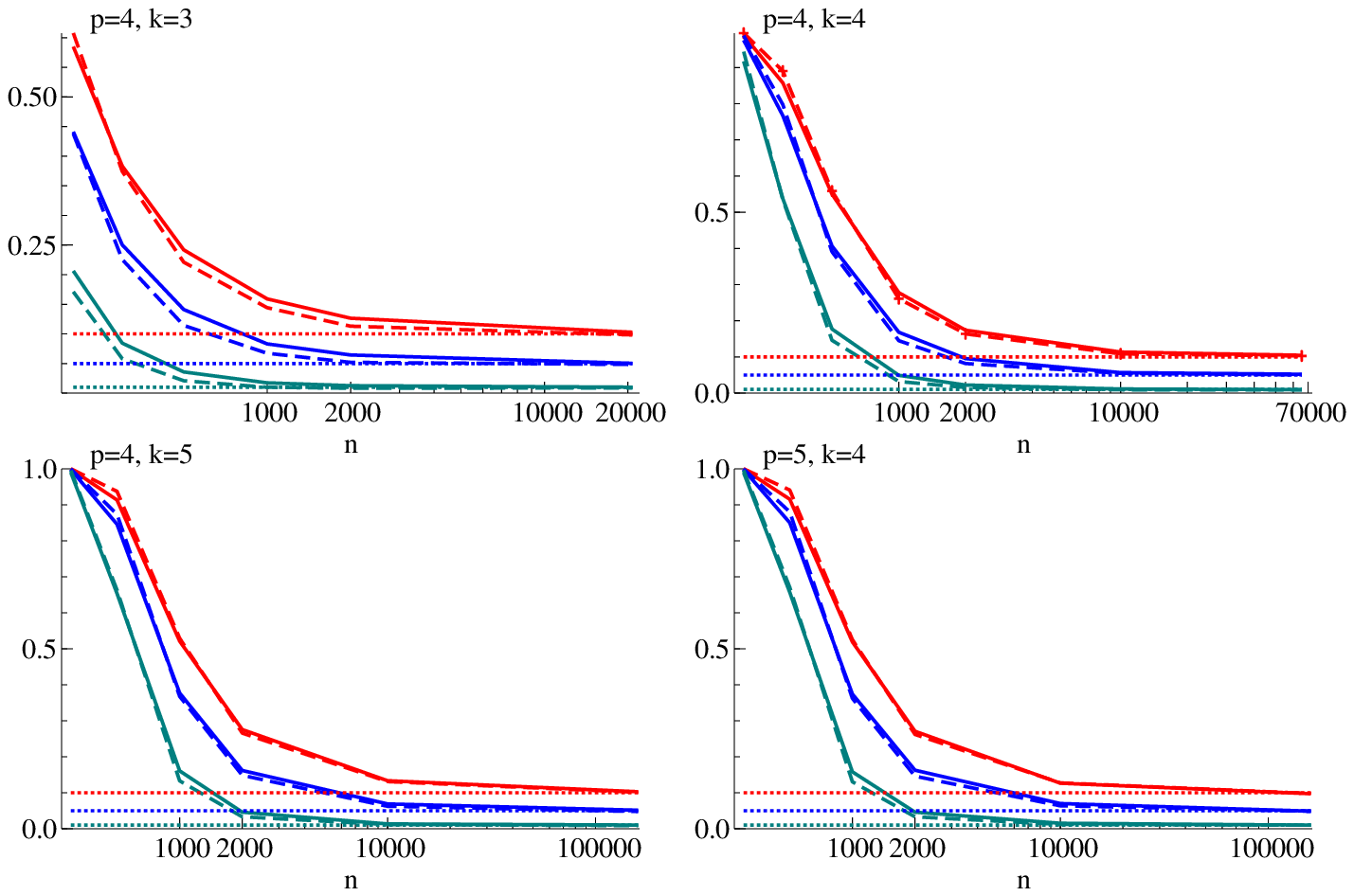}%
\caption{Null rejection probabilities of KPST test as a function of sample size $n$ at different significance levels: 10\% (red), 5\% (blue) and 1\% (green); and different data generating processes: homoskedastic (solid) and scalar heteroskedastic (dashed). Computed using 40,000 MC replications.}%
\label{fig: p_4_5}%
\end{figure}

\paragraph{Power}

We simulate the power of the KPST test using the asymptotic $\chi^{2}$
critical values stated in Theorem \ref{th: kps test}. The Data Generating
Process (DGP) is generated by a model with $p=k=2,$ where $Y_{i}=Z_{i}%
\Pi+V_{i}$ and $\Pi=0,$ see (\ref{eq: regression}). The two dimensional
vectors containing the regressors $Z_{i}$ and errors $V_{i}$ are simulated
according to:
\begin{equation}
V_{i}\sim iid\left\{
\begin{array}
[c]{l}%
N\left(  0,\Omega_{1}\right)  ,\\
N\left(  0,\Omega_{2}\right)  ,
\end{array}
\right.  \text{ }Z_{i}\sim iid\left\{
\begin{array}
[c]{l}%
N\left(  0,Q_{zz,1}\right)  ,\quad i=1,...,\left[  n/2\right] \\
N\left(  0,Q_{zz,2}\right)  ,\quad i=[n/2]+1,...,n,
\end{array}
\right.  \label{eq: inst_errors}%
\end{equation}
with $\Omega_{1}=diag\left(  b,1\right)  ,$ $\Omega_{2}=diag\left(
1,b\right)  ,$ $Q_{zz,1}=diag\left(  1,c\right)  ,$ $Q_{zz,2}=diag\left(
c,1\right)  ,$ and%
\begin{equation}
b:=\frac{1}{2}\frac{\sigma}{\sqrt{n}}-\frac{1}{2}\sqrt{\frac{\sigma}{\sqrt{n}%
}\left(  \frac{\sigma}{\sqrt{n}}+8\right)  }+1,\quad c:=\frac{1}{2}%
\frac{\sigma}{\sqrt{n}}+\frac{1}{2}\sqrt{\frac{\sigma}{\sqrt{n}}\left(
\frac{\sigma}{\sqrt{n}}+8\right)  }+1,
\end{equation}
for $\sigma\in\lbrack0,\sqrt{n}).$ The covariance matrix $R$ is then such
that:
\begin{equation}%
\begin{array}
[c]{rl}%
R= & \frac{1}{n}var\left(  \sum_{i=1}^{n}(V_{i}\otimes Z_{i})\right)
=\frac{1}{2}diag\left(  b+c,1+bc,1+bc,b+c\right) \\
= & \underbrace{I_{4}}_{G_{1}\otimes G_{2}}+\frac{\sigma}{\sqrt{n}}{\times
}diag\left(  1,-1,-1,1\right)  ,
\end{array}
\label{eq: R_N}%
\end{equation}
and $G_{1}=G_{2}=I_{2}$. Because
\begin{equation}
\mathcal{R(}diag\left(  1,-1,-1,1\right)  )=\left(
\begin{array}
[c]{cccc}%
1 & 0 & 0 & -1\\
0 & 0 & 0 & 0\\
0 & 0 & 0 & 0\\
-1 & 0 & 0 & 1
\end{array}
\right)  ,
\end{equation}
$vec(G_{1})^{\prime}\mathcal{R(}diag\left(  1,-1,-1,1\right)  )vec(G_{2})=0,$
the re-arranged specification of $R$ in (\ref{eq: rearranged A_n}) equals$:$%
\begin{equation}%
\begin{array}
[c]{rl}%
\mathcal{R}(R)= & vec(G_{1})vec(G_{2})^{\prime}+\frac{\sigma}{\sqrt{n}}\left(
\begin{array}
[c]{cccc}%
1 & 0 & 0 & -1\\
0 & 0 & 0 & 0\\
0 & 0 & 0 & 0\\
-1 & 0 & 0 & 1
\end{array}
\right) \\
= & vec(G_{1})vec(G_{2})^{\prime}+\frac{1}{\sqrt{n}}vec(G_{1})_{\perp}%
a_{0}vec(G_{2})_{\perp}^{\prime},
\end{array}
\end{equation}
where
\begin{equation}%
\begin{array}
[c]{c}%
vec(G_{1})_{\perp}=vec(G_{2})_{\perp}=\frac{1}{\sqrt{2}}\left(
\begin{array}
[c]{ccc}%
1 & 0 & 0\\
0 & \sqrt{2} & 0\\
0 & 0 & \sqrt{2}\\
-1 & 0 & 0
\end{array}
\right)  ,\text{ }a_{0}=\sigma\left(
\begin{array}
[c]{ccc}%
1 & 0 & 0\\
0 & 0 & 0\\
0 & 0 & 0
\end{array}
\right)  =\sigma e_{1}e_{1}^{\prime},\text{ }e_{1}=(1,0,0)^{\prime},
\end{array}
\end{equation}
is such that the local deviation from KPS lies in the orthogonal complement of
$vec(G_{1})$ and $vec(G_{2}).$ The non-centrality parameter of the non-central
$\chi^{2}$ limiting distribution follows from (\ref{eq: explicit delta}). Note
that
\begin{equation}%
\begin{array}
[c]{l}%
(e_{1}\otimes e_{1})^{\prime}\left[  (\left[  vec(G_{2})\right]  _{\perp
}^{\prime}\otimes\left[  vec(G_{1})\right]  _{\perp}^{\prime})\right. \\
\left.  \text{cov}\left(  vec\left(  \mathcal{R}(\hat{R})\right)  \right)
(\left[  vec(G_{2})\right]  _{\perp}\otimes\left[  vec(G_{1})\right]  _{\perp
})\right]  ^{-}(e_{1}\otimes e_{1})=\frac{1}{4},
\end{array}
\label{Eq: non-central1}%
\end{equation}
where $G_{i}=I_{2}$ for $i=1,2$. Note also that $vec\left(  a_{0}\right)
=2\sigma(e_{1}\otimes e_{1})$. Thus, the non-centrality parameter is%
\begin{equation}%
\begin{array}
[c]{c}%
\delta=\frac{1}{4}\sigma^{2}.
\end{array}
\label{Eq: non-central2}%
\end{equation}
For $\sigma=0,$ $R$ has KPS, so the null hypothesis in (\ref{eq: KPS hyp})
holds. For the limiting case of $\sigma=\sqrt{n}:$ $b=0,$ so $\Omega_{1}$ and
$\Omega_{2}$ are singular.

We compute the power function of the KPST test at three significance levels
10\%, 5\% and 1\% using 10,000 Monte Carlo replications. For comparison, we
also compute the power of the non-invariant KPST$^{\ast}$ test that rejects
H$_{0}$ if the statistic KPST$^{\ast}$ in (\ref{eq: kpst*}) exceeds the
corresponding $1-\alpha$ quantile of $\chi_{df}^{2}$ with degrees of freedom
$df$ given in Theorem \ref{th: kps test}b, which are the same critical values
as for the KPST test in (\ref{eq: KPST test}). The results are reported
graphically in Figure \ref{fig: power}. The left-hand-side graphs in Figure
\ref{fig: power} show that for a moderate sample of size $n=200$ both tests
have good and essentially identical power. Moreover, as the sample size
increases, the power function of both tests approaches the noncentral
$\chi^{2}$ asymptotic approximation in Theorem \ref{th: kps test power}
indicated in blue on the right-hand-side graphs of Figure \ref{fig: power} for
$n=100,000.$ Results for other sample sizes are qualitatively similar and are
omitted in the interest of brevity. In particular, KPST has nontrivial power
even for small samples.%

\begin{figure}[ptb]%
\centering
\includegraphics[
height=4.4222in,
width=6.5303in
]%
{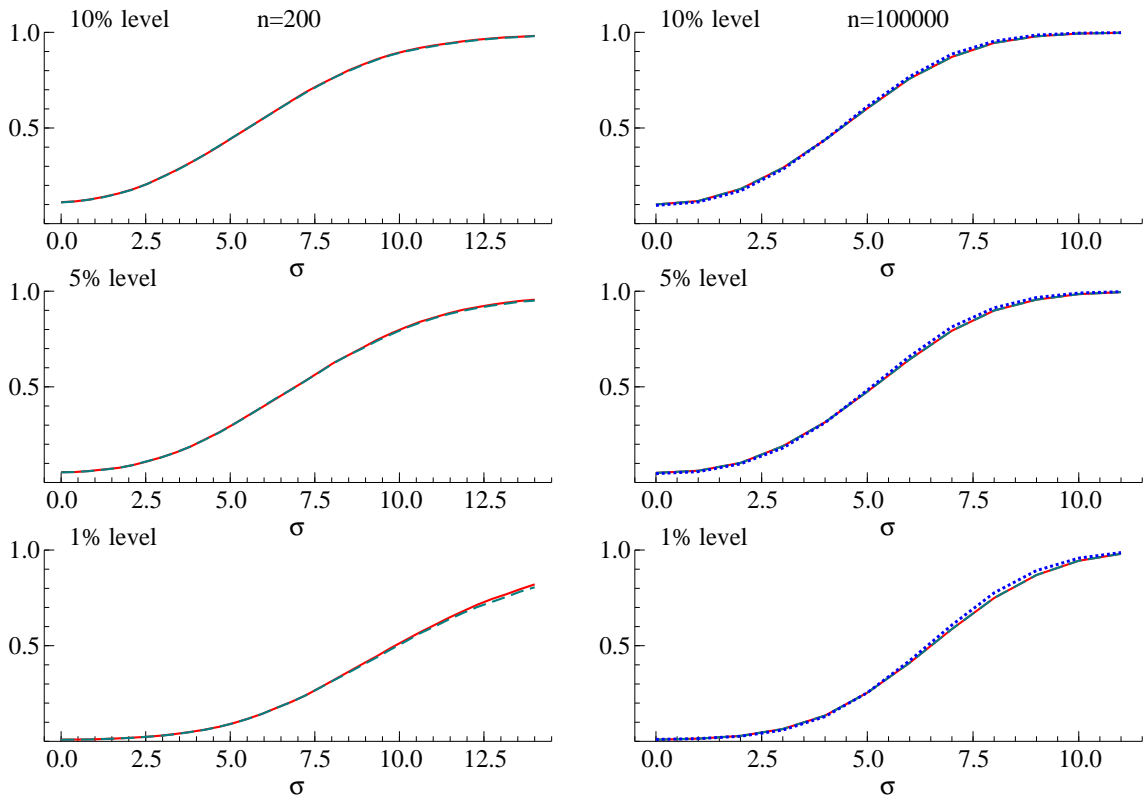}%
\caption{Power of KPST (solid red) and KPST$^{\ast}$ (dashed green) tests with
sample size $n=200$ (left) and $n=10^{5}$ (right). Asymptotic approximation
from Theorem \ref{th: kps test power} (dotted blue) superimposed on the right.
$\sigma$ measures deviation from KPS in Frobenius norm. Computed using 10,000
Monte Carlo replications.}%
\label{fig: power}%
\end{figure}

\section{Empirical applications\label{s: empirical}}

We investigate whether KPS covariance matrices are potentially relevant for
applied work. To do so, we apply the KPST\ test to the covariance matrices of
estimators in published empirical studies. We consider fifteen highly cited
papers conducting linear IV regressions from top journals in economics and
test for KPS of the joint covariance matrix of the (unrestricted reduced form)
least squares estimators which result from regressing all endogenous variables
on the instruments.\footnote{Both the endogenous variables and the instruments
are first regressed on the control, or included exogenous, variables and only
the residuals from these regressions are used.} Tables \ref{tab:KPS app} and
\ref{tab:clust} in the Supplementary Appendix report the results of the KPST
test for the 118 different specifications we analyzed. Table \ref{tab:KPS app}
does so for the studies using independent data (sixty specifications) while
Table \ref{tab:clust} lists the results for studies with clustered data (fifty
eight specifications). Because these tables are rather extensive, Tables
\ref{TableKey nocluster} and \ref{TableKey cluster} report a summary of our
findings on the KPST\ tests.

Table \ref{TableKey nocluster}, summarizing our results on KPS tests for the
papers using independent data, shows considerable support for KPS covariance
matrices especially when the number of observations is not too large. For the
60 different specifications using independent data reported in Table
\ref{TableKey nocluster}, KPS is rejected at the 5\% nominal size for only
about one third of them, namely for 22.

Table \ref{TableKey cluster}, summarizing the test results for papers using
clustered data, shows that for the 58 different specifications with clustered
data, KPS\ is rejected at the 5\% nominal size for 46 specifications when
using the unrestricted covariance matrix estimator (\ref{eq: sample cov}) and
for 40 when using the clust qered covariance matrix estimator
(\ref{eq: cluster covariance}). The number of observations in the involved
papers using clustered data is typically much larger than for the papers using
independent observations which largely explains our different findings for
independent compared to clustered observations.

Summarising, our analysis of the KPS of covariance matrices of moment
condition vectors in a considerable number of prominent empirical studies
shows that KPS is often not rejected especially for moderate sample sizes.

\nocite{ACJR2011}\nocite{AD2013}\nocite{ADG2013}\nocite{AGN2013}%
\nocite{AJ2005}\nocite{AJRY2008}

\nocite{DL2012}\nocite{DT2011}\nocite{HG2010}\nocite{JPS2006}\nocite{MSS2004}%
\nocite{Nunn2008}

\nocite{PSJM2013}\nocite{TCN2010}\nocite{Voors2012}\nocite{Yogo2004}\bigskip%

\begin{table}[tbp] \centering
\begin{tabular}
[c]{|l|c|c|c|}\hline
\textbf{Paper} & \#\textbf{ specifications} & \textbf{KPS rejection} &
\textbf{\# observations}\\\hline
\cite{TCN2010} & 2 & none & moderate\\
\cite{Nunn2008} & 4 & 4 & small\\
\cite{AJ2005} & 24 & 10 & small\\
\cite{HG2010} & 2 & 2 & huge\\
\cite{AGN2013} & 6 & 1 & moderate\\
\cite{Yogo2004} & 22 & 5 & moderate\\\hline
\end{tabular}
\caption{Summary of results of 5\% significance level KPST tests for specifications in papers using independent observations}
\label{TableKey nocluster}%
\end{table}%
%

\begin{table}[tbp] \centering
\begin{tabular}
[c]{|l|c|c|c|c|c|}\hline
&  &  &  & \textbf{clustered} & \\
\textbf{Paper} & \#\textbf{ specific.} & \textbf{KPS rej.} & \textbf{\# obs.}
& \textbf{KPS rej.} & \textbf{\# clusters}\\\hline
\cite{DT2011} & 8 & 6 & large & 5 & moderate\\
\cite{AJRY2008} & 9 & 7 & large & 5 & moderate\\
\cite{JPS2006} & 4 & 4 & huge & 4 & huge\\
\cite{PSJM2013} & 2 & 2 & huge & 2 & huge\\
\cite{ADG2013} & 18 & 18 & large & 13 & small\\
\cite{AD2013} & 7 & 7 & huge & 7 & small\\
\cite{ACJR2011} & 1 & 1 & small & 1 & very small\\
\cite{MSS2004} & 3 & 0 & large & 3 & small\\
\cite{Voors2012} & 6 & 1 & moderate & 0 & small\\\hline
\end{tabular}
\caption{Summary of results of 5\% signficance level KPST tests for specifications in papers using clustered observations}
\label{TableKey cluster}%
\end{table}%

\section{Conclusion}

We propose a test for the null of a covariance matrix of a vector of moment
equations to have a KPS. The test is an extension of the
\cite{kleibergen2006grr}\ rank test and is easy to use. We apply it to data
used in a considerable number of prominent applied studies conducting IV
regressions and find that KPS of the covariance matrix of the least squares
estimator of the unrestricted reduced form is often not rejected for moderate
sample sizes. In linear IV regression, a KPS covariance matrix brings
considerable advantages for both computation and inference in weakly
identified settings. Given the common occurrence of weak identification in
applications, our empirical findings underscore the contribution that the use
of KPS covariance matrices can make in applied work.

In a companion paper, \cite{GKM21}, we develop a two-step test procedure that
in the first step uses the new KPS covariance matrix test and, depending on
its outcome, in the second step conducts a weak-identification-robust test on
a subset of the structural parameters based either on an improved powerful
subvector AR test or based on the AR/AR test that is robust to arbitrary forms
of conditional heteroskedasticity. The two-step procedure is constructed such
that its asymptotic size is bounded by the nominal size. A promising area for
application of testing for KPS is in linear factor models for establishing
risk premia. The default setting in this area is to assume homoskedasticity
and weak identification is often present.

To further improve the approximation of the finite sample distribution of the
KPST statistic, it would also be of interest to investigate whether the
bootstrap can deliver refinements as \cite{Chen2019} show for rank tests on
general matrices. We leave this important extension for future work.

\newpage\appendix

\section*{\textbf{\noindent Appendix}}

\section{Proofs}

\textbf{\noindent Proof of Theorem \ref{th: frob norm}:} For a {given nonzero
matrix $A\in\mathbb{R}^{m\times n}$ with SVD 
$A=Udiag(\sigma_{1},...,\sigma_{p})V^{\prime}$ for singular values $\sigma
_{1}\geq\sigma_{2}\geq...\geq\sigma_{p}\geq0$ with $p=\min\{m,n\}$,
rectangular $diag(\sigma_{1},...,\sigma_{p})\in$}$\mathbb{R}^{m\times n}${,
orthogonal matrices $U=[u_{1},...,u_{m}]\in\mathbb{R}^{m\times m},$ and
$V=[v_{1},...,v_{n}]\in\mathbb{R}^{n\times n},$ }Lemma 2 in \cite{GKM21}
states that {a minimizing argument in }t{he minimization problem $\ $}%
\begin{equation}
{\min_{B\in\mathbb{R}^{m\times n},\text{ }rk(B)=1}||A-B||_{F}^{2}}
\label{lemma2}%
\end{equation}
{\ is given by }$\widehat{{B}}${$=\sigma_{1}u_{1}v_{1}^{\prime}$ and the
minimum equals $%
{\textstyle\sum\nolimits_{i=2}^{p}}
\sigma_{i}^{2}.$ Furthermore, it is shown that if $\sigma_{1}>\sigma_{2}$ then
}$\widehat{{B}}${$=\sigma_{1}u_{1}v_{1}^{\prime}$ is the unique minimizer.}

{Theorem 5.8 in \cite{vanloanpit93} states that if }$A\in\mathbb{R}^{pk\times
pk}$ is symmetric and positive definite then minimizers $\tilde{G}_{1}$ and
$\tilde{G}_{2}$ for the problem
\begin{equation}
\min_{G_{1}\in\mathbb{R}^{p\times p},G_{2}\in\mathbb{R}^{k\times k}}%
||A-G_{1}\otimes G_{2}||_{F}^{2} \label{theorem 5.8}%
\end{equation}
exist that are also symmetric and positive definite. Because $\hat{R}$ is
symmetric by construction and positive definite by assumption,
(\ref{theorem 5.8}) and $||\hat{R}-G_{1}\otimes G_{2}||_{F}=||\mathcal{R(}%
\hat{R}\mathcal{)}-vec(G_{1})vec(G_{2})^{\prime}||_{F}$ (which holds by\ by
Theorem 2.1 in {\cite{vanloanpit93})} imply that symmetric positive definite
matrices $\hat{G}_{1}$ and $\hat{G}_{2}$ exist that minimize $||\mathcal{R(}%
\hat{R}\mathcal{)}-vec(G_{1})vec(G_{2})^{\prime}||_{F}^{2}$ over $G_{1}%
\in\mathbb{R}^{p\times p},G_{2}\in\mathbb{R}^{k\times k}.$ Therefore, because
the rank of $vec(\hat{G}_{1})vec(\hat{G}_{2})^{\prime}$ is one, $\widehat{B}%
:=vec(\hat{G}_{1})vec(\hat{G}_{2})^{\prime}$ is a minimizer in the problem
${\min_{B\in\mathbb{R}^{p^{2}\times k^{2}},\text{ }rk(B)=1}||\mathcal{R(}%
\hat{R}\mathcal{)}-B||_{F}^{2}}$. But by (\ref{lemma2}) with $A$ playing the
role of $\mathcal{R(}\hat{R}\mathcal{)}$ we know that the minimum equals {$%
{\textstyle\sum\nolimits_{i=2}^{\min\{p^{2},k^{2}\}}}
\hat{\sigma}_{i}^{2},$ where }${\hat{\sigma}}${$_{1}\geq\hat{\sigma}_{2}%
\geq...\geq\hat{\sigma}_{\min\{p^{2},k^{2}\}}\geq0$ denote the singular values
of }$\mathcal{R(}\hat{R}\mathcal{)}\in\mathbb{R}^{p^{2}\times k^{2}}.$ This
establishes the claimed formula for $DS^{2}$.

Next, if $\hat{\sigma}_{1}>\hat{\sigma}_{2}$, then, {by the uniqueness part of
}Lemma 2 in \cite{GKM21},{ the minimizing }$\hat{G}_{1}$ and $\hat{G}_{2}$ in
{(\ref{theorem 5.8}) satisfy }$vec(\hat{G}_{1})vec(\hat{G}_{2})^{\prime}%
={\hat{\sigma}_{1}}\hat{L}_{1}\hat{N}_{1}^{\prime}.$ There are many different
ways we can define the argmins such that $vec\left(  \hat{G}_{1}\right)  $ and
$vec\left(  \hat{G}_{2}\right)  $ are proportional to $\hat{L}_{1}$ and
$\hat{N}_{1},$ respectively. Because sign-definiteness and symmetry are not
affected by normalization by a positive constant, any normalization will
produce estimates $\hat{G}_{1},\hat{G}_{2}$ that are symmetric and positive
definite. The normalization we use is $vec(\hat{G}_{1})=\hat{L}_{1}/\hat{L}%
_{11},$ so that $vec(\hat{G}_{2})=\hat{L}_{11}\hat{\sigma}_{1}\hat{N}_{1}$,
i.e., the upper left element of $\hat{G}_{1}$ is normalized to 1. We know that
$\hat{L}_{11}\neq0$ for otherwise, one diagonal block of the covariance matrix
$\hat{R}$ would be zero, which would contradict the assumption that $\hat{R}$
is positive definite. This establishes (\ref{eq: g1g2 est}). \bigskip

\noindent\textbf{Proof of Equation (\ref{eq: simple KPST}). }Because $\hat
{L}_{22}^{-1}(\hat{L}_{22}\hat{L}_{22}^{\prime})^{1/2}$ and $\hat{N}_{22}%
^{-1}\left(  \hat{N}_{22}\hat{N}_{22}^{\prime}\right)  ^{1/2}$ are invertible
which follows from expressions stated in the proof of Theorem 2a, KPST can be
rewritten as:%
\[%
\begin{array}
[c]{rl}%
KPST= & n\times\left[  vec(vec(\hat{G}_{1})_{\perp}^{\prime}\mathcal{R}%
(\hat{R})vec(\hat{G}_{2})_{\perp})\right]  ^{\prime}\\
& \left[  \left(  vec(\hat{G}_{2})_{\perp}^{\prime}\otimes vec(\hat{G}%
_{1})_{\perp}^{\prime}\right)  \hat{V}\left(  vec(\hat{G}_{2})_{\perp}\otimes
vec(\hat{G}_{1})_{\perp}\right)  \right]  ^{-}\\
& \left[  vec(vec(\hat{G}_{1})_{\perp}^{\prime}\mathcal{R}(\hat{R})vec(\hat
{G}_{2})_{\perp})\right] \\
= & \left(  vec\left(  \hat{L}_{2}^{\prime}\hat{L}\hat{\Sigma}\hat{N}^{\prime
}\hat{N}_{2}\right)  \right)  ^{\prime}\left(  \hat{N}_{22}^{-1}\left(  \hat
{N}_{22}\hat{N}_{22}^{\prime}\right)  ^{1/2}\otimes\hat{L}_{22}^{-1}(\hat
{L}_{22}\hat{L}_{22}^{\prime})^{1/2}\right) \\
& \left[  \left(  \left(  \hat{N}_{22}\hat{N}_{22}^{\prime}\right)  ^{1/2}%
\hat{N}_{22}^{\prime-1}\otimes(\hat{L}_{22}\hat{L}_{22}^{\prime})^{1/2}\bar
{L}_{22}^{-1\prime}\right)  \right.  \left(  \hat{N}_{2}\otimes\hat{L}%
_{2}\right)  ^{\prime}(D_{k}\otimes D_{p})\hat{V}_{\hat{R}^{\ast}}\\
& \left.  (D_{k}\otimes D_{p})^{\prime}\left(  \hat{N}_{2}\otimes\hat{L}%
_{2}\right)  \left(  \hat{N}_{22}^{-1}\left(  \hat{N}_{22}\hat{N}_{22}^{\prime
}\right)  ^{1/2}\otimes\hat{L}_{22}^{-1}(\hat{L}_{22}\hat{L}_{22}^{\prime
})^{1/2}\right)  \right]  ^{-}\\
& \left(  \left(  \hat{N}_{22}\hat{N}_{22}^{\prime}\right)  ^{1/2}\hat{N}%
_{22}^{-1\prime}\otimes(\hat{L}_{22}\hat{L}_{22}^{\prime})^{1/2}\hat{L}%
_{22}^{-1\prime}\right)  \left(  vec\left(  \hat{L}_{2}^{\prime}\hat{L}%
\hat{\Sigma}\hat{N}^{\prime}\hat{N}_{2}\right)  \right) \\
= & n\times\left(  vec\left(  \hat{\Sigma}_{2}\right)  \right)  ^{\prime
}\left[  (\hat{N}_{2}\otimes\hat{L}_{2})^{\prime}\hat{V}\left(  \hat{N}%
_{2}\otimes\hat{L}_{2}\right)  \right]  ^{-}\left(  vec\left(  \hat{\Sigma
}_{2}\right)  \right)  .
\end{array}
\smallskip
\]

\noindent\textbf{Proof of Theorem \ref{th: kps test}a: }The hypothesis of
interest in (\ref{eq: hyp rank}) is: H$_{0}:vec(G_{1})_{\perp}^{\prime
}\mathcal{R}(R)vec(G_{2})_{\perp}=0.$ We test this hypothesis using a SVD of
$\mathcal{R}(\hat{R}):$%
\[
\mathcal{R}(\hat{R})=\hat{L}\hat{\Sigma}\hat{N}^{\prime},
\]
whose elements using (\ref{eq: R-spec}) and the orthonormality of $\hat{L}$ and $\hat{N}$ can be specified as%
\[%
\begin{array}
[c]{rlllrl}%
\hat{L}= & (D_{p}\hat{A}\text{ }\vdots\text{ }D_{p\perp}), &  &  & \hat{N}= &
(D_{k}\hat{B}\text{ }\vdots\text{ }D_{k\perp}),\\
\hat{\Sigma}= & \left(
\begin{array}
[c]{cc}%
\hat{\sigma}_{1} & 0\\
0 & \hat{\Sigma}_{2}%
\end{array}
\right)  , &  &  & \hat{\Sigma}_{2}= & \left(
\begin{array}
[c]{cc}%
\hat{\Sigma}_{22} & 0\\
0 & 0
\end{array}
\right)  ,
\end{array}
\]
where $\hat{A}$ is a $\frac{1}{2}p(p+1)\times\frac{1}{2}p(p+1)$ dimensional
matrix, $\hat{A}^{\prime}D_{p}^{\prime}D_{p}\hat{A}=I_{\frac{1}{2}p(p+1)},$
$\hat{B}$ is a $\frac{1}{2}k(k+1)\times\frac{1}{2}k(k+1)$ dimensional matrix,
$\hat{B}^{\prime}D_{k}^{\prime}D_{k}\hat{B}=I_{\frac{1}{2}k(k+1)},$
$\hat{\Sigma}_{22}$ is a diagonal $(\frac{1}{2}p(p+1)-1)\times(\frac{1}%
{2}k(k+1)-1)$ dimensional matrix, $D_{p\perp}$ and $D_{k\perp}$ are
$p^{2}\times\frac{1}{2}p(p-1)$ and $k^{2}\times\frac{1}{2}k(k-1)$ dimensional
matrices which are the orthogonal complements of $D_{p}$ and $D_{k},$
$D_{p}^{\prime}D_{p\perp}\equiv0,$ $D_{p\perp}^{\prime}D_{p\perp}\equiv
I_{\frac{1}{2}p(p-1)},$ $D_{k}^{\prime}D_{k\perp}\equiv0$ and $D_{k\perp
}^{\prime}D_{k\perp}\equiv I_{\frac{1}{2}k(k-1)}.$ We also use an identical
SVD of the population counterpart $\mathcal{R}(R)$ of $\mathcal{R}(\hat{R}):$
\[
\mathcal{R}(R)=L\Sigma N^{\prime},
\]
with an identical specification of its elements (but without \textquotedblleft%
$\symbol{94}$\textquotedblright) and where under H$_{0}:\Sigma_{22}=0.$

To obtain the limit distribution of the sample analog of the parameter tested
under H$_{0}$ recall from below (\ref{eq: lambda}) that%
\begin{equation}%
\begin{array}
[c]{rl}%
\hat{\Lambda}= & vec(\hat{G}_{1})_{\perp}^{\prime}\mathcal{R}(\hat{R}%
)vec(\hat{G}_{2})_{\perp}.
\end{array}
\label{eq: lambdahat}%
\end{equation}
Next, we use that $vec(\hat{G}_{1})_{\perp}=vec(G_{1})_{\perp}+O_{p}%
(n^{-\frac{1}{2}}),$ $vec(\hat{G}_{2})_{\perp}=vec(G_{2})_{\perp}%
+O_{p}(n^{-\frac{1}{2}})$ (which holds under our imposed conditions, see
Kleibergen and Paap (2006)), the assumption $\hat{R}^{\ast}=R^{\ast}+\frac
{1}{\sqrt{n}}\Psi+o_{p}(n^{-\frac{1}{2}}),$ and $D_{p}R^{\ast}D_{k}^{\prime
}=vec(G_{1})vec(G_{2})^{\prime}$ which holds under H$_{0}.$ Thus, under
H$_{0}$
\[%
\begin{array}
[c]{rl}%
\hat{\Lambda}= & \left[  vec(G_{1})_{\perp}+O_{p}(n^{-\frac{1}{2}})\right]
^{\prime}\left[  vec(G_{1})vec(G_{2})^{\prime}+\frac{1}{\sqrt{n}}D_{p}\Psi
D_{k}^{\prime}+o_{p}(n^{-\frac{1}{2}})\right]  \left[  vec(G_{2})_{\perp
}+O_{p}(n^{-\frac{1}{2}})\right]  \\
= & \frac{1}{\sqrt{n}}vec(G_{1})_{\perp}^{\prime}D_{p}\Psi D_{k}^{\prime
}vec(G_{2})_{\perp}+o_{p}(n^{-\frac{1}{2}}).
\end{array}
\]
To construct the limit distribution of $\hat{\Lambda}$, recall that $A$ and $B$ were defined from $L= (D_{p}A\text{ }\vdots\text{ }D_{p\perp})$ and $N= 
(D_{k}B\text{ }\vdots\text{ }D_{k\perp})$,
and partition them as
\[
A = \left(
\begin{array}
[c]{cc}%
a_{1} & A_{2}%
\end{array}
\right), \qquad B =  \left(
\begin{array}
[c]{cc}%
b_{1} & B_{2}%
\end{array}
\right),
\]
where $a_{1}:\frac{1}{2}p(p+1)\times1,$ $A_{2}:\frac{1}{2}p(p+1)\times(\frac{1}%
{2}p(p+1)-1),$ $b_{1}:\frac{1}{2}k(k+1)\times1,$ $B_{2}:\frac{1}%
{2}k(k+1)\times(\frac{1}{2}k(k+1)-1)$. Then,
\[%
\begin{array}
[c]{rlllrl}%
vec(G_{1})_{\perp}= & L_{2}L_{22}^{-1}(L_{22}L_{22}^{\prime})^{1/2}, &  &  &
vec(G_{2})_{\perp}= & N_{2}N_{22}^{-1}\left(  N_{22}N_{22}^{\prime}\right)
^{1/2},\\
L_{2}= & \left(
\begin{array}
[c]{cc}%
e_{1,\frac{1}{2}p(p+1)}^{\prime}A_{2} & 0\\
D_{2,p}A_{2} & D_{2,p\perp}%
\end{array}
\right)  , &  &  & N_{2}= & \left(
\begin{array}
[c]{cc}%
e_{1,\frac{1}{2}k(k+1)}^{\prime}B_{2} & 0\\
D_{2,k}B_{2} & D_{2,k\perp}%
\end{array}
\right)  ,\\
L_{22}= & \left(
\begin{array}
[c]{cc}%
D_{2,p}A_{2} & D_{2,p\perp}%
\end{array}
\right)  , &  &  & N_{22}= & \left(
\begin{array}
[c]{cc}%
D_{2,k}B_{2} & D_{2,k\perp}%
\end{array}\right),
\end{array}
\]
where we use that $D_{p}=(e_{1,\frac{1}{2}p(p+1)}$ $\vdots$ $D_{2,p}^{\prime
})^{\prime},$ for  $D_{2,p}:(p^{2}-1)\times\frac{1}{2}p(p+1)$ and $D_{k}%
=(e_{1,\frac{1}{2}k(k+1)}$ $\vdots$ $D_{2,k}^{\prime})^{\prime},$ for
$D_{2,k}:(k^{2}-1)\times\frac{1}{2}k(k+1)$ with $e_{1,i}$ the first $i$
dimensional unity vector (i.e. the first column of $I_{i}).$ We partition
 $D_{p\perp}=(0$ $\vdots$ $D_{2,p\perp
}^{\prime})^{\prime},$ where $D_{2,p\perp}:(p^{2}-1)\times\frac{1}{2}p(p-1),$
$D_{2,p\perp}^{\prime}D_{2,p\perp}=I_{\frac{1}{2}p(p-1)},$ and $D_{k\perp}=(0$
$\vdots$ $D_{2,k\perp}^{\prime})^{\prime},$ where $D_{2,k\perp}:(k^{2}-1)\times
\frac{1}{2}k(k-1),$ $D_{2,k\perp}^{\prime}D_{2,k\perp}=I_{\frac{1}{2}k(k-1)},$
where the specifications of $D_{p\perp}$ and $D_{k\perp}$ result from those of
$D_{p}$ and $D_{k}.$

We next use the spectral decompositions of $A_{2}^{\prime}D_{2,p}^{\prime
}D_{2,p}A_{2}:(\frac{1}{2}p(p+1)-1)\times(\frac{1}{2}p(p+1)-1)$ and
$B_{2}^{\prime}D_{2,k}^{\prime}D_{2,k}B_{2}:(\frac{1}{2}k(k+1)-1)\times
(\frac{1}{2}k(k+1)-1):$%
\[%
\begin{array}
[c]{cc}%
A_{2}^{\prime}D_{2,p}^{\prime}D_{2,p}A_{2}= & L_{D_{2p}A_{2}}\Lambda
_{D_{2,p}A_{2}}^{2}L_{D_{2p}A_{2}}^{\prime}\\
B_{2}^{\prime}D_{2,k}^{\prime}D_{2,k}B_{2}= & L_{D_{2k}B_{2}}\Lambda
_{D_{2,k}B_{2}}^{2}L_{D_{2k}B_{2}}^{\prime},
\end{array}
\]
with $L_{D_{2p}A_{2}}$ and $L_{D_{2k}B_{2}}$ orthonormal $(\frac{1}%
{2}p(p+1)-1)\times(\frac{1}{2}p(p+1)-1)$ and $(\frac{1}{2}k(k+1)-1)\times
(\frac{1}{2}k(k+1)-1)$ dimensional matrices and $\Lambda_{D_{2,p}A_{2}}^{2}$
and $\Lambda_{D_{2,k}B_{2}}^{2}$ diagonal $(\frac{1}{2}p(p+1)-1)\times
(\frac{1}{2}p(p+1)-1)$ and $(\frac{1}{2}k(k+1)-1)\times(\frac{1}{2}k(k+1)-1)$
dimensional matrices with the squared singular values in non-increasing order
on the diagonal. We note that $A_{2}^{\prime}D_{2,p}^{\prime}D_{2,p}A_{2}$ is invertible.
This results since $A_{2}^{\prime}D_{p}^{\prime}D_{p}A_{2}=A_{2}^{\prime
}D_{2,p}^{\prime}D_{2,p}A_{2}+A_{2}^{\prime}e_{1,\frac{1}{2}p(p+1)}%
e_{1,\frac{1}{2}p(p+1)}^{\prime}A_{2}=I_{\frac{1}{2}p(p+1)}$ so $A_{2}%
^{\prime}D_{2,p}^{\prime}D_{2,p}A_{2}=I_{\frac{1}{2}p(p+1)}-A_{2}^{\prime
}e_{1,\frac{1}{2}p(p+1)}e_{1,\frac{1}{2}p(p+1)}^{\prime}A_{2}.$ Only when
$e_{1,\frac{1}{2}p(p+1)}^{\prime}A_{2}A_{2}^{\prime}e_{1,\frac{1}{2}p(p+1)}=1$
is this of lower rank since the specification then corresponds with a
projection matrix. This is, however, not possible given the specification of
$L=(D_{p}A$ $\vdots$ $D_{p\perp})$ which is orthonormal so $L^{\prime
}L=LL^{\prime}=I_{p^{2}}.$ The quadratic form (inner product) of the top row
of $L$ is thus equal to one. Given the specification of $D_{p},$ $D_{p\perp}$
has only zeros on the first row. Next, the $L_{11}$ element is unequal to zero
because $R_{11}$ is a positive definite covariance matrix. Since the $L_{11}$ element is unequal to zero, the length of the vector of the
remaining elements on the first row of $L$ can not be equal to one.\ This
implies that $e_{1,\frac{1}{2}p(p+1)}^{\prime}A_{2}A_{2}^{\prime}e_{1,\frac
{1}{2}p(p+1)}\neq1$ so $A_{2}^{\prime}D_{2,p}^{\prime}D_{2,p}A_{2}$ is invertible and $B_{2}^{\prime}D_{2,k}^{\prime}D_{2,k}B_{2}$ as well. A further consequence is that $L_{22}$ and $N_{22}$ are invertible and similarly $\hat{L}_{22}$ and $\hat{N}_{22}$.

The above spectral decompositions feature in the SVDs of
$L_{22},$ and $N_{22},$ which we can specify as:
\begin{align*}
L_{22}= &  \left(
\begin{array}
[c]{cc}%
D_{2,p}A_{2} & D_{2,p\perp}%
\end{array}
\right)  \\
= &  \left(
\begin{array}
[c]{cc}%
D_{2,p}A_{2}(L_{D_{2p}A_{2}}\Lambda_{D_{2,p}A_{2}}^{2}L_{D_{2p}A_{2}}^{\prime
})^{-\frac{1}{2}}L_{D_{2p}A_{2}}\Lambda_{D_{2,p}A_{2}}L_{D_{2,p}A_{2}}%
^{\prime} & D_{2,p\perp}%
\end{array}
\right)  \\
= &  \left(
\begin{array}
[c]{cc}%
D_{2,p}A_{2}(L_{D_{2p}A_{2}}\Lambda_{D_{2,p}A_{2}}^{2}L_{D_{2p}A_{2}}^{\prime
})^{-\frac{1}{2}}L_{D_{2p}A_{2}} & D_{2,p\perp}%
\end{array}
\right)  \left(
\begin{array}
[c]{cc}%
\Lambda_{D_{2,p}A_{2}} & 0\\
0 & I_{\frac{1}{2}p(p-1)}%
\end{array}
\right)  \\
&  \left(
\begin{array}
[c]{cc}%
L_{D_{2,p}A_{2}}^{\prime} & 0\\
0 & I_{\frac{1}{2}p(p-1)}%
\end{array}
\right)  ,\\
(L_{22}L_{22}^{\prime})^{\frac{1}{2}}= &  \left(
\begin{array}
[c]{cc}%
D_{2,p}A_{2}(L_{D_{2p}A_{2}}\Lambda_{D_{2,p}A_{2}}^{2}L_{D_{2p}A_{2}}^{\prime
})^{-\frac{1}{2}}L_{D_{2p}A_{2}} & D_{2,p\perp}%
\end{array}
\right)  \left(
\begin{array}
[c]{cc}%
\Lambda_{D_{2,p}A} & 0\\
0 & I_{\frac{1}{2}p(p-1)}%
\end{array}
\right)  \\
&  \left(
\begin{array}
[c]{cc}%
D_{2,p}A_{2}(L_{D_{2p}A_{2}}\Lambda_{D_{2,p}A_{2}}^{2}L_{D_{2p}A_{2}}^{\prime
})^{-\frac{1}{2}}L_{D_{2p}A_{2}} & D_{2,p\perp}%
\end{array}
\right)  ^{\prime},\\
L_{22}^{-1}\left(  L_{22}L_{22}^{\prime}\right)  ^{1/2}= &  \left(
\begin{array}
[c]{cc}%
L_{D_{2,p}A_{2}} & 0\\
0 & I_{\frac{1}{2}p(p-1)}%
\end{array}
\right)  \left(
\begin{array}
[c]{c}%
L_{D_{2p}A_{2}}^{\prime}(L_{D_{2p}A_{2}}\Lambda_{D_{2,p}A_{2}}^{2}%
P_{D_{2p}A_{2}}^{\prime})^{-\frac{1}{2}}A_{2}^{\prime}D_{2,p}^{\prime}\\
D_{2,p\perp}^{\prime}%
\end{array}
\right)  \\
= &  \left(
\begin{array}
[c]{c}%
(L_{D_{2p}A_{2}}\Lambda_{D_{2,p}A_{2}}^{2}P_{D_{2p}A_{2}}^{\prime})^{-\frac
{1}{2}}A_{2}^{\prime}D_{2,p}^{\prime}\\
D_{2,p\perp}^{\prime}%
\end{array}
\right)  =\left(
\begin{array}
[c]{c}%
(A_{2}^{\prime}D_{2,p}^{\prime}D_{2,p}A_{2})^{-\frac{1}{2}}A_{2}^{\prime
}D_{2,p}^{\prime}\\
D_{2,p\perp}^{\prime}%
\end{array}
\right)
\end{align*}
and%
\begin{equation}%
\begin{array}
[c]{rl}%
vec\left(  G_{1}\right)  _{\perp}=L_{2}L_{22}^{-1}\left(  L_{22}L_{22}^{\prime
}\right)  ^{1/2}= & \left(
\begin{array}
[c]{cc}%
D_{p}A_{2} & D_{p\perp}%
\end{array}
\right)  \left(
\begin{array}
[c]{c}%
(A_{2}^{\prime}D_{2,p}^{\prime}D_{2,p}A_{2})^{-\frac{1}{2}}A_{2}^{\prime
}D_{2,p}^{\prime}\\
D_{2,p\perp}^{\prime}%
\end{array}
\right)  ,\\
vec\left(  G_{2}\right)  _{\perp}=N_{2}N_{22}^{-1}\left(  N_{22}N_{22}^{\prime
}\right)  ^{1/2}= & \left(
\begin{array}
[c]{cc}%
D_{k}B_{2} & D_{k\perp}%
\end{array}
\right)  \left(
\begin{array}
[c]{c}%
(B_{2}^{\prime}D_{2,k}^{\prime}D_{2,k}B_{2})^{-\frac{1}{2}}B_{2}^{\prime
}D_{2,k}^{\prime}\\
D_{2,k\perp}^{\prime}%
\end{array}
\right)  ,
\end{array}
\label{eq: vecGperps}%
\end{equation}
where in the third line of the decomposition of $L_{22},$ we have the three
components that result from a SVD of $L_{22}$.

Then, under H$_{0}:$%
\begin{equation}%
\begin{array}
[c]{rl}%
\sqrt{n}\hat{\Lambda}= & vec(G_{1})_{\perp}^{\prime}D_{p}\Psi D_{k}^{\prime
}vec(G_{2})_{\perp}+o_{p}(1)\\
= & (L_{22}L_{22}^{\prime})^{1/2}L_{22}^{-1}L_{2}^{\prime}D_{p}\Psi
D_{k}^{\prime}N_{2}N_{22}^{-1}\left(  N_{22}N_{22}^{\prime}\right)
^{1/2}+o_{p}(1)\\
= & \left(
\begin{array}
[c]{c}%
(A_{2}^{\prime}D_{2,p}^{\prime}D_{2,p}A_{2})^{-\frac{1}{2}}A_{2}^{\prime
}D_{2,p}^{\prime}\\
D_{2,p\perp}^{\prime}%
\end{array}
\right)  ^{\prime}\left(
\begin{array}
[c]{cc}%
D_{p}A_{2} & D_{p\perp}%
\end{array}
\right)  ^{\prime}D_{p}\Psi D_{k}^{\prime}\\
& \left(
\begin{array}
[c]{cc}%
D_{k}B_{2} & D_{k\perp}%
\end{array}
\right)  \left(
\begin{array}
[c]{c}%
(B_{2}^{\prime}D_{2,k}^{\prime}D_{2,k}B_{2})^{-\frac{1}{2}}B_{2}^{\prime
}D_{2,k}^{\prime}\\
D_{2,k\perp}^{\prime}%
\end{array}
\right)  +o_{p}(1)\\
= & \left(
\begin{array}
[c]{c}%
(A_{2}^{\prime}D_{2,p}^{\prime}D_{2,p}A_{2})^{-\frac{1}{2}}A_{2}^{\prime
}D_{2,p}^{\prime}\\
D_{2,p\perp}^{\prime}%
\end{array}
\right)  ^{\prime}\left(
\begin{array}
[c]{cc}%
A_{2}^{\prime}D_{p}^{\prime}D_{p}\Psi D_{k}^{\prime}D_{k}B_{2} & 0\\
0 & 0
\end{array}
\right) \\
& \left(
\begin{array}
[c]{c}%
(B_{2}^{\prime}D_{2,k}^{\prime}D_{2,k}B_{2})^{-\frac{1}{2}}B_{2}^{\prime
}D_{2,k}^{\prime}\\
D_{2,k\perp}^{\prime}%
\end{array}
\right)  +o_{p}(1)\\
= & D_{2,p}A_{2}\bar{\Lambda}B_{2}^{\prime}D_{2,k}^{\prime}+o_{p}(1),
\end{array}
\label{eq: lim lambdahat}%
\end{equation}
\bigskip where%
\[%
\begin{array}
[c]{rl}%
\bar{\Lambda}:= & (A_{2}^{\prime}D_{2,p}^{\prime}D_{2,p}A_{2})^{-\frac{1}{2}%
}A_{2}^{\prime}D_{p}^{\prime}D_{p}\Psi D_{k}^{\prime}D_{k}B_{2}(B_{2}^{\prime
}D_{2,k}^{\prime}D_{2,k}B_{2})^{-\frac{1}{2}},
\end{array}
\]
which is a $\left(  \frac{1}{2}p(p+1)-1\right)  \times\left(  \frac{1}%
{2}k(k+1)-1\right)  $ normally distributed random matrix with mean zero. The
covariance matrix of $vec(\bar{\Lambda})$ equals%
\[%
\begin{array}
[c]{cl}%
V_{vec(\bar{\Lambda})}= & \left(  (B_{2}^{\prime}D_{2,k}^{\prime}D_{2,k}%
B_{2})^{-\frac{1}{2}}B_{2}^{\prime}D_{k}^{\prime}D_{k}\otimes(A_{2}^{\prime
}D_{2,p}^{\prime}D_{2,p}A_{2})^{-\frac{1}{2}}A_{2}^{\prime}D_{p}^{\prime}%
D_{p}\right)  V_{R^{\ast}}\\
& \times\left(  (B_{2}^{\prime}D_{2,k}^{\prime}D_{2,k}B_{2})^{-\frac{1}{2}%
}B_{2}^{\prime}D_{k}^{\prime}D_{k}\otimes(A_{2}^{\prime}D_{2,p}^{\prime
}D_{2,p}A_{2})^{-\frac{1}{2}}A_{2}^{\prime}D_{p}^{\prime}D_{p}\right)
^{\prime}.
\end{array}
\]
The above implies that the limiting distribution of $\sqrt{n}\hat{\Lambda}$ is
degenerate Normal because $D_{2,p}A_{2}$ and $D_{2,k}B_{2}$ are $(p^{2}%
-1)\times(\frac{1}{2}p(p+1)-1)$ and $(k^{2}-1)\times(\frac{1}{2}k(k+1)-1)$
dimensional matrices, respectively, and so the number of rows exceeds the
number of columns.\smallskip

We now apply a weak law of large numbers to the sample average $\hat{V}$
defined in (\ref{eq: kpst}). The matrix $\hat{V}$ contains summands of eighth
order products of $f_{i}$ and the weak law of large numbers holds by the
assumption that $E\left(  \left\Vert f_{i}\right\Vert ^{8}\right)  <\kappa.$
To derive the limit of the covariance matrix estimator in the KPST statistic,
the following derivations are important:
\begin{align}
&  \left(  vec(\hat{G}_{2})_{\perp}\otimes vec(\hat{G}_{1})_{\perp}\right)
^{\prime}\hat{V}\left(  vec(\hat{G}_{2})_{\perp}\otimes vec(\hat{G}%
_{1})_{\perp}\right)  \nonumber\label{eq: box}\\
\underset{p}{\rightarrow} &  \left(  vec(G_{2})_{\perp}\otimes vec(G_{1}%
)_{\perp}\right)  ^{\prime}(D_{k}\otimes D_{p})V_{R^{\ast}}(D_{k}\otimes
D_{p})^{\prime}\left(  vec(G_{2})_{\perp}\otimes vec(G_{1})_{\perp}\right)
\nonumber\\
= &  \left(  \left(
\begin{array}
[c]{cc}%
D_{k}B_{2} & D_{k\perp}%
\end{array}
\right)  \left(
\begin{array}
[c]{c}%
(B_{2}^{\prime}D_{2,k}^{\prime}D_{2,k}B_{2})^{-\frac{1}{2}}B_{2}^{\prime
}D_{2,k}^{\prime}\\
D_{2,k\perp}^{\prime}%
\end{array}
\right)  \otimes\right.  \nonumber\\
&  \left.  \left(
\begin{array}
[c]{cc}%
D_{p}A_{2} & D_{p\perp}%
\end{array}
\right)  \left(
\begin{array}
[c]{c}%
(A_{2}^{\prime}D_{2,p}^{\prime}D_{2,p}A_{2})^{-\frac{1}{2}}A_{2}^{\prime
}D_{2,p}^{\prime}\\
D_{2,p\perp}^{\prime}%
\end{array}
\right)  \right)  ^{\prime}(D_{k}\otimes D_{p})V_{R^{\ast}}(D_{k}\otimes
D_{p})^{\prime}\nonumber\\
&  \left(  \left(
\begin{array}
[c]{cc}%
D_{k}B_{2} & D_{k\perp}%
\end{array}
\right)  \left(
\begin{array}
[c]{c}%
(B_{2}^{\prime}D_{2,k}^{\prime}D_{2,k}B_{2})^{-\frac{1}{2}}B_{2}^{\prime
}D_{2,k}^{\prime}\\
D_{2,k\perp}^{\prime}%
\end{array}
\right)  \otimes\right.  \nonumber\\
&  \left.  \left(
\begin{array}
[c]{cc}%
D_{p}A_{2} & D_{p\perp}%
\end{array}
\right)  \left(
\begin{array}
[c]{c}%
(A_{2}^{\prime}D_{2,p}^{\prime}D_{2,p}A_{2})^{-\frac{1}{2}}A_{2}^{\prime
}D_{2,p}^{\prime}\\
D_{2,p\perp}^{\prime}%
\end{array}
\right)  \right)  \nonumber\\
= &  \left(  \left(
\begin{array}
[c]{cc}%
D_{2,k}B_{2} & D_{2,k\perp}%
\end{array}
\right)  \otimes\left(
\begin{array}
[c]{cc}%
D_{2,p}A_{2} & D_{2,p\perp}%
\end{array}
\right)  \right)  \\
&  \left(  \left(
\begin{array}
[c]{c}%
(B_{2}^{\prime}D_{2,k}^{\prime}D_{2,k}B_{2})^{-\frac{1}{2}}B_{2}^{\prime}%
D_{k}^{\prime}D_{k}\\
0
\end{array}
\right)  \otimes\left(
\begin{array}
[c]{c}%
(A_{2}^{\prime}D_{2,p}^{\prime}D_{2,p}A_{2})^{-\frac{1}{2}}A_{2}^{\prime}%
D_{p}^{\prime}D_{p}\\
0
\end{array}
\right)  \right)  V_{R^{\ast}}\nonumber\\
&  \left(  \left(
\begin{array}
[c]{c}%
(B_{2}^{\prime}D_{2,k}^{\prime}D_{2,k}B_{2})^{-\frac{1}{2}}B_{2}^{\prime}%
D_{k}^{\prime}D_{k}\\
0
\end{array}
\right)  \otimes\left(
\begin{array}
[c]{c}%
(A_{2}^{\prime}D_{2,p}^{\prime}D_{2,p}A_{2})^{-\frac{1}{2}}A_{2}^{\prime}%
D_{p}^{\prime}D_{p}\\
0
\end{array}
\right)  \right)  ^{\prime}\nonumber\\
&  \left(  \left(
\begin{array}
[c]{cc}%
D_{2,k}B_{2} & D_{2,k\perp}%
\end{array}
\right)  \otimes\left(
\begin{array}
[c]{cc}%
D_{2,p}A_{2} & D_{2,p\perp}%
\end{array}
\right)  \right)  ^{\prime}\nonumber\\
= &  \left(  \left(
\begin{array}
[c]{cc}%
D_{2,k}B_{2} & D_{2,k\perp}%
\end{array}
\right)  \otimes\left(
\begin{array}
[c]{cc}%
D_{2,p}A_{2} & D_{2,p\perp}%
\end{array}
\right)  \right)  \left(
\begin{array}
[c]{cccc}%
V_{vec(\bar{\Lambda})} & 0 & 0 & 0\\
0 & 0 & 0 & 0\\
0 & 0 & 0 & 0\\
0 & 0 & 0 & 0
\end{array}
\right)  \nonumber\\
&  \left(  \left(
\begin{array}
[c]{cc}%
D_{2,k}B_{2} & D_{2,k\perp}%
\end{array}
\right)  \otimes\left(
\begin{array}
[c]{cc}%
D_{2,p}A_{2} & D_{2,p\perp}%
\end{array}
\right)  \right)  ^{\prime},\nonumber
\end{align}
where the specifications of $vec(G_{1})_{\perp}$ and $vec(G_{2})_{\perp}$
result from (\ref{eq: vecGperps}). The convergence behavior of KPST is then
characterized by:
\begin{align*}
KPST= &  n\times\left[  vec(vec(G_{1})_{\perp}^{\prime}\mathcal{R}(\hat
{R})vec(G_{2})_{\perp})\right]  ^{\prime}\\
&  \left[  \left(  vec(G_{2})_{\perp}\otimes vec(G_{1})_{\perp}\right)
^{\prime}(D_{k}\otimes D_{p})V_{R^{\ast}}(D_{k}\otimes D_{p})^{\prime}\left(
vec(G_{2})_{\perp}\otimes vec(G_{1})_{\perp}\right)  \right]  ^{-}\\
&  \left[  vec(vec(G_{1})_{\perp}^{\prime}\mathcal{R}(\hat{R})vec(G_{2}%
)_{\perp})\right]  +o_{p}(1)\\
= &  vec(\bar{\Lambda})^{\prime}(D_{2,k}B_{2}\otimes D_{2,p}A_{2})^{\prime
}\left(  \left(
\begin{array}
[c]{cc}%
D_{2,k}B_{2}(B_{2}^{\prime}D_{2,k}^{\prime}D_{2,k}B_{2})^{-1} & D_{2,k\perp}%
\end{array}
\right)  \otimes\right.  \\
&  \left.  \left(
\begin{array}
[c]{cc}%
D_{2,p}A_{2}(A_{2}^{\prime}D_{2,p}^{\prime}D_{2,p}A_{2})^{-1} & D_{2,p\perp}%
\end{array}
\right)  \right)  \left(
\begin{array}
[c]{cccc}%
V_{vec(\bar{\Lambda})} & 0 & 0 & 0\\
0 & 0 & 0 & 0\\
0 & 0 & 0 & 0\\
0 & 0 & 0 & 0
\end{array}
\right)  ^{-}\\
&  \left(  \left(
\begin{array}
[c]{c}%
(B_{2}^{\prime}D_{2,k}^{\prime}D_{2,k}B_{2})^{-1}B_{2}^{\prime}D_{2,k}%
^{\prime}\\
D_{2,k\perp}^{\prime}%
\end{array}
\right)  \otimes\left(
\begin{array}
[c]{c}%
(A_{2}^{\prime}D_{2,p}^{\prime}D_{2,p}A_{2})^{-1}A_{2}^{\prime}D_{2,p}%
^{\prime}\\
D_{2,p\perp}^{\prime}%
\end{array}
\right)  \right)  \\
&  (D_{2,k}B_{2}\otimes D_{2,p}A_{2})vec(\bar{\Lambda})+o_{p}(1)\\
= &  vec(\bar{\Lambda})^{\prime}V_{vec(\bar{\Lambda})}^{-1}vec(\bar{\Lambda
})+o_{p}(1)\underset{d}{\rightarrow}\chi_{df}^{2},
\end{align*}
with $df=\left(  \frac{1}{2}p(p+1)-1\right)  \left(  \frac{1}{2}%
k(k+1)-1\right)  .$ The first equality substitutes $vec(\hat{G}_{1})_{\perp}$
and $vec(\hat{G}_{2})_{\perp}$ by their limits$.$ The second equality follows
from (\ref{eq: lambdahat}), vectorizing (\ref{eq: lim lambdahat}), and the
last line of (\ref{eq: box}). It also uses that the Moore-Penrose inverse of
the expression on the last line of (\ref{eq: box}) equals
\begin{align*}
&  \left[  \left(  \left(
\begin{array}
[c]{cc}%
D_{2,k}B_{2} & D_{2,k\perp}%
\end{array}
\right)  \otimes\left(
\begin{array}
[c]{cc}%
D_{2,p}A_{2} & D_{2,p\perp}%
\end{array}
\right)  \right)  \left(
\begin{array}
[c]{cccc}%
V_{vec(\bar{\Lambda})} & 0 & 0 & 0\\
0 & 0 & 0 & 0\\
0 & 0 & 0 & 0\\
0 & 0 & 0 & 0
\end{array}
\right)  \right.  \\
&  \left.  \left(  \left(
\begin{array}
[c]{cc}%
D_{2,k}B_{2} & D_{2,k\perp}%
\end{array}
\right)  \otimes\left(
\begin{array}
[c]{cc}%
D_{2,p}A_{2} & D_{2,p\perp}%
\end{array}
\right)  \right)  ^{\prime}\right]  ^{-}\\
&  =\left(  \left(
\begin{array}
[c]{cc}%
D_{2,k}B_{2}(B_{2}^{\prime}D_{2,k}^{\prime}D_{2,k}B_{2})^{-1} & D_{2,k\perp}%
\end{array}
\right)  \otimes\right.  \\
&  \left.  \left(
\begin{array}
[c]{cc}%
D_{2,p}A_{2}(A_{2}^{\prime}D_{2,p}^{\prime}D_{2,p}A_{2})^{-1} & D_{2,p\perp}%
\end{array}
\right)  \right)  \left(
\begin{array}
[c]{cccc}%
V_{vec(\bar{\Lambda})} & 0 & 0 & 0\\
0 & 0 & 0 & 0\\
0 & 0 & 0 & 0\\
0 & 0 & 0 & 0
\end{array}
\right)  ^{-}\\
&  \left(  \left(
\begin{array}
[c]{c}%
(B_{2}^{\prime}D_{2,k}^{\prime}D_{2,k}B_{2})^{-1}B_{2}^{\prime}D_{2,k}%
^{\prime}\\
D_{2,k\perp}^{\prime}%
\end{array}
\right)  \otimes\left(
\begin{array}
[c]{c}%
(A_{2}^{\prime}D_{2,p}^{\prime}D_{2,p}A_{2})^{-1}A_{2}^{\prime}D_{2,p}%
^{\prime}\\
D_{2,p\perp}^{\prime}%
\end{array}
\right)  \right)
\end{align*}
which follows because
\[
\left(
\begin{array}
[c]{cc}%
D_{2,k}B_{2}(B_{2}^{\prime}D_{2,k}^{\prime}D_{2,k}B_{2})^{-1} & D_{2,k\perp}%
\end{array}
\right)  ^{\prime}\left(
\begin{array}
[c]{cc}%
D_{2,k}B_{2} & D_{2,k\perp}%
\end{array}
\right)  =\left(
\begin{array}
[c]{cc}%
I_{\frac{1}{2}k(k+1)-1} & 0\\
0 & I_{\frac{1}{2}k(k-1)}%
\end{array}
\right)  =I_{k^{2}-1}%
\]
as $D_{2,k\perp}=I_{\frac{1}{2}k(k-1)}$ and the same argument can be applied
to the other component. The third equality then follows from%
\[%
\begin{array}
[c]{rl}%
\left(
\begin{array}
[c]{cc}%
D_{2,p}A_{2} & D_{2,p\perp}%
\end{array}
\right)  ^{-1}= & \left(
\begin{array}
[c]{cc}%
(A_{2}^{\prime}D_{2,p}^{\prime}D_{2,p}A_{2})^{-1} & 0\\
0 & I_{\frac{1}{2}p(p-1)}%
\end{array}
\right)  \left(
\begin{array}
[c]{cc}%
D_{2,p}A_{2} & D_{2,p\perp}%
\end{array}
\right)  ^{\prime}\\
\left(
\begin{array}
[c]{cc}%
D_{2,k}B_{2} & D_{2,k\perp}%
\end{array}
\right)  ^{-1}= & \left(
\begin{array}
[c]{cc}%
(B_{2}^{\prime}D_{2,k}^{\prime}D_{2,k}B_{2})^{-1} & 0\\
0 & I_{\frac{1}{2}k(k-1)}%
\end{array}
\right)  \left(
\begin{array}
[c]{cc}%
D_{2,k}B_{2} & D_{2,k\perp}%
\end{array}
\right)  ^{\prime}.
\end{array}
\]

\textbf{b. }We show that if $\mathcal{R}(\hat{R})=D_{p}\hat{R}^{\ast}%
D_{k}^{\prime}$ is replaced with
\[%
\begin{array}
[c]{rl}%
\bar{R}:= & D_{p}(D_{p}^{\prime}D_{p})^{-\frac{1}{2}}\hat{R}^{\ast}%
(D_{k}^{\prime}D_{k})^{-\frac{1}{2}}D_{k}^{\prime}%
\end{array}
\]
in the definition of KPST, one obtains KPST$^{\ast}$ in (\ref{eq: kpst*}). To
show this, we use SVDs of $\bar{R}=\bar{L}\bar{\Sigma}\bar{N}^{\prime}$ and
$\hat{R}^{\ast}=\hat{L}^{\ast}\hat{\Sigma}^{\ast}\hat{N}^{\ast\prime}$ which
are related through:%
\[%
\begin{array}
[c]{rl}%
\bar{L}= & \left(  D_{p}(D_{p}^{\prime}D_{p})^{-\frac{1}{2}}\hat{L}^{\ast
}\text{ }\vdots\text{ }D_{p\perp}\right) \\
\bar{\Sigma}= & \left(
\begin{array}
[c]{cc}%
\hat{\Sigma}^{\ast} & 0\\
0 & 0
\end{array}
\right) \\
\bar{N}= & \left(  D_{k}(D_{k}^{\prime}D_{k})^{-\frac{1}{2}}\hat{N}^{\ast
}\text{ }\vdots\text{ }D_{k\perp}\right)  .
\end{array}
\]
To show that KPST using $\bar{R}$, indicated by KPST$_{\bar{R}},$ equals
KPST$^{\ast}$, we analyze KPST$_{\bar{R}}$:
\begin{align*}
KPST_{\bar{R}}=  &  n\times\left[  vec(vec(\bar{G}_{1})_{\perp}^{\prime}%
\bar{R}vec(\bar{G}_{2})_{\perp})\right]  ^{\prime}\\
&  \left[  \left(  vec(\bar{G}_{2})_{\perp}^{\prime}\otimes vec(\bar{G}%
_{1})_{\perp}^{\prime}\right)  \hat{V}_{\bar{R}}\left(  vec(\bar{G}%
_{2})_{\perp}\otimes vec(\bar{G}_{1})_{\perp}\right)  \right]  ^{-}\\
&  \left[  vec(vec(\bar{G}_{1})_{\perp}^{\prime}\bar{R}vec(\bar{G}_{2}%
)_{\perp})\right]  ,\qquad\hat{V}_{\bar{R}}:=\widehat{\text{cov}}\left(
vec\left(  \bar{R}\right)  \right) \\
=  &  vec\left(  \left(
\begin{array}
[c]{cc}%
\hat{\Sigma}_{2}^{\ast} & 0\\
0 & 0
\end{array}
\right)  \right)  ^{\prime}\left(  \left(  \bar{N}_{22}\bar{N}_{22}^{\prime
}\right)  ^{1/2}\bar{N}_{22}^{\prime-}\otimes(\bar{L}_{22}\bar{L}_{22}%
^{\prime})^{1/2}\bar{L}_{22}^{-\prime}\right)  ^{\prime}\\
&  \left\{  \left(  \left(  \bar{N}_{22}\bar{N}_{22}^{\prime}\right)
^{1/2}\bar{N}_{22}^{\prime-}\otimes(\bar{L}_{22}\bar{L}_{22}^{\prime}%
)^{1/2}\bar{L}_{22}^{-\prime}\right)  \rule{0pt}{14pt}\right. \\
&  \left[  \left(  \left(  D_{k}(D_{k}^{\prime}D_{k})^{-\frac{1}{2}}\hat
{N}_{2}^{\ast}\text{ }\vdots\text{ }D_{k\perp}\right)  \otimes\left(
D_{p}(D_{p}^{\prime}D_{p})^{-\frac{1}{2}}\hat{L}_{2}^{\ast}\text{ }%
\vdots\text{ }D_{p\perp}\right)  \right)  ^{\prime}\right. \\
&  \left(  D_{k}(D_{k}^{\prime}D_{k})^{-\frac{1}{2}}\otimes D_{p}%
(D_{p}^{\prime}D_{p})^{-\frac{1}{2}}\right)  \hat{V}_{\hat{R}^{\ast}}\left(
D_{k}(D_{k}^{\prime}D_{k})^{-\frac{1}{2}}\otimes D_{p}(D_{p}^{\prime}%
D_{p})^{-\frac{1}{2}}\right)  ^{\prime}\\
&  \left(  \left(  D_{k}(D_{k}^{\prime}D_{k})^{-\frac{1}{2}}\hat{N}_{2}^{\ast
}\text{ }\vdots\text{ }D_{k\perp}\right)  \otimes\left(  D_{p}(D_{p}^{\prime
}D_{p})^{-\frac{1}{2}}\hat{L}_{2}^{\ast}\text{ }\vdots\text{ }D_{p\perp
}\right)  \right) \\
&  \left.  \left(  \bar{N}_{22}^{-1}\left(  \bar{N}_{22}\bar{N}_{22}^{\prime
}\right)  ^{\frac{1}{2}}\otimes\bar{L}_{22}^{-1}(\bar{L}_{22}\bar{L}%
_{22}^{\prime})^{1/2}\right)  \right\}  ^{-1}\\
&  \left(  \left(  \bar{N}_{22}\bar{N}_{22}^{\prime}\right)  ^{\frac{1}%
{2}\prime}\bar{N}_{22}^{-\prime}\otimes(\bar{L}_{22}\bar{L}_{22}^{\prime
})^{1/2}\bar{L}_{22}^{-\prime}\right)  vec\left(  \left(
\begin{array}
[c]{cc}%
\hat{\Sigma}_{2}^{\ast} & 0\\
0 & 0
\end{array}
\right)  \right) \\
=  &  vec\left(  \left(
\begin{array}
[c]{cc}%
\hat{\Sigma}_{2}^{\ast} & 0\\
0 & 0
\end{array}
\right)  \right)  ^{\prime}\left(  \left(  \bar{N}_{22}\bar{N}_{22}^{\prime
}\right)  ^{1/2}\bar{N}_{22}^{\prime-}\otimes(\bar{L}_{22}\bar{L}_{22}%
^{\prime})^{1/2}\bar{L}_{22}^{-\prime}\right)  ^{\prime}\\
&  \left[  \left(  \left(  \bar{N}_{22}\bar{N}_{22}^{\prime}\right)
^{1/2}\bar{N}_{22}^{\prime-}\otimes(\bar{L}_{22}\bar{L}_{22}^{\prime}%
)^{1/2}\bar{L}_{22}^{-\prime}\right)  \right. \\
&  \left(  \left(
\begin{array}
[c]{cc}%
\hat{N}_{2}^{\ast} & 0
\end{array}
\right)  \otimes\left(
\begin{array}
[c]{cc}%
\hat{L}_{2}^{\ast} & 0
\end{array}
\right)  \right)  ^{\prime}\hat{V}_{\hat{R}^{\ast}}\left(  \left(
\begin{array}
[c]{cc}%
\hat{N}_{2}^{\ast} & 0
\end{array}
\right)  \otimes\left(
\begin{array}
[c]{cc}%
\hat{L}_{2}^{\ast} & 0
\end{array}
\right)  \right) \\
&  \left.  \left(  \bar{N}_{22}^{-1}\left(  \bar{N}_{22}\bar{N}_{22}^{\prime
}\right)  ^{\frac{1}{2}}\otimes\bar{L}_{22}^{-1}(\bar{L}_{22}\bar{L}%
_{22}^{\prime})^{1/2}\right)  \right]  ^{-1}\\
&  \left(  \left(  \bar{N}_{22}\bar{N}_{22}^{\prime}\right)  ^{\frac{1}%
{2}\prime}\bar{N}_{22}^{-\prime}\otimes(\bar{L}_{22}\bar{L}_{22}^{\prime
})^{1/2}\bar{L}_{22}^{-\prime}\right)  vec\left(  \left(
\begin{array}
[c]{cc}%
\hat{\Sigma}_{2}^{\ast} & 0\\
0 & 0
\end{array}
\right)  \right) \\
=  &  vec\left(  \left(
\begin{array}
[c]{cc}%
\hat{\Sigma}_{2}^{\ast} & 0\\
0 & 0
\end{array}
\right)  \right)  ^{\prime}\left[  \left(  \left(
\begin{array}
[c]{cc}%
\hat{N}_{2}^{\ast} & 0
\end{array}
\right)  \otimes\left(
\begin{array}
[c]{cc}%
\hat{L}_{2}^{\ast} & 0
\end{array}
\right)  \right)  ^{\prime}\right. \\
&  \left.  \hat{V}_{\hat{R}^{\ast}}\left(  \left(
\begin{array}
[c]{cc}%
\hat{N}_{2}^{\ast} & 0
\end{array}
\right)  \otimes\left(
\begin{array}
[c]{cc}%
\hat{L}_{2}^{\ast} & 0
\end{array}
\right)  \right)  \right]  ^{-}vec\left(  \left(
\begin{array}
[c]{cc}%
\hat{\Sigma}_{2}^{\ast} & 0\\
0 & 0
\end{array}
\right)  \right)
\end{align*}%
\begin{equation}%
\begin{array}
[c]{rl}%
= & n\times vec(\hat{\Sigma}_{2}^{\ast})^{\prime}\left[  \left(  \hat{N}%
_{2}^{\ast\prime}\otimes\hat{L}_{2}^{\ast\prime}\right)  \hat{V}_{\hat
{R}^{\ast}}\left(  \hat{N}_{2^{\ast}}^{\ast}\otimes\hat{L}_{2}^{\ast}\right)
\right]  ^{-1}vec(\hat{\Sigma}_{2,R^{\ast}})\\
= & KPST^{\ast},
\end{array}
\label{eq: kpst* alt}%
\end{equation}
which is the KPST expression using $\hat{R}^{\ast}$ so it differs from KPST.
Here we used that%
\[%
\begin{array}
[c]{rl}%
\hat{V}_{\bar{R}}= & \left(  D_{k}(D_{k}^{\prime}D_{k})^{-\frac{1}{2}}\otimes
D_{p}(D_{p}^{\prime}D_{p})^{-\frac{1}{2}}\right)  \hat{V}_{\hat{R}^{\ast}%
}\left(  D_{k}(D_{k}^{\prime}D_{k})^{-\frac{1}{2}}\otimes D_{p}(D_{p}^{\prime
}D_{p})^{-\frac{1}{2}}\right)  ^{\prime},\\
\bar{L}_{2}= & \left(  D_{p}(D_{p}^{\prime}D_{p})^{-\frac{1}{2}}\hat
{L}_{2,R^{\ast}}\text{ }\vdots\text{ }D_{p\perp}\right)  ,\\
\bar{N}_{2}= & \left(  D_{k}(D_{k}^{\prime}D_{k})^{-\frac{1}{2}}\hat
{N}_{2,R^{\ast}}\text{ }\vdots\text{ }D_{k\perp}\right)  ,
\end{array}
\]
and that an expression like (\ref{eq: simple KPST}) can be similarly shown to
apply to KPST$^{\ast}.$

Because $\hat{R}^{\ast}$ has the non-degenerate limiting distribution
(\ref{eq: CLT R_n}), the limiting distribution of KPST using $\hat{R}^{\ast}$
directly results from \cite[Corollary 1]{kleibergen2006grr} and is also
$\chi_{df}^{2}$. \medskip

\textbf{c. } To show (non-) invariance to orthonormal transformations of $\hat{V}_{i}$ and
$Z_{i},$ we consider a $p\times p$ dimensional orthonormal matrix $Q$ using
which we rotate $\hat{V}_{i}$ to become $Q\hat{V}_{i}$ so%
\[%
\begin{array}
[c]{rl}%
vec(Q\hat{V}_{i}\hat{V}_{i}^{\prime}Q^{\prime})= & (Q\otimes Q)vec(\hat{V}%
_{i}\hat{V}_{i}^{\prime})\\
= & (Q\otimes Q)D_{p}vech(\hat{V}_{i}\hat{V}_{i}^{\prime})\\
vech(Q\hat{V}_{i}\hat{V}_{i}^{\prime}Q^{\prime})= & (D_{p}^{\prime}D_{p}%
)^{-1}D_{p}^{\prime}(Q\otimes Q)D_{p}vech(\hat{V}_{i}\hat{V}_{i}^{\prime}),
\end{array}
\]
which implies that if we also rotate $Z_{i}$ by the $k\times k$ orthonormal
matrix $H:$
\[%
\begin{array}
[c]{l}%
\frac{1}{n}\sum_{i=1}^{n}vec(Q\hat{V}_{i}\hat{V}_{i}^{\prime}Q^{\prime
})vec(HZ_{i}Z_{i}^{\prime}H^{\prime})^{\prime}=\\
(Q\otimes Q)D_{p}\left[  \frac{1}{n}\sum_{i=1}^{n}vec(\hat{V}_{i}\hat{V}%
_{i}^{\prime})vec(Z_{i}Z_{i}^{\prime})^{\prime}\right]  D_{k}^{\prime
}(H\otimes H)^{\prime},
\end{array}
\]
and
\[%
\begin{array}
[c]{l}%
\frac{1}{n}\sum_{i=1}^{n}vech(Q\hat{V}_{i}\hat{V}_{i}^{\prime}Q^{\prime
})vech(HZ_{i}Z_{i}^{\prime}H^{\prime})^{\prime}=\\
(D_{p}^{\prime}D_{p})^{-1}D_{p}^{\prime}(Q\otimes Q)D_{p}\left[  \frac{1}%
{n}\sum_{i=1}^{n}vech(\hat{V}_{i}\hat{V}_{i}^{\prime})vech(Z_{i}Z_{i}^{\prime
})^{\prime}\right] \\
\times D_{k}^{\prime}(H\otimes H)^{\prime}D_{k}(D_{k}^{\prime}D_{k})^{-1}.
\end{array}
\]
Hence, because $Q$ and $H$ are orthonormal, this implies that the different
components of the SVD decomposition of $\mathcal{R}(\hat{R})$ in
(\ref{eq: SVD general}) with the transformed $\hat{R}$ become $(Q\otimes
Q)\hat{L},\hat{\Sigma}$ and $(H\otimes H)\hat{N}.$ Because these rotations
also transform the covariance matrix $\hat{V}$ to $(Q\otimes Q)\hat
{V}(H\otimes H)^{\prime},$ it immediately follows from the expression in KPST
in (\ref{eq: simple KPST}) that KPST is invariant to rotations of $V_{i}$ and
$Z_{i}.$

Let $\hat{R}_{T}^{\ast}$ be the $\hat{R}^{\ast}$ in (\ref{eq: R star})
obtained using the transformed data $Q\hat{V}_{i}$ and $HZ_{i},$ whose SVD is
\[%
\begin{array}
[c]{rl}%
\hat{R}_{T}^{\ast}= & \frac{1}{n}\sum_{i=1}^{n}vech(Q\hat{V}_{i}\hat{V}%
_{i}^{\prime}Q^{\prime})vech(HZ_{i}Z_{i}^{\prime}H^{\prime})^{\prime}\\
= & (D_{p}^{\prime}D_{p})^{-1}D_{p}^{\prime}(Q\otimes Q)D_{p}\left[  \frac
{1}{n}\sum_{i=1}^{n}vech(\hat{V}_{i}\hat{V}_{i}^{\prime})vech(Z_{i}%
Z_{i}^{\prime})^{\prime}\right] \\
& D_{k}^{\prime}(H\otimes H)^{\prime}D_{k}(D_{k}^{\prime}D_{k})^{-1}\\
= & (D_{p}^{\prime}D_{p})^{-1}D_{p}^{\prime}(Q\otimes Q)D_{p}\hat{L}^{\ast
}\hat{\Sigma}^{\ast}\hat{N}^{\ast\prime}D_{k}^{\prime}(H\otimes H)^{\prime
}D_{k}(D_{k}^{\prime}D_{k})^{-1}\\
= & \hat{L}_{T}^{\ast}\hat{\Sigma}_{T}^{\ast}\hat{N}_{T}^{\ast\prime},
\end{array}
\]
with $\hat{L}_{T}^{\ast}$ and $\hat{N}_{T}^{\ast}$ orthonormal $(\frac{1}%
{2}p(p+1)-1)\times(\frac{1}{2}p(p+1)-1)$ and $(\frac{1}{2}k(k+1)-1)\times
(\frac{1}{2}k(k+1)-1)$ dimensional matrices and $\hat{\Sigma}_{T}^{\ast}$ a
diagonal $(\frac{1}{2}p(p+1)-1)\times(\frac{1}{2}p(p+1)-1)$ dimensional matrix
with the singular values in non-increasing order on the main diagonal. Because
$(D_{p}^{\prime}D_{p})^{-1}D_{p}^{\prime}(Q\otimes Q)D_{p}$ and $(D_{k}%
^{\prime}D_{k})^{-1}D_{k}^{\prime}(H\otimes H)^{\prime}D_{k}$ are not
orthonormal, it follows that%
\[
\hat{L}_{T}^{\ast}\neq(D_{p}^{\prime}D_{p})^{-1}D_{p}^{\prime}(Q\otimes
Q)D_{p}\hat{L}^{\ast}\text{ and }\hat{N}_{T}^{\ast}\neq(D_{k}^{\prime}%
D_{k})^{-1}D_{k}^{\prime}(H\otimes H)^{\prime}D_{k}\hat{N}^{\ast}%
\]
and therefore also
\[
\hat{\Sigma}_{T}^{\ast}\neq\hat{\Sigma}^{\ast}.
\]
When substituting these expressions into the expression of KPST$^{\ast}$ in
(\ref{eq: kpst* alt}), it then follows that KPST$^{\ast}$ is not invariant to
orthonormal transformations of the data.

\textbf{d. }Under H$_{0}$ and joint limit sequences of $k,$ $p$ and
$n,$ we have to consider all components of $vec(\hat{\Lambda})$ in
(\ref{eq: lambda}) and its covariance matrix estimator.%
\begin{align*}
vec(\hat{\Lambda})=  &  \left(  vec(\hat{G}_{2})_{\perp}\otimes vec(\hat
{G}_{1})_{\perp}\right)  ^{\prime}vec(\mathcal{R}(\hat{R}))\\
=  &  \left(  \left[  vec(G_{2})_{\perp}+vec(\hat{G}_{2})_{\perp}%
-vec(G_{2})_{\perp}\right]  \otimes\left[  vec(G_{1})_{\perp}+vec(\hat{G}%
_{1})_{\perp}-vec(G_{1})_{\perp}\right]  \right)  ^{\prime}\\
&  vec\left(  \mathcal{R}(R)+\mathcal{R}(\hat{R})-\mathcal{R}(R)\right) \\
=  &  \left(  vec(G_{2})_{\perp}\otimes vec(G_{1})_{\perp}\right)  ^{\prime
}vec(\mathcal{R}(R))+\left(  \left[  vec(\hat{G}_{2})_{\perp}-vec(G_{2}%
)_{\perp}\right]  \otimes I_{p^{2}-1}\right)  ^{\prime}\\
&  vec(vec(G_{1})_{\perp}^{\prime}\mathcal{R}(R))+(vec(G_{2})_{\perp}\otimes
vec(G_{1})_{\perp})^{\prime}vec\left(  \mathcal{R}(\hat{R})-\mathcal{R}%
(R)\right)  +\\
&  \left(  \left[  vec(\hat{G}_{2})_{\perp}-vec(G_{2})_{\perp}\right]  \otimes
vec(G_{1})_{\perp}\right)  ^{\prime}vec\left(  \mathcal{R}(\hat{R}%
)-\mathcal{R}(R)\right)  +\\
&  \left(  I_{k^{2}-1}\otimes\left[  vec(\hat{G}_{1})_{\perp}-vec(G_{1}%
)_{\perp}\right]  \right)  ^{\prime}vec(\mathcal{R}(R)vec(G_{2})_{\perp})+\\
&  \left(  \left[  vec(\hat{G}_{2})_{\perp}-vec(G_{2})_{\perp}\right]
\otimes\left[  vec(\hat{G}_{1})_{\perp}-vec(G_{1})_{\perp}\right]  \right)
^{\prime}vec(\mathcal{R}(R))+\\
&  \left(  \left[  vec(\hat{G}_{2})_{\perp}-vec(G_{2})_{\perp}\right]
\otimes\left[  vec(\hat{G}_{1})_{\perp}-vec(G_{1})_{\perp}\right]  \right)
^{\prime}vec\left(  \mathcal{R}(\hat{R})-\mathcal{R}(R)\right)  +\\
&  \left(  vec(G_{2})_{\perp}\otimes\left[  vec(\hat{G}_{1})_{\perp}%
-vec(G_{1})_{\perp}\right]  \right)  ^{\prime}vec\left(  \mathcal{R}(\hat
{R})-\mathcal{R}(R)\right) \\
=  &  \left(  vec(G_{2})_{\perp}\otimes vec(G_{1})_{\perp}\right)  ^{\prime
}vec\left(  \mathcal{R}(\hat{R})-\mathcal{R}(R)\right)  +\\
&  \left(  \left[  vec(\hat{G}_{2})_{\perp}-vec(G_{2})_{\perp}\right]  \otimes
vec(G_{1})_{\perp}\right)  ^{\prime}vec\left(  \mathcal{R}(\hat{R}%
)-\mathcal{R}(R)\right)  +\\
&  \left(  \left[  vec(\hat{G}_{2})_{\perp}-vec(G_{2})_{\perp}\right]
\otimes\left[  vec(\hat{G}_{1})_{\perp}-vec(G_{1})_{\perp}\right]  \right)
^{\prime}vec(\mathcal{R}(R))+\\
&  \left(  \left[  vec(\hat{G}_{2})_{\perp}-vec(G_{2})_{\perp}\right]
\otimes\left[  vec(\hat{G}_{1})_{\perp}-vec(G_{1})_{\perp}\right]  \right)
^{\prime}vec\left(  \mathcal{R}(\hat{R})-\mathcal{R}(R)\right)  +\\
&  (vec(G_{2})_{\perp}\otimes\left[  vec(\hat{G}_{1})_{\perp}-vec(G_{1}%
)_{\perp}\right]  )^{\prime}vec\left[  \mathcal{R}(\hat{R})-\mathcal{R}%
(R)\right] \\
=  &  a+b+c
\end{align*}
for
\begin{align*}
a:=  &  (vec(G_{2})_{\perp}\otimes vec(G_{1})_{\perp})^{\prime}vec\left(
\mathcal{R}(\hat{R})-\mathcal{R}(R)\right)  ,\\
b:=  &  \left(  \left[  vec(\hat{G}_{2})_{\perp}-vec(G_{2})_{\perp}\right]
\otimes vec(G_{1})_{\perp}\right)  ^{\prime}vec\left(  \mathcal{R}(\hat
{R})-\mathcal{R}(R)\right)  +\\
&  \left(  \left[  vec(\hat{G}_{2})_{\perp}-vec(G_{2})_{\perp}\right]
\otimes\left[  vec(\hat{G}_{1})_{\perp}-vec(G_{1})_{\perp}\right]  \right)
^{\prime}vec(\mathcal{R}(R))+\\
&  \left(  vec(G_{2})_{\perp}\otimes\left[  vec(\hat{G}_{1})_{\perp}%
-vec(G_{1})_{\perp}\right]  \right)  ^{\prime}vec\left(  \mathcal{R}(\hat
{R})-\mathcal{R}(R)\right)  ,\\
c:=  &  \left(  \left[  vec(\hat{G}_{2})_{\perp}-vec(G_{2})_{\perp}\right]
\otimes\left[  vec(\hat{G}_{1})_{\perp}-vec(G_{1})_{\perp}\right]  \right)
^{\prime}vec\left(  \mathcal{R}(\hat{R})-\mathcal{R}(R)\right)  .
\end{align*}

In the derivation above, we use that under H$_{0}$, $\mathcal{R}%
(R)=vec(G_{1})vec(G_{2})^{\prime},$ see (\ref{eq: red rank cov}). Therefore
$vec(G_{1})_{\perp}^{\prime}\mathcal{R}(R)=0,$ $\mathcal{R}(R)vec(G_{2}%
)_{\perp}=0$. The limit behavior of KPST results from the limit behavior of
$a.$ We specify both $vec(G_{1})_{\perp}$ and $vec(G_{2})_{\perp},$ whose
dimensions increase as $k$ and $p$ get larger, as orthonormal matrices,
$vec(G_{1})_{\perp}^{\prime}vec(G_{1})_{\perp}\equiv I_{\frac{1}{2}k(k-1)}$
and $vec(G_{2})_{\perp}^{\prime}vec(G_{2})_{\perp}\equiv I_{\frac{1}{2}%
p(p-1)}.$ Hence the length of each column of $vec(G_{1})_{\perp}$ and
$vec(G_{2})_{\perp}$ equals one and does not change when $k$ and/or $p$ increase.

From (\ref{eq: CLT R_n}), it follows that $\mathcal{R}(\hat{R})-\mathcal{R}%
(R)=O_{p}(n^{-\frac{1}{2}})$, and so the same holds for the convergences rates
of $vec(\hat{G}_{1})_{\perp}-vec(G_{1})_{\perp}$ and $vec(\hat{G}_{2})_{\perp
}-vec(G_{2})_{\perp},$ see \cite{kleibergen2006grr}. Because $vec(\hat{G}%
_{1})_{\perp}$ and $vec(\hat{G}_{2})_{\perp}$ are solved from $\mathcal{R}%
(\hat{R}),$ it follows that $\mathcal{R}(\hat{R})-\mathcal{R}(R),$
$vec(\hat{G}_{1})_{\perp}-vec(G_{1})_{\perp}$ and $vec(\hat{G}_{2})_{\perp
}-vec(G_{2})_{\perp}$ are all jointly dependent. In a limiting sequence where
the dimensions $p$ and $k$ jointly increase with the sample size $n,$ we then
have the following convergence rates:%
\[%
\begin{array}
[c]{rrl}%
1. & (vec(G_{2})_{\perp}\otimes vec(G_{1})_{\perp})^{\prime}vec\left(
\mathcal{R}(\hat{R})-\mathcal{R}(R)\right)  & =O_{p}\left(  n^{-\frac{1}{2}%
}\right) \\
2. & \left(  \left[  vec(\hat{G}_{2})_{\perp}-vec(G_{2})_{\perp}\right]
\otimes vec(G_{1})_{\perp}\right)  ^{\prime}vec\left(  \mathcal{R}(\hat
{R})-\mathcal{R}(R)\right)  & =O_{p}%
\genfrac{(}{)}{}{}{k^{2}}{n}%
\\
3. & \left(  vec(G_{2})_{\perp}\otimes\left[  vec(\hat{G}_{1})_{\perp
}-vec(G_{1})_{\perp}\right]  \right)  ^{\prime}vec\left(  \mathcal{R}(\hat
{R})-\mathcal{R}(R)\right)  & =O_{p}%
\genfrac{(}{)}{}{}{p^{2}}{n}%
\\
4. & \left(  \left[  vec(\hat{G}_{2})_{\perp}-vec(G_{2})_{\perp}\right]
\otimes\left[  vec(\hat{G}_{1})_{\perp}-vec(G_{1})_{\perp}\right]  \right)
^{\prime}vec(\mathcal{R}(R)) & =O_{p}%
\genfrac{(}{)}{}{}{(pk)^{2}}{n}%
\\
5. & \left(  \left[  vec(\hat{G}_{2})_{\perp}-vec(G_{2})_{\perp}\right]
\otimes\left[  vec(\hat{G}_{1})_{\perp}-vec(G_{1})_{\perp}\right]  \right)
^{\prime} & \\
& vec\left(  \mathcal{R}(\hat{R})-\mathcal{R}(R)\right)  & =O_{p}%
\genfrac{(}{)}{}{}{p^{2}k^{2}}{n\sqrt{n}}%
.
\end{array}
\]
The individual elements of each of the above five components result from
multiplying the first KPS matrix with the second vectorized matrix. This
multiplication implies that the individual elements equal weighted summations
where the number of elements where we sum over increases with the sequence of
$k$ and $p.$ This affects the convergence rate of the individual elements. The
convergence rate of the individual elements is then a function of the sum of
the involved weights and the convergence rates of the multiplied components.
Along these lines, we next establish the convergence rate for, say, the $q$-th
element of each of the five components in the above expression:%
\begin{align*}
1.  &  \left[  \left(  vec(G_{2})_{\perp}\otimes vec(G_{1})_{\perp}\right)
^{\prime}vec\left(  \mathcal{R}(\hat{R})-\mathcal{R}(R)\right)  \right]
_{q}\\
&
\begin{array}
[c]{cl}%
=\sum_{i=1}^{p^{2}}\sum_{j=1}^{k^{2}} & \left[  vec(G_{2})_{\perp}\right]
_{jm}\left[  vec(G_{1})_{\perp}\right]  _{il}\left[  vec\left(  \mathcal{R}%
(\hat{R})-\mathcal{R}(R)\right)  \right]  _{(j-1)k^{2}+i},
\end{array}
\end{align*}
for $m=1+\lfloor(q-1)/(k^{2}-1)\rfloor,$ $l=q-(p^{2}-1)(m-1),$ with $\lfloor
b\rfloor$ the entier function of a scalar $b$, which is of order $O_{p}\left(
n^{-\frac{1}{2}}\right)  .$ This convergence rate follows because $vec\left(
\mathcal{R}(\hat{R})-\mathcal{R}(R)\right)  $ is $O_{p}\left(  n^{-\frac{1}%
{2}}\right)  $ and $vec(G_{1})_{\perp}$ and $vec(G_{2})_{\perp}$ are both
orthonormal matrices. The sum of the weights $\left[  vec(G_{2})_{\perp
}\right]  _{jm}$ and $\left[  vec(G_{1})_{\perp}\right]  _{il}$ $i=1,\ldots
p^{2},$ $j=1,\ldots,k^{2}$ in the above summation is therefore finite and does
not grow with the sequence of $k$ and $p.$ Hence, it does not affect the
convergence rate which then results from $\mathcal{R}(\hat{R})-\mathcal{R}%
(R)=O_{p}\left(  n^{-\frac{1}{2}}\right)  .$%

\begin{align*}
2.  &  \left[  \left(  \left[  vec(\hat{G}_{2})_{\perp}-vec(G_{2})_{\perp
}\right]  \otimes vec(G_{1})_{\perp}\right)  ^{\prime}vec\left(
\mathcal{R}(\hat{R})-\mathcal{R}(R)\right)  \right]  _{q}\\
&
\begin{array}
[c]{cl}%
=\sum_{i=1}^{p^{2}}\sum_{j=1}^{k^{2}} & \left[  vec(\hat{G}_{2})_{\perp
}-vec(G_{2})_{\perp}\right]  _{jm}\left[  vec(G_{1})_{\perp}\right]
_{il}\left[  vec\left(  \mathcal{R}(\hat{R})-\mathcal{R}(R)\right)  \right]
_{(j-1)k^{2}+i},
\end{array}
\end{align*}
for $m=1+\lfloor(q-1)/(k^{2}-1)\rfloor,$ $l=q-(p^{2}-1)(m-1),$ which is of
order $O_{p}%
\genfrac{(}{)}{}{}{k^{2}}{n}%
.$ This order results from the $k^{2}$ dependent components $\left[
vec(\hat{G}_{2})_{\perp}-vec(G_{2})_{\perp}\right]  _{jm}$ and $\left[
vec\left(  \mathcal{R}(\hat{R})-\mathcal{R}(R)\right)  \right]  _{(j-1)k^{2}%
+i}$ that we sum over and that the sum of the weights in the summation is
proportional to $k^{2}.$ Each of the (dependent) components in $\left[
vec(\hat{G}_{2})_{\perp}-vec(G_{2})_{\perp}\right]  _{jm}$ and $\left[
vec\left(  \mathcal{R}(\hat{R})-\mathcal{R}(R)\right)  \right]  _{(j-1)k^{2}%
+i}$ are $O_{p}(n^{-\frac{1}{2}})$ so summing over $k^{2}$ of them and
multiplying through results in $O_{p}%
\genfrac{(}{)}{}{}{k^{2}}{n}%
.$ The additional weights $\left[  vec(G_{1})_{\perp}\right]  _{il},$
$i=1,\ldots,p^{2},$ are again such that their sum is finite so it does not
grow with the sequence of $k$ and $p$ because $vec(G_{1})_{\perp}$ is
orthonormal. Hence, they do not affect the convergence rate.
\begin{align*}
3.  &  \left[  \left(  \left[  vec(G_{2})_{\perp}\right]  \otimes\left[
vec(\hat{G}_{1})_{\perp}-vec(G_{1})_{\perp}\right]  \right)  ^{\prime
}vec\left(  \mathcal{R}(\hat{R})-\mathcal{R}(R)\right)  \right]  _{q}\\
&
\begin{array}
[c]{cl}%
=\sum_{i=1}^{p^{2}}\sum_{j=1}^{k^{2}} & \left[  vec(G_{2})_{\perp}\right]
_{jm}\left[  vec(\hat{G}_{1})_{\perp}-vec(G_{1})_{\perp}\right]  _{il}\left[
vec\left(  \mathcal{R}(\hat{R})-\mathcal{R}(R)\right)  \right]  _{(j-1)k^{2}%
+i},
\end{array}
\end{align*}
which is of order $O_{p}%
\genfrac{(}{)}{}{}{p^{2}}{n}%
.$ The argument for this convergence rate is identical to the one for $2.$%
\begin{align*}
4.  &  \left[  \left(  \left[  vec(\hat{G}_{2})_{\perp}-vec(G_{2})_{\perp
}\right]  \otimes\left[  vec(\hat{G}_{1})_{\perp}-vec(G_{1})_{\perp}\right]
\right)  ^{\prime}vec\left(  \mathcal{R}(R)\right)  \right]  _{q}\\
&
\begin{array}
[c]{cl}%
=\sum_{i=1}^{p^{2}}\sum_{j=1}^{k^{2}} & \left[  vec(\hat{G}_{2})_{\perp
}-vec(G_{2})_{\perp}\right]  _{jm}\left[  vec(\hat{G}_{1})_{\perp}%
-vec(G_{1})_{\perp}\right]  _{il}\left[  vec\left(  \mathcal{R}(R)\right)
\right]  _{(j-1)k^{2}+i},
\end{array}
\end{align*}
which is of order $O_{p}%
\genfrac{(}{)}{}{}{(pk)^{2}}{n}%
.$ This order results from the double sum over $p^{2}$ random variables in
$\left[  vec(\hat{G}_{2})_{\perp}-vec(G_{2})_{\perp}\right]  $ and $k^{2}$
random variables in $\left[  vec(\hat{G}_{1})_{\perp}-vec(G_{1})_{\perp
}\right]  $ which are dependent. The sum of the weights is then proportional
to $(pk)^{2}$ and because the convergence rates of $\left[  vec(\hat{G}%
_{1})_{\perp}-vec(G_{1})_{\perp}\right]  $ and $\left[  vec(\hat{G}%
_{2})_{\perp}-vec(G_{2})_{\perp}\right]  $ are both $O_{p}(n^{-\frac{1}{2}}),$
this then leads to the $O_{p}%
\genfrac{(}{)}{}{}{(pk)^{2}}{n}%
$ convergence rate.
\begin{align*}
5.  &  \left[  \left(  \left[  vec(\hat{G}_{2})_{\perp}-vec(G_{2})_{\perp
}\right]  \otimes\left[  vec(\hat{G}_{1})_{\perp}-vec(G_{1})_{\perp}\right]
\right)  ^{\prime}vec\left(  \mathcal{R}(\hat{R})-\mathcal{R}(R)\right)
\right]  _{q}\\
&
\begin{array}
[c]{cl}%
=\sum_{i=1}^{p^{2}}\sum_{j=1}^{k^{2}} & \left[  vec(\hat{G}_{2})_{\perp
}-vec(G_{2})_{\perp}\right]  _{jm}\left[  vec(\hat{G}_{1})_{\perp}%
-vec(G_{1})_{\perp}\right]  _{il}\left[  vec\left(  \mathcal{R}(\hat
{R})-\mathcal{R}(R)\right)  \right]  _{(j-1)k^{2}+i},
\end{array}
\end{align*}
is of order $O_{p}%
\genfrac{(}{)}{}{}{(pk)^{2}}{n\sqrt{n}}%
$ which follows allong the lines of the above results.

For the limit behavior of $\sqrt{n}\hat{\Lambda}$ to just result from 1 (and
in consequence, the limit distribution of KPST to remain unaffected) it is
then sufficient to have joint limit sequences that satisfy:
\[%
\begin{array}
[c]{c}%
\frac{(pk)^{2}}{\sqrt{n}}\rightarrow0.
\end{array}
\]

For the estimator of the covariance matrix of $\hat{\Lambda},$ we further have%
\[%
\begin{array}
[c]{cc}%
\left(  \left[  vec(\hat{G}_{2})\right]  _{\perp}^{\prime}\otimes\left[
vec(\hat{G}_{1})\right]  _{\perp}^{\prime}\right)  \widehat{\text{cov(}%
}vec(\mathcal{R}(\hat{R})))\left(  \left[  vec(\hat{G}_{2})\right]  _{\perp
}^{\prime}\otimes\left[  vec(\hat{G}_{1})\right]  _{\perp}^{\prime}\right)  &
=\\
\left(  \left[  vec(G_{2})_{\perp}+vec(\hat{G}_{2})_{\perp}-vec(G_{2})_{\perp
}\right]  ^{\prime}\otimes\left[  vec(G_{1})_{\perp}+vec(\hat{G}_{1})_{\perp
}-vec(G_{1})_{\perp}\right]  \right)  ^{\prime} & \\
\left(  \text{cov(}vec(\mathcal{R}(\hat{R})))+\widehat{\text{cov(}%
}vec(\mathcal{R}(\hat{R})))-\text{cov(}vec(\mathcal{R}(\hat{R})))\right)  & \\
\left(  \left[  vec(G_{2})_{\perp}+vec(\hat{G}_{2})_{\perp}-vec(G_{2})_{\perp
}\right]  \otimes\left[  vec(G_{1})_{\perp}+vec(\hat{G}_{1})_{\perp}%
-vec(G_{1})_{\perp}\right]  \right)  & =\\
\left(  vec(G_{2})_{\perp}^{\prime}\otimes vec(G_{1})_{\perp}^{\prime}\right)
\text{cov(}vec(\mathcal{R}(\hat{R})))\left(  vec(G_{2})_{\perp}^{\prime
}\otimes vec(G_{1})_{\perp}^{\prime}\right)  ^{\prime}+U & =\\
A_{1}+B_{1}+B_{2}+B_{3}+C_{1}+C_{2}+C_{3}+C_{4}+C_{5}+C_{6}+D_{1}+\text{
}\ldots &
\end{array}
\]
where below we show that the maximal convergence rates besides the zero-th
order component are $O_{p}(n^{-\frac{1}{2}})$, $O_{p}\left(  \frac{(pk)^{2}%
}{n}\right)  ,$ $O_{p}\left(  \frac{k^{4}}{n}\right)  ,$ $O_{p}\left(
\frac{p^{4}}{n}\right)  $ and $O_{p}\left(  \frac{k^{4}p^{4}}{n\sqrt{n}%
}\right)  .$ All these rates appear in an identical manner in the inverse of
the estimator of the covariance matrix.\footnote{To show this, one can use the
Woodbury matrix identity which implies that for invertible $m\times m$
matrices $H$ and $G,$ with $H+G$ also invertible: $(H+G)^{-1}=H^{-1}%
-H^{-1}(G^{-1}+H^{-1})^{-1}H^{-1}.$} When taking the resulting inverse and
accounting for the summations over the $k^{2}p^{2}$ components in
$vec(\hat{\Lambda}),$ we obtain a slightly stronger condition than just for
$\hat{\Lambda}:$%
\[%
\begin{array}
[c]{c}%
\frac{(pk)^{16}}{n^{3}}\rightarrow0,
\end{array}
\]
which results from the $O_{p}(n^{-\frac{1}{2}})$ components from the inverse
of the covariance matrix estimator paired with the $O_{p}%
\genfrac{(}{)}{}{}{(pk)^{2}}{n}%
$ components from $\hat{\Lambda}$ corrected for the multiplication by $n$ and
the double summation over $p^{2}k^{2}$ components.\footnote{All combined we
get: $O_{p}\left(  n%
\genfrac{(}{)}{}{}{p^{2}k^{2}}{n}%
\genfrac{(}{)}{}{}{p^{2}k^{2}}{n}%
\frac{1}{\sqrt{n}}(k^{2}p^{2})^{2}\right)  =O_{p}\left(
\genfrac{(}{)}{}{}{(p^{2}k^{2})^{4}}{n\sqrt{n}}%
\right)  =O_{p}%
\genfrac{(}{)}{}{}{(pk)^{16}}{N^{3}}%
.$} The rate that would result from $\hat{\Lambda}$ is $\frac{(pk)^{12}}%
{n^{3}}\rightarrow0.$ The convergence rate is in between the rate implied by
\cite{newey2009gmm} which would be $\frac{k^{4}p^{4}}{n}$ for
$\hat{\Lambda}$ and $\frac{k^{6}p^{6}}{n}$ for convergence of the test
statistic which is slightly stricter than our rate of $\frac{(pk)^{16}}{n^{3}%
}\rightarrow0.$

Below, we state the rates of the different $A,$ $B,$ $C$ and $D$ (third order
error) components where we only provide the rate for one of the $D$ components
because we just showed that they do not lead to the largest error rate because
the $O_{p}\left(  \frac{k^{4}p^{4}}{n\sqrt{n}}\right)  $ is less than the
$O_{p}\left(  \frac{p^{2}k^{2}}{n}\right)  $ that results from some of the $C$
components.%
\begin{align*}
A_{1}=  &  \left(  vec(G_{2})_{\perp}^{\prime}\otimes vec(G_{1})_{\perp
}^{\prime}\right)  \text{cov(}vec(\mathcal{R}(\hat{R})))\left(  vec(G_{2}%
)_{\perp}\otimes vec(G_{1})_{\perp}\right)  &  &  =O(1)\\
B_{1}=  &  \left(  \left[  vec(\hat{G}_{2})_{\perp}-vec(G_{2})_{\perp}\right]
^{\prime}\otimes vec(G_{1})_{\perp}^{\prime}\right)  \text{cov(}%
vec(\mathcal{R}(\hat{R}))) &  & \\
&  \left(  vec(G_{2})_{\perp}\otimes vec(G_{1})_{\perp}\right)  +\left(
vec(G_{2})_{\perp}^{\prime}\otimes vec(G_{1})_{\perp}^{\prime}\right)  &  & \\
&  \text{cov(}vec(\mathcal{R}(\hat{R})))\left(  \left[  vec(\hat{G}%
_{2})_{\perp}-vec(G_{2})_{\perp}\right]  \otimes vec(G_{1})_{\perp}\right)  &
&  =O_{p}(n^{-\frac{1}{2}})\\
B_{2}=  &  \left(  vec(G_{2})_{\perp}^{\prime}\otimes\left[  vec(\hat{G}%
_{1})_{\perp}-vec(G_{1})_{\perp}\right]  ^{\prime}\right)  \text{cov(}%
vec(\mathcal{R}(\hat{R}))) &  & \\
&  \left(  vec(G_{2})_{\perp}^{\prime}\otimes vec(G_{1})_{\perp}^{\prime
}\right)  ^{\prime}+\left(  vec(G_{2})_{\perp}^{\prime}\otimes vec(G_{1}%
)_{\perp}^{\prime}\right)  &  & \\
&  \text{cov(}vec(\mathcal{R}(\hat{R})))\left(  vec(G_{2})_{\perp}%
\otimes\left[  vec(\hat{G}_{1})_{\perp}-vec(G_{1})_{\perp}\right]  \right)  &
&  =O_{p}(n^{-\frac{1}{2}})\\
B_{3}=  &  \left(  vec(G_{2})_{\perp}^{\prime}\otimes vec(G_{1})_{\perp
}^{\prime}\right)  \left[  \widehat{\text{cov(}}vec(\mathcal{R}(\hat
{R})))-\text{cov(}vec(\mathcal{R}(\hat{R})))\right]  &  & \\
&  \left(  vec(G_{2})_{\perp}^{\prime}\otimes vec(G_{1})_{\perp}^{\prime
}\right)  ^{\prime} &  &  =O_{p}(n^{-\frac{1}{2}})\\
C_{1}=  &  \left(  \left[  vec(\hat{G}_{2})_{\perp}-vec(G_{2})_{\perp}\right]
^{\prime}\otimes\left[  vec(\hat{G}_{1})_{\perp}-vec(G_{1})_{\perp}\right]
^{\prime}\right)  &  & \\
&  \text{cov(}vec(\mathcal{R}(\hat{R})))\left(  vec(G_{2})_{\perp}\otimes
vec(G_{1})_{\perp}\right)  +\left(  vec(G_{2})_{\perp}^{\prime}\otimes
vec(G_{1})_{\perp}^{\prime}\right)  &  & \\
&  \text{cov(}vec(\mathcal{R}(\hat{R})))\left(  \left[  vec(\hat{G}%
_{2})_{\perp}-vec(G_{2})_{\perp}\right]  \otimes\left[  vec(\hat{G}%
_{1})_{\perp}-vec(G_{1})_{\perp}\right]  \right)  &  &  =O_{p}\left(
\frac{(pk)^{2}}{n}\right) \\
C_{2}=  &  \left(  \left[  vec(\hat{G}_{2})_{\perp}-vec(G_{2})_{\perp}\right]
^{\prime}\otimes vec(G_{1})_{\perp}^{\prime}\right)  \text{cov(}%
vec(\mathcal{R}(\hat{R}))) &  & \\
&  \left(  \left[  vec(\hat{G}_{2})_{\perp}-vec(G_{2})_{\perp}\right]  \otimes
vec(G_{1})_{\perp}\right)  &  &  =O_{p}\left(  \frac{k^{4}}{n}\right) \\
C_{3}=  &  \left(  vec(G_{2})_{\perp}^{\prime}\otimes\left[  vec(\hat{G}%
_{1})_{\perp}-vec(G_{1})_{\perp}\right]  ^{\prime}\right)  \text{cov(}%
vec(\mathcal{R}(\hat{R}))) &  & \\
&  \left(  vec(G_{2})_{\perp}\otimes\left[  vec(\hat{G}_{1})_{\perp}%
-vec(G_{1})_{\perp}\right]  \right)  &  &  =O_{p}\left(  \frac{p^{4}}%
{n}\right) \\
C_{4}=  &  \left(  \left[  vec(\hat{G}_{2})_{\perp}-vec(G_{2})_{\perp}\right]
^{\prime}\otimes vec(G_{1})_{\perp}^{\prime}\right)  \text{cov(}%
vec(\mathcal{R}(\hat{R}))) &  & \\
&  \left(  vec(G_{2})_{\perp}\otimes\left[  vec(\hat{G}_{1})_{\perp}%
-vec(G_{1})_{\perp}\right]  \right)  + &  & \\
&  \left(  vec(G_{2})_{\perp}^{\prime}\otimes\left[  vec(\hat{G}_{1})_{\perp
}-vec(G_{1})_{\perp}\right]  ^{\prime}\right)  \text{cov(}vec(\mathcal{R}%
(\hat{R}))) &  & \\
&  \left(  \left[  vec(\hat{G}_{2})_{\perp}-vec(G_{2})_{\perp}\right]  \otimes
vec(G_{1})_{\perp}\right)  &  &  =O_{p}\left(  \frac{p^{2}k^{2}}{n}\right) \\
C_{5}=  &  \left(  \left[  vec(\hat{G}_{2})_{\perp}-vec(G_{2})_{\perp}\right]
^{\prime}\otimes vec(G_{1})_{\perp}^{\prime}\right)  \left[
\widehat{\text{cov(}}vec(\mathcal{R}(\hat{R})))-\right.  &  & \\
&  \left.  \text{cov(}vec(\mathcal{R}(\hat{R})))\right]  \left(
vec(G_{2})_{\perp}\otimes vec(G_{1})_{\perp}\right)  + &  & \\
&  \left(  vec(G_{2})_{\perp}^{\prime}\otimes vec(G_{1})_{\perp}^{\prime
}\right)  \left[  \widehat{\text{cov(}}vec(\mathcal{R}(\hat{R})))-\text{cov(}%
vec(\mathcal{R}(\hat{R})))\right]  &  & \\
&  \left(  \left[  vec(\hat{G}_{2})_{\perp}-vec(G_{2})_{\perp}\right]  \otimes
vec(G_{1})_{\perp}\right)  &  &  =O_{p}\left(  \frac{p^{2}k^{2}}{n}\right) \\
C_{6}=  &  \left(  vec(G_{2})_{\perp}^{\prime}\otimes\left[  vec(\hat{G}%
_{1})_{\perp}-vec(G_{1})_{\perp}\right]  ^{\prime}\right)  \left[
\widehat{\text{cov(}}vec(\mathcal{R}(\hat{R})))-\right.  &  & \\
&  \left.  \text{cov(}vec(\mathcal{R}(\hat{R})))\right]  \left(
vec(G_{2})_{\perp}\otimes vec(G_{1})_{\perp}\right)  + &  & \\
&  \left(  vec(G_{2})_{\perp}^{\prime}\otimes vec(G_{1})_{\perp}^{\prime
}\right)  \left[  \widehat{\text{cov(}}vec(\mathcal{R}(\hat{R})))-\text{cov(}%
vec(\mathcal{R}(\hat{R})))\right]  &  & \\
&  \left(  vec(G_{2})_{\perp}\otimes\left[  vec(\hat{G}_{1})_{\perp}%
-vec(G_{1})_{\perp}\right]  \right)  &  &  =O_{p}\left(  \frac{p^{2}k^{2}}%
{n}\right) \\
D_{1}=  &  \left(  \left[  vec(\hat{G}_{2})_{\perp}-vec(G_{2})_{\perp}\right]
^{\prime}\otimes\left[  vec(\hat{G}_{1})_{\perp}-vec(G_{1})_{\perp}\right]
^{\prime}\right)  &  & \\
&  \text{cov(}vec(\mathcal{R}(\hat{R})))^{\prime}\left(  \left[  vec(\hat
{G}_{2})_{\perp}-vec(G_{2})_{\perp}\right]  _{\perp}\otimes vec(G_{1})_{\perp
}\right)  + &  & \\
&  \left(  \left[  vec(\hat{G}_{2})_{\perp}-vec(G_{2})_{\perp}\right]
_{\perp}\otimes vec(G_{1})_{\perp}\right)  ^{\prime}\text{cov(}vec(\mathcal{R}%
(\hat{R}))) &  & \\
&  \left(  \left[  vec(\hat{G}_{2})_{\perp}-vec(G_{2})_{\perp}\right]
\otimes\left[  vec(\hat{G}_{1})_{\perp}-vec(G_{1})_{\perp}\right]  \right)  &
&  =O_{p}\left(  \frac{k^{4}p^{4}}{n\sqrt{n}}\right)
\end{align*}

\noindent\textbf{Proof of Theorem \ref{th: kps test power}.} 
Note first that for $i=1,2$,
\begin{equation}
n^{1/2}(\bar{G}_{i,n}-G_{i})\rightarrow g_{i}  \label{eq: conv}
\end{equation}
as $n\rightarrow \infty $ for certain matrices $g_{i}.$ This can be shown as
follows. Consider $M(t)=vec(G_{1})vec(G_{2})^{\prime }+t\mathcal{R}(A_{0})$
for $t\in R.$ For $|t|$ small enough, $\sigma _{1}(M(t)),$ the largest
singular value of $M(t),$ is simple and obviously $M(t)$ depends
differentiably on $t.$ By Theorems 7 and 8 in \cite{lax2007} it follows that for
small $|t|,$ $\sigma _{1}(M(t))$ depends differentiably on $t$ and that
there exists left and right eigenvectors corresponding to $\sigma _{1}(M(t))$
that depend differentiably on $t.$\footnote{%
The results in \cite{lax2007} are formulated for square matrices and spectral
decompositions but immediately translate to rectangular matrices $M$ and
their SVDs by considering $M^{\prime }M.$%
\par
{}} From (\ref{eq: g1g2 est}) we know that $vec(\bar{G}_{1,n})$ equals the
normalized left eigenvector, $L_{1}(n^{-1/2})$ say, corresponding to $\sigma
_{1}(M(n^{-1/2})).$ Given that $L_{1}(t)$ is differentiable at $t=0$ it
follows that $n^{1/2}(L_{1}(n^{-1/2})-L_{1}(0))$ converges to some vector $%
vec(g_{1})\in \mathbb{R}^{p^{2}}.$ But this proves the claim for $\bar{G}%
_{1,n}.$ The proof for $\bar{G}_{2,n}$ is identical. Intuitively, $(vec(G_1)+O(n^{-a}))(vec(G_2)+O(n^{-b}))'=vec(G_1)vec(G_2)'+O(n^{-1/2})$ implies that $a=b\geq 1/2$, see \cite{kleibergen2006grr}.

Second, note that $vec(\bar{G}_{i,n})_{\perp }$ for $i=1,2$ can be specified
such that 
\begin{equation}
vec(\bar{G}_{i,n})_{\perp }-vec(G_{i})_{\perp }=O(n^{-1/2}).
\label{eq: conv2}
\end{equation}%
Namely, let $vec(G_{1})_{\perp }=(v_{2},...,v_{p^{2}}).$ Clearly $(vec(\bar{G%
}_{1,n}),v_{2},...,v_{p^{2}})$ will be of full rank for all $n$ large enough
and $vec(\bar{G}_{1,n})_{\perp }$ can be obtained as the last $p^{2}-1$
vectors by performing Gram-Schmidt orthogonalization to $(vec(\bar{G}%
_{1,n}),v_{2},...,v_{p^{2}})$. E.g., the first column of $vec(\bar{G}%
_{1,n})_{\perp }$ (before normalizing its length to one) then equals 
\begin{eqnarray*}
&&v_{2}-[vec(\bar{G}_{1,n})^{\prime }v_{2}]vec(\bar{G}_{1,n})/||vec(\bar{G}%
_{1,n})|| \\
&=&v_{2}-[(vec(G_{1})+n^{-1/2}g_{1}+o(n^{-1/2}))^{\prime }v_{2}]vec(\bar{G}%
_{1,n})/||vec(\bar{G}_{1,n})|| \\
&=&v_{2}-[(n^{-1/2}g_{1}+o(n^{-1/2}))^{\prime }v_{2}]vec(\bar{G}%
_{1,n})/||vec(\bar{G}_{1,n})|| \\
&=&v_{2}+O(n^{-1/2}),
\end{eqnarray*}%
where in the first equality we use (\ref{eq: conv}) and in the second
equality $vec(G_{1})^{\prime }v_{2}=0.$ Continuing further with Gram-Schmidt
orthogonalization with the other columns of $vec(\bar{G}_{1,n})_{\perp }$
yields the desired result (\ref{eq: conv2}). The result for $%
vec(G_{2})_{\perp }$ is established analogously.

Next, by the definition of $\Lambda _{n}$ in (\ref{eq: Lambda_n}), we have%
\begin{equation*}
\lim_{n\rightarrow \infty }\sqrt{n}\Lambda _{n}=\lim_{n\rightarrow \infty }%
\sqrt{n}vec(\bar{G}_{1,n})_{\perp }^{\prime }\left(
vec(G_{1})vec(G_{2})^{\prime }+\frac{1}{\sqrt{n}}\mathcal{R}(A_{0})\right)
vec(\bar{G}_{2,n})_{\perp }=a_{0},
\end{equation*}%
where for the last equality we use (\ref{eq: conv2}). Namely,%
\begin{eqnarray*}
&&\sqrt{n}vec(\bar{G}_{1,n})_{\perp }^{\prime }vec(G_{1})vec(G_{2})^{\prime
}vec(\bar{G}_{2,n})_{\perp } \\
&=&(\sqrt{n}vec(G_{1})_{\perp }+O(1))^{\prime }vec(G_{1})vec(G_{2})^{\prime
}(vec(G_{2})_{\perp }+O(n^{-1/2})) \\
&=&O(1)vec(G_{1})vec(G_{2})^{\prime }O(n^{-1/2}) \\
&=&O(n^{-1/2}).
\end{eqnarray*}

Recall that KPST is defined as a quadratic form in $\sqrt{n}vec(\hat{%
\Lambda}).$ To derive the limiting distribution of the latter quantity, note
first that its asymptotic variance is given by 
\begin{equation*}
\begin{array}{rl}
V_{\Lambda }:= & \lim_{n\rightarrow \infty }\left[ vec(\bar{G}_{2,n})_{\perp
}^{\prime }\otimes vec(\bar{G}_{1,n})_{\perp }^{\prime }\right] cov\left( 
\mathcal{R}\left( \hat{R}\right) \right) \\ 
& \left[ vec(\bar{G}_{2,n})_{\perp }\otimes vec(\bar{G}_{1,n})_{\perp }%
\right] \\ 
= & \left( vec(G_{2})_{\perp }^{\prime }\otimes vec(G_{1})_{\perp }^{\prime
}\right) (D_{k}\otimes D_{p})V_{R^{\ast }} \\ 
& (D_{k}\otimes D_{p})^{\prime }\left( vec(G_{2})_{\perp }\otimes
vec(G_{1})_{\perp }\right) .%
\end{array}%
\end{equation*}%
The limiting distribution of $\sqrt{n}vec(\hat{\Lambda})$ under local to KPS
alternatives can be derived along the same lines as under KPS, which was
done in the proof of Theorem \ref{th: kps test}a, see (\ref{eq: lim
lambdahat}). It follows that $\sqrt{n}vec(\hat{\Lambda})\rightarrow
_{d}N\left( a_{0},V_{\Lambda }\right) $ and thus $KPST\rightarrow _{d}\chi
_{df}^{2}(\delta )$ as claimed.

{\small
\bibliographystyle{chicago}
\bibliography{KPS}
}

\newpage

\section{Supplementary Appendix: Detailed empirical results}

Tables \ref{tab:KPS app} and \ref{tab:clust} give detailed empirical results
in the applications considered, with non-clustered and clustered data,
respectively. The acronyms refer to the different papers listed in Table
\ref{TableKey papers}.%

\begin{table}[tbp] \centering
\begin{tabular}
[c]{|l|l|}\hline
\textbf{Acronym} & \textbf{Paper}\\
ACJR 11 & Acemoglu et al. (2011)\\\hline
AD 13 & Autor and Dorn (2013)\\\hline
ADG 13 & Autor et al. (2013)\\\hline
AGN 13 & Alesina et al. (2013)\\\hline
AJ 05 & Acemoglu and Johnson (2005)\\\hline
AJRY 08 & Acemoglu et al. (2008)\\\hline
DT 11 & Duranton and Turner (2011)\\\hline
HG 10 & Hansford and Gomez (2010)\\\hline
JPS 06 & Johnson et al. (2006)\\\hline
MSS 04 & Miguel et al. (2004)\\\hline
Nunn 08 & Nunn (2008)\\\hline
PSJM 13 & Parker et al. (2013)\\\hline
TCN 10 & Tanaka et al. (2010)\\\hline
V et al 12 & Voors et al. (2012)\\\hline
Yogo 04 & Yogo (2004)\\\hline
\end{tabular}
\caption{List of papers used in the empirical applications.}\label{TableKey papers}%
\end{table}%

\begin{landscape}
\begin{longtable}{l|l|p{6.25cm}|p{6.25cm}|c|c|r|r|r}
\caption{Applications of KPST.} \label{tab:KPS app} \\
\hline \multicolumn{1}{l|}{\textbf{Paper}} & \multicolumn{1}{l|}{\textbf{Specif.}} & \multicolumn{1}{c|}{$Y$} & \multicolumn{1}{c|}{$Z$} & \multicolumn{1}{c|}{$p$} & \multicolumn{1}{c|}{$k$} & \multicolumn{1}{r|}{$n$} & \multicolumn{1}{r|}{\textbf{KPST}} & \multicolumn{1}{c}{\textbf{p val}} \\ \hline
\endfirsthead
\multicolumn{9}{c}%
{{\bfseries \tablename\ \thetable{} -- continued from previous page}} \\
\hline \multicolumn{1}{l|}{\textbf{Paper}} & \multicolumn{1}{l|}{\textbf{Specif.}} & \multicolumn{1}{c|}{$Y$} & \multicolumn{1}{c|}{$Z$} & \multicolumn{1}{c|}{$p$} & \multicolumn{1}{c|}{$k$} & \multicolumn{1}{r|}{$n$} & \multicolumn{1}{r|}{\textbf{KPST}}&  \multicolumn{1}{c}{\textbf{p val}}\\ \hline
\endhead
\multicolumn{9}{r}{{Continued on next page}} \\
\endfoot
\endlastfoot
TCN 10  & T5.P2.C1                              & Value function curvature, Income                                                               & Rainfall, Head of Household Cannot Work (dummy variable)                                        & 2 & 2 & 181   & 4.944   & 0.293    \\
& T5.P2.C2                              & Value function curvature, Relative Income, Mean Income                                         & Rainfall, Head of Household Cannot Work (dummy variable)                                        & 3 & 2 & 181   & 14.859  & 0.137    \\ \hline
Nunn 08 & T4.C1                                 & Log income in 2000, Slave exports                                                              & Atlantic distance, Indian distance, Saharan distance, Red Sea distance                          & 2 & 4 & 52    & 32.307  & 0.02     \\
& T4.C2                                 & Log income in 2000, Slave exports, \newline (X: Colonization effect)                                           & Atlantic distance, Indian distance, Saharan distance, Red Sea distance                          & 2 & 4 & 52    & 30.922  & 0.029    \\
& T4.C3                                 & Log income in 2000, Slave exports, \newline (X: Col.~effect, geographical controls)                  & Atlantic distance, Indian distance, Saharan distance, Red Sea distance                          & 2 & 4 & 52    & 34.597  & 0.011    \\
& T4.C4                                 & Log income in 2000, Slave exports, \newline (X: Col.~effect, geographical controls)                                                              & Atlantic distance, Indian distance, Saharan distance, Red Sea distance                          & 2 & 4 & 42    & 28.263  & 0.058    \\ \hline
AJ 05   & T4.P1.C1                              & Log GDP per capita, legal formalism, constraint on executive                                   & English legal origin, settler mortality                                                         & 3 & 2 & 51    & 8.18    & 0.611    \\
& T4.P1.C2                              & Log GDP per capita, legal formalism, constraint on executive                                   & English legal origin, population density 1500                                                   & 3 & 2 & 60    & 25.969  & 0.004    \\
& T4.P1.C3                              & Log GDP per capita, constraint on executive, procedural complexity                             & English legal origin, settler mortality                                                         & 3 & 2 & 60    & 5.574   & 0.85     \\
& T4.P1.C4                              & Log GDP per capita, constraint on executive, number of procedures                              & English legal origin, settler mortality                                                         & 3 & 2 & 61    & 10.916  & 0.364    \\
& T4.P1.C5                              & Log GDP per capita, legal formalism, average protection against risk of expropriation          & English legal origin, settler mortality                                                         & 3 & 2 & 51    & 7.075   & 0.718    \\
& T4.P1.C6                              & Log GDP per capita, legal formalism, private property                                          & English legal origin, settler mortality                                                         & 3 & 2 & 52    & 8.646   & 0.566    \\
& T4.P2.C1                              & Investment-GDP ratio, legal formalism, constraint on executive                                 & English legal origin, settler mortality                                                         & 3 & 2 & 51    & 13.068  & 0.22     \\
& T4.P2.C2                              & Investment-GDP ratio, legal formalism, constraint on executive                                 & English legal origin, population density 1500                                                   & 3 & 2 & 60    & 36.298  & 0        \\
& T4.P2.C3                              & Investment-GDP ratio, constraint on executive, procedural complexity                           & English legal origin, settler mortality                                                         & 3 & 2 & 61    & 16.838  & 0.078    \\
& T4.P2.C4                              & Investment-GDP ratio, constraint on executive, number of procedures                            & English legal origin, settler mortality                                                         & 3 & 2 & 62    & 14.82   & 0.139    \\
& T4.P2.C5                              & Investment-GDP ratio, legal formalism, average protection against risk of expropriation        & English legal origin, settler mortality                                                         & 3 & 2 & 51    & 13.75   & 0.185    \\
& T4.P2.C6                              & Investment-GDP ratio, legal formalism, private property                                        & English legal origin, settler mortality                                                         & 3 & 2 & 52    & 8.582   & 0.572    \\
& T5.P1.C1                              & Private credit, legal formalism, constraint on executive                                       & English legal origin, settler mortality                                                         & 3 & 2 & 51    & 9.296   & 0.504    \\
& T5.P1.C2                              & Private credit, legal formalism, constraint on executive                                       & English legal origin, population density 1500                                                   & 3 & 2 & 60    & 31.406  & 0.001    \\
& T5.P1.C3                              & Private credit, constraint on executive, procedural complexity                                 & English legal origin, settler mortality                                                         & 3 & 2 & 60    & 13.721  & 0.186    \\
& T5.P1.C4                              & Private credit, constraint on executive, number of procedures                                  & English legal origin, settler mortality                                                         & 3 & 2 & 61    & 11.605  & 0.312    \\
& T5.P1.C5                              & Private credit, legal formalism, average protection against risk of expropriation              & English legal origin, settler mortality                                                         & 3 & 2 & 51    & 12.206  & 0.272    \\
& T5.P1.C6                              & Private credit, legal formalism, private property                                              & English legal origin, settler mortality                                                         & 3 & 2 & 52    & 19.304  & 0.037    \\
& T5.P2.C1                              & Stock market capitalization, legal formalism, constraint on executive                          & English legal origin, settler mortality                                                         & 3 & 2 & 50    & 19.178  & 0.038    \\
& T5.P2.C2                              & Stock market capitalization, legal formalism, constraint on executive                          & English legal origin, population density 1500                                                   & 3 & 2 & 59    & 19.405  & 0.035    \\
& T5.P2.C3                              & Stock market capitalization, constraint on executive, procedural complexity                    & English legal origin, settler mortality                                                         & 3 & 2 & 59    & 34.566  & 0        \\
& T5.P2.C4                              & Stock market capitalization, constraint on executive, number of procedures                     & English legal origin, settler mortality                                                         & 3 & 2 & 59    & 28.06   & 0.002    \\
& T5.P2.C5                              & Stock market capitalization, legal formalism, average protection against risk of expropriation & English legal origin, settler mortality                                                         & 3 & 2 & 50    & 35.531  & 0        \\
& T5.P2.C6                              & Stock market capitalization, legal formalism, private property                                 & English legal origin, settler mortality                                                         & 3 & 2 & 51    & 21.344  & 0.019    \\ \hline
HG 10   & T1.C2                                 & Democratic vote share, turnout, turnout * partisan composition, turnout * Republican incumbent & Rainfall, rainfall*partisan composition, raninfall*Republican incumbent                         & 4 & 3 & 27401 & 507.919 & 0        \\
& T1.C3                                 & Democratic vote share, turnout, turnout * partisan composition, turnout * Republican incumbent & Rainfall, rainfall*partisan composition, raninfall*Republican incumbent                         & 4 & 3 & 27401 & 457.962 & 0        \\ \hline
AGN 13  & T8.P3.C1                              & Female LF participation, Traditional plough use                                                & Plough-neg. environment, Plough-pos. environment                                                & 2 & 2 & 160   & 6.191   & 0.185    \\
& T8.P3.C2                              & Female LF participation, Traditional plough use                                                & Plough-neg. environment, Plough-pos. environment                                                & 2 & 2 & 160   & 4.939   & 0.294    \\
& T8.P3.C3                              & Share firm ownership female, Traditional plough use                                            & Plough-neg. environment, Plough-pos. environment                                                & 2 & 2 & 122   & 3.586   & 0.465    \\
& T8.P3.C4                              & Share firm ownership female, Traditional plough use                                            & Plough-neg. environment, Plough-pos. environment                                                & 2 & 2 & 122   & 6.785   & 0.148    \\
& T8.P3.C5                              & Share political position female, Traditional plough use                                        & Plough-neg. environment, Plough-pos. environment                                                & 2 & 2 & 140   & 9.29    & 0.054    \\
& T8.P3.C6                              & Share political position female, Traditional plough use                                        & Plough-neg. environment, Plough-pos. environment                                                & 2 & 2 & 140   & 10.982  & 0.027    \\ \hline
Yogo 04 & AUL                                   & cons growth, risk-free rtn                                                                     & \multirow{22}{*}{\parbox{6.25cm}{\hfill Twice lagged nominal interest rate, inflation,consumption growth, and log dividend-price ratio}} & 2 & 4 & 114   & 16.628  & 0.549    \\
&                                       & cons growth, stk mkt rtn                                                                       &                                                                                                & 2 & 4 & 114   & 22.879  & 0.195    \\
& CAN                                   & cons growth, risk-free rtn                                                                     &  & 2 & 4 & 115   & 24.078  & 0.152    \\
&                                       & cons growth, stk mkt rtn                                                                       &                                                                                                & 2 & 4 & 115   & 32.528  & 0.019    \\
& FRA                                   & cons growth, risk-free rtn                                                                     & & 2 & 4 & 113   & 28.015  & 0.062    \\
&                                       & cons growth, stk mkt rtn                                                                       &                                                                                                & 2 & 4 & 113   & 25.608  & 0.109    \\
& GER                                   & cons growth, risk-free rtn                                                                     & & 2 & 4 & 79    & 25.452  & 0.113    \\
&                                       & cons growth, stk mkt rtn                                                                       &                                                                                               & 2 & 4 & 79    & 31.24   & 0.027    \\
& ITA                                   & cons growth, risk-free rtn                                                                     & & 2 & 4 & 106   & 18.266  & 0.438    \\
&                                       & cons growth, stk mkt rtn                                                                       &                                                                                                & 2 & 4 & 106   & 25.889  & 0.102    \\
& JAP                                   & cons growth, risk-free rtn                                                                     &  & 2 & 4 & 114   & 22.835  & 0.197    \\
&                                       & cons growth, stk mkt rtn                                                                       &                                                                                                & 2 & 4 & 114   & 16.132  & 0.583    \\
& NTH                                   & cons growth, risk-free rtn                                                                     &  & 2 & 4 & 86    & 20.969  & 0.281    \\
&                                       & cons growth, stk mkt rtn                                                                       &                                                                                                & 2 & 4 & 86    & 21.762  & 0.243    \\
& SWD                                   & cons growth, risk-free rtn                                                                     &  & 2 & 4 & 116   & 18.967  & 0.394    \\
&                                       & cons growth, stk mkt rtn                                                                       &                                                                                                & 2 & 4 & 116   & 29.714  & 0.04     \\
& SWT                                   & cons growth, risk-free rtn                                                                     &  & 2 & 4 & 91    & 14.889  & 0.67     \\
&                                       & cons growth, stk mkt rtn                                                                       &                                                                                                & 2 & 4 & 91    & 43.768  & 0.001    \\
& UK                                    & cons growth, risk-free rtn                                                                     &  & 2 & 4 & 115   & 30.148  & 0.036    \\
&                                       & cons growth, stk mkt rtn                                                                       &                                                                                                & 2 & 4 & 115   & 19.94   & 0.336    \\
& US                                    & cons growth, risk-free rtn                                                                     &  & 2 & 4 & 114   & 18.478  & 0.425    \\
&                                       & cons growth, stk mkt rtn                                                                       &                                                                                                & 2 & 4 & 114   & 22.373  & 0.216
\\ \hline \hline
\multicolumn{9}{@{}l}{\parbox[t]{\linewidth}{Specification T: table; P: panel; C: column.}}
\end{longtable}
\end{landscape}

\begin{landscape}
\begin{longtable}{l|p{4cm}|p{4cm}|c|c|r|r|r|r|r|r}
\caption{Applications of cluster KPST.} \label{tab:clust} \\
\hline \multicolumn{1}{l|}{\textbf{Specif.}} & \multicolumn{1}{c|}{$Y$} & \multicolumn{1}{c|}{$Z$} & \multicolumn{1}{c|}{$p$} & \multicolumn{1}{c|}{$k$} & \multicolumn{1}{r|}{$n$} & \multicolumn{1}{r|}{\textbf{KPST}} & \multicolumn{1}{c|}{\textbf{p val}}
& \multicolumn{1}{r|}{$n_c$} & \multicolumn{1}{r|}{\textbf{KPST$_c$}} & \multicolumn{1}{c}{\textbf{p val}}\\ \hline
\endfirsthead
\multicolumn{11}{c}%
{{\bfseries \tablename\ \thetable{} -- continued from previous page}} \\
\hline \multicolumn{1}{l|}{\textbf{Specif.}} & \multicolumn{1}{c|}{$Y$} & \multicolumn{1}{c|}{$Z$} & \multicolumn{1}{c|}{$p$} & \multicolumn{1}{c|}{$k$} & \multicolumn{1}{r|}{$n$} & \multicolumn{1}{r|}{\textbf{KPST}}&  \multicolumn{1}{c|}{\textbf{p val}}
& \multicolumn{1}{r|}{$n_c$} & \multicolumn{1}{r|}{\textbf{KPST$_c$}} & \multicolumn{1}{c}{\textbf{p val}}\\ \hline
\endhead
\multicolumn{11}{r}{{Continued on next page}} \\
\endfoot
\endlastfoot
\textit{AJRY 08}   & & & & & & & & & & \\                                              T5.C5     & Freedom House measure of democracy, Log GDP per capita in t-1                    & Savings rate in t-2, Democracy in t-1                                                                    & 2 & 2 & 891   & 23.86   & 0.000 & 134         & 20.204     & 0.001 \\
T5.C7     & Freedom House measure of democracy, Log GDP per capita in t-1                    & Savings rate in t-2, labour share of income                                                              & 2 & 2 & 471   & 21.85   & 0.000 & 98          & 6.037      & 0.303 \\
T5.C8.S1  & Freedom House measure of democracy, Log GDP per capita in t-1                    & Savings rate in t-2, democracy in t-1  \newline X: democracy in t-2, t-3                                                                  & 2 & 2 & 471   & 17.21   & 0.002 & 98          & 13.500     & 0.019 \\
T5.C8.S2  & Freedom House measure of democracy, Log GDP per capita in t-1                    & Savings rate in t-2, democracy in t-2 \newline X: democracy in t-1, t-3         & 2 & 2 & 471   & 14.96   & 0.005 & 98          & 11.738     & 0.039 \\
T5.C8.S3  & Freedom House measure of democracy, Log GDP per capita in t-1                    & Savings rate in t-2, democracy in t-3 \newline X: democracy in t-1, t-2          & 2 & 2 & 471   & 6.83    & 0.145 & 98          & 4.388      & 0.495 \\
T5.C9     & Freedom House measure of democracy, Log GDP per capita in t-1                    & Savings rate in t-2, t-3                                                                & 2 & 2 & 796   & 12.14   & 0.016 & 125         & 18.960     & 0.002 \\
T6.C5     & Freedom House measure of democracy, Log GDP per capita in t-1                    & Trade-weighted (tw) log GDP in t-1, democracy in t-1                                                          & 2 & 2 & 796   & 4.71    & 0.318 & 125         & 12.970     & 0.024 \\
T6.C7     & Freedom House measure of democracy, Log GDP per capita in t-1                    & tw log GDP in t-1, tw democracy in t-1                                           & 2 & 2 & 796   & 10.18   & 0.037 & 125         & 11.808     & 0.038 \\
T6.C9     & Freedom House measure of democracy, Log GDP per capita in t-1                    & tw log GDP in t-1, t-2                                             & 2 & 2 & 796   & 12.83   & 0.012 & 125         & 12.121     & 0.033 \\ \hline
\textit{JPS 06}           & & & & & & & & & & \\
T4.P1.C5  & Dollar change in strict non-durables, rebate in t+1, t                & I (rebate t+1), I (rebate t)                                                                             & 3 & 2 & 12730 & 1062.30 & 0.000 & 6253        & 386.388    & 0.000 \\
T4.P1.C6  & Dollar change in nondurable goods, rebate in t+1, t                    & I (rebate t+1), I (rebate t)                                                                             & 3 & 2 & 12730 & 1062.05 & 0.000 & 6253        & 377.982    & 0.000 \\
T4.P2.C5  & Dollar change in strict non-durables, rebate in t+1, t, t-1 & I (rebate t+1), I (rebate t), I (rebate t-1)                                                             & 4 & 3 & 15022 & 1635.13 & 0.000 & 6295        & 1128.150   & 0.000 \\
T4.P2.C6  & Dollar change in nondurable goods, rebate in t+1, t, t-1     & I (rebate t+1), I (rebate t), I (rebate t-1)                                                             & 4 & 3 & 15022 & 1666.13 & 0.000 & 6295        & 1140.060   & 0.000 \\ \hline
\textit{PSJM 13}     & & & & & & & & & & \\
T4.P1.C5  & Nondurable spending, ESP by check, ESP by electronic transfer                    & I (ESP by check), I (ESP by electronic transfer)                                                         & 3 & 2 & 17281 & 457.30  & 0.000 & 8038        & 314.724    & 0.000 \\
T4.P1.C6  & All spending, ESP by check, ESP by electronic transfer                           & I (ESP by check), I (ESP by electronic transfer)                                                         & 3 & 2 & 17281 & 458.98  & 0.000 & 8038        & 288.445    & 0.000 \\ \hline
\textit{ADH 13}    & & & & & & & & & & \\
T10.P3.C1 & $\Delta$ mfg empl, $\Delta$ trade US-China net input pw (nipw)                                      & $\Delta$ trade other-China, $\Delta$ net input other-China                                                              & 2 & 2 & 1444  & 20.00   & 0.001 & 48          & 27.125     & 0.000 \\
T10.P3.C2 & $\Delta$ nonmfg empl, $\Delta$ trade US-China nipw                                     & $\Delta$ trade other-China, $\Delta$ net input other-China                                                              & 2 & 2 & 1444  & 22.95   & 0.000 & 48          & 24.312     & 0.000 \\
T10.P3.C3 & $\Delta$ mfg log wage,  $\Delta$ trade US-China nipw                                & $\Delta$ trade other-China, $\Delta$ net input other-China                                                              & 2 & 2 & 1444  & 31.27   & 0.000 & 48          & 19.553     & 0.002 \\
T10.P3.C4 & $\Delta$ mfg log wage, $\Delta$ trade US-China nipw                                 & $\Delta$ trade other-China, $\Delta$ net input other-China                                                              & 2 & 2 & 1444  & 19.40   & 0.001 & 48          & 22.269     & 0.000 \\
T10.P3.C5 & $\Delta$ nonmfg log wage, $\Delta$ trade US-China nipw                                & $\Delta$ trade other-China, $\Delta$ net input other-China                                                              & 2 & 2 & 1444  & 100.88  & 0.000 & 48          & 10.514     & 0.062 \\
T10.P3.C6 & $\Delta$ log transfers, $\Delta$ trade US-China nipw                            & $\Delta$ trade other-China, $\Delta$ net input other-China                                                              & 2 & 2 & 1444  & 21.82   & 0.000 & 48          & 16.716     & 0.005 \\
T10.P4.C1 & $\Delta$ mfg empl, $\Delta$ US-China net imports pw                                               & $\Delta$ trade other-China, $\Delta$ net exports other-China                                                                & 2 & 2 & 1444  & 16.52   & 0.002 & 48          & 10.187     & 0.070 \\
T10.P4.C2 & $\Delta$ nonmfg empl, $\Delta$ US-China net imp pw                                              & $\Delta$ trade other-China, $\Delta$ net exports other-China                                                                & 2 & 2 & 1444  & 18.44   & 0.001 & 48          & 10.014     & 0.075 \\
T10.P4.C3 & $\Delta$ mfg log wage, $\Delta$ US-China net imp pw                                          & $\Delta$ trade other-China, $\Delta$ net exports other-China                                                                & 2 & 2 & 1444  & 37.44   & 0.000 & 48          & 13.290     & 0.021 \\
T10.P4.C4 & $\Delta$ nonmfg log wage, $\Delta$ US-China net imp pw                                         & $\Delta$ trade other-China, $\Delta$ net exports other-China                                                                & 2 & 2 & 1444  & 11.21   & 0.024 & 48          & 11.072     & 0.050 \\
T10.P4.C5 & $\Delta$ log transfers, $\Delta$ US-China net imp pw                                     & $\Delta$ trade other-China, $\Delta$ net exports other-China                                                                & 2 & 2 & 1444  & 41.77   & 0.000 & 48          & 9.138      & 0.104 \\
T10.P4.C6 & $\Delta$ avg household inc, $\Delta$ US-China net imp pw                                 & $\Delta$ trade other-China, $\Delta$ net exports other-China                                                                & 2 & 2 & 1444  & 18.08   & 0.001 & 48          & 13.395     & 0.020 \\
T10.P6.C1 & $\Delta$ mfg empl, $\Delta$ net trade factor (ntf) US-China                                     & $\Delta$ ntf other-China, $\Delta$ net export factor (nef) other-China                                                & 2 & 2 & 1444  & 16.57   & 0.002 & 48          & 14.213     & 0.014 \\
T10.P6.C2 & $\Delta$ nonmfg empl, $\Delta$ ntf US-China                                    & $\Delta$ ntf other-China, $\Delta$ nef other-China                                                & 2 & 2 & 1444  & 43.88   & 0.000 & 48          & 15.611     & 0.008 \\
T10.P6.C3 & $\Delta$ mfg log wage, $\Delta$ ntf US-China                                & $\Delta$ ntf other-China, $\Delta$ nef other-China                                                & 2 & 2 & 1444  & 24.54   & 0.000 & 48          & 12.087     & 0.034 \\
T10.P6.C4 & $\Delta$ nonmfg log wage, $\Delta$ ntf US-China                               & $\Delta$ ntf other-China, $\Delta$ nef other-China                                                & 2 & 2 & 1444  & 10.81   & 0.029 & 48          & 18.869     & 0.002 \\
T10.P6.C5 & $\Delta$ log transfers, $\Delta$ ntf US-China                           & $\Delta$ ntf other-China, $\Delta$ nef other-China                                                & 2 & 2 & 1444  & 15.56   & 0.004 & 48          & 16.692     & 0.005 \\
T10.P6.C6 & $\Delta$ avg household inc, $\Delta$ ntf US-China                       & $\Delta$ ntf other-China, $\Delta$ nef other-China                                                & 2 & 2 & 1444  & 16.46   & 0.002 & 48          & 29.073     & 0.000 \\ \hline
\textit{AD 13}    & & & & & & & & & & \\
T5.P2.C1  & Growth of service employment, Share of routine employment (t-1)                  & 1950 employment share by commuting zone excluding those corresponding to observation:  1980; 1990;2000.` & 2 & 3 & 2166  & 141.50  & 0.000 & 48          & 57.891     & 0.000 \\
T5.P2.C2  & Growth of service employment, Share of routine employment (t-1)                  & 1951 employment share by commuting zone excluding those corresponding to observation:  1980; 1990;2000.` & 2 & 3 & 2166  & 122.97  & 0.000 & 48          & 41.735     & 0.000 \\
T5.P2.C3  & Growth of service employment, Share of routine employment (t-1)                  & 1952 employment share by commuting zone excluding those corresponding to observation:  1980; 1990;2000.` & 2 & 3 & 2166  & 140.57  & 0.000 & 48          & 52.603     & 0.000 \\
T5.P2.C4  & Growth of service employment, Share of routine employment (t-1)                  & 1953 employment share by commuting zone excluding those corresponding to observation:  1980; 1990;2000.` & 2 & 3 & 2166  & 118.33  & 0.000 & 48          & 47.893     & 0.000 \\
T5.P2.C5  & Growth of service employment, Share of routine employment (t-1)                  & 1954 employment share by commuting zone excluding those corresponding to observation:  1980; 1990;2000.` & 2 & 3 & 2166  & 106.08  & 0.000 & 48          & 47.248     & 0.000 \\
T5.P2.C6  & Growth of service employment, Share of routine employment (t-1)                  & 1955 employment share by commuting zone excluding those corresponding to observation:  1980; 1990;2000.` & 2 & 3 & 2166  & 146.81  & 0.000 & 48          & 43.400     & 0.000 \\
T5.P2.C7  & Growth of service employment, Share of routine employment (t-1)                  & 1956 employment share by commuting zone excluding those corresponding to observation:  1980; 1990;2000.` & 2 & 3 & 2166  & 101.50  & 0.000 & 48          & 32.647     & 0.002 \\ \hline
\textit{ACJR 11}    & & & & & & & & & & \\
T6.P.3.C2 & Urbanization in Germany, reform index                                            & French presence in 1850, 1875 and 1900                                                                   & 2 & 3 & 74    & 12.74   & 0.239 & 13          & 112.422    & 0.000 \\ \hline
\textit{MSS 04}   & & & & & & & & & & \\
T4.C5     & Civil conflict \textgreater{}25 deaths, Economic growth rate (t)                 & Current and lagged rainfall                                                                              & 3 & 2 & 743   & 10.30   & 0.414 & 41          & 31.022     & 0.003 \\
T4.C6     & Civil conflict \textgreater{}25 deaths, Economic growth rate (t)                 & Current and lagged rainfall                                                                              & 3 & 2 & 743   & 5.18    & 0.879 & 41          & 37.682     & 0.000 \\
T4.C7     & Civil conflict \textgreater{}1000 deaths, Economic growth rate (t)               & Current and lagged rainfall                                                                              & 3 & 2 & 743   & 5.35    & 0.867 & 41          & 42.052     & 0.000 \\ \hline
\textit{V etal 12}   & & & & & & & & & & \\
T3.C6     & Degree of altruism scale, Percentage dead in attacks                             & Distance to Bujumbura (log), Altitude (log)                                                              & 2 & 2 & 278   & 9.45    & 0.051 & 35          & 8.054      & 0.153 \\
T4.C6     & Risk preference, Percentage dead in attacks                                      & Distance to Bujumbura (log), Altitude (log)                                                              & 2 & 2 & 213   & 12.28   & 0.015 & 35          & 1.349      & 0.930 \\
T5.C6     & Discount rate, Percentage dead in attacks                                        & Distance to Bujumbura (log), Altitude (log)                                                              & 2 & 2 & 266   & 6.69    & 0.153 & 35          & 5.622      & 0.345 \\
T6.C4     & Degree of altruism scale, Percentage dead in attacks                             & Distance to Bujumbura (log), Altitude (log)                                                              & 2 & 2 & 212   & 6.36    & 0.174 & 35          & 6.931      & 0.226 \\
T6.C5     & Risk preference, Percentage dead in attacks                                      & Distance to Bujumbura (log), Altitude (log)                                                              & 2 & 2 & 158   & 8.88   & 0.064 & 35          & 6.860      & 0.231 \\
T6.C6     & Discount rate, Percentage dead in attacks                                        & Distance to Bujumbura (log), Altitude (log)                                                              & 2 & 2 & 205   & 2.34    & 0.673 & 35          & 4.451      & 0.487
\\ \hline \hline
\multicolumn{11}{@{}l}{\parbox[t]{\linewidth}{Specification: T: table; P: panel; C: column. $n_c$: number of clusters, KPST$_c$: cluster KPST statistic.}}
\end{longtable}
\end{landscape}

\end{document}